\documentclass[a4paper,11pt,oneside]{article}
\usepackage[a4paper,left=2cm,right=2cm,top=2.5cm,bottom=3cm]{geometry}
\linespread{1.5}
\usepackage{amssymb, mathrsfs, amsthm}
\usepackage[centertags]{amsmath}
\usepackage{graphicx}
\usepackage{subcaption}
\usepackage{multicol}
\usepackage{xcolor}
\usepackage{csquotes}
\usepackage{multirow}
\usepackage{mathtools}
\usepackage{bbm}
\usepackage{bm}
\usepackage{mwe}
\usepackage[noblocks,affil-sl]{authblk}             
\usepackage{caption}
\captionsetup{font=footnotesize}
\usepackage{subcaption}
\usepackage{hhline}
\usepackage{epstopdf}
\usepackage{adjustbox}
\usepackage{dsfont}
\usepackage{colortbl}
\usepackage[authoryear]{natbib}
\usepackage{hyperref}
\usepackage{chngcntr}
\usepackage{geometry}
\usepackage{pdflscape}
\usepackage{cases}
\usepackage{cancel}
\usepackage{float}
\usepackage{makecell}

\counterwithin{figure}{section}
\counterwithin{table}{section}

\theoremstyle{plain}
\newtheorem{thm}{Theorem}[section]
\newtheorem{lem}[thm]{Lemma}
\newtheorem{prop}[thm]{Proposition}
\newtheorem{cor}[thm]{Corollary}
\newtheorem{defn}[thm]{Definition}
\newtheorem{ass}{Assumption}[section]
\newtheorem{remark}[thm]{Remark}


\newcommand\de{\mathrm{d}}

\parindent 0pt
\parskip 0pt
\numberwithin{equation}{section}

\mathtoolsset{showonlyrefs}
%
%
%
%

\makeatother
\begin{document}
    \author[2]{Katia Colaneri\thanks{Corresponding author: katia.colaneri@uniroma2.it}}
    \author[2]{Daniele Mancinelli\thanks{
    daniele.mancinelli@uniroma2.it}}
    \author[1]{Immacolata Oliva\thanks{immacolata.oliva@uniroma1.it}}
    \affil[1]{Department of Methods and Models for Economics, Territory and Finance, Sapienza University of Rome.}
    \affil[2]{Department of Economics and Finance, University of Rome Tor Vergata.}
    \title{On the optimal design of a new class of proportional portfolio insurance strategies in a jump-diffusion framework}
    \date{\today}
    \maketitle
    \begin{abstract}
    In this paper, we investigate an optimal investment problem associated with proportional portfolio insurance (PPI) strategies in the presence of jumps in the underlying dynamics. PPI strategies enable investors to mitigate downside risk while still retaining the potential for upside gains. This is achieved by maintaining an exposure to risky assets proportional to the difference between the portfolio value and the present value of the guaranteed amount. While PPI strategies are known to be free of downside risk in diffusion modeling frameworks with continuous trading, see e.g., \cite{cont2009constant}, real market applications exhibit a significant non-negligible risk, known as gap risk, which increases with the multiplier value.
    The goal of this paper is to determine the optimal PPI strategy in a setting where gap risk may occur, due to downward jumps in the asset price dynamics. We consider a loss-averse agent who aims at maximizing the expected utility of the terminal wealth exceeding a minimum guarantee. Technically, we model agent's preferences with an S-shaped utility functions to accommodate the possibility that gap risk occurs, and address the optimization problem via a generalization of the martingale approach that turns to be valid under market incompleteness in a jump-diffusion framework. 
    \end{abstract}
    \textbf{Keywords:} Portfolio optimization, Portfolio insurance, Gap-risk, Jump-diffusion, Martingale approach. \\
    \textbf{JEL classification:} C61, G11, G22.\\
    \textbf{AMS classification:} 49L12, 60J76, 91B16, 91G20.
    \section{Introduction}\label{sect:Intro}
    Portfolio insurance (PI) strategies were devised by \cite{rubinstein1976evolution} and \cite{brennan1976pricing} after the stock market collapse in $1973$-$1974$ that caused the withdrawal of pension funds, and came back in popularity after the great financial crisis of $2008$. Nowadays, they represent a linchpin of the asset management industry for institutional such as insurance companies, mutual funds or pension funds, and retail investors. PI strategies aim to secure a predetermined minimum level of wealth over a given time horizon and simultaneously guarantee equity market participation in the event of market upturns, see, e.g., \cite{grossman1989portfolio} and \cite{basak2002comparative} for further details. For these reasons PI strategies are frequently used by pension funds’ managers to protect the contributions paid by workers during the accumulation phase of, e.g., defined contribution pension funds.
    Depending on the financial instruments that compose these portfolios and their rebalancing rules, PI strategies can be classified into \textit{option-based portfolio insurance} (OBPI) and \textit{constant proportion portfolio insurance} (CPPI) strategies. This paper focuses on a generalization of the CPPI strategy, called \textit{proportional portfolio insurance strategy}. 
    A CPPI strategy allocates investor's wealth in a diversified risky index and a reserve asset over time according to the following pre-specified rule. First, a \textit{floor} is defined as the present value of the guaranteed amount at maturity. The \textit{exposure} to the risky asset at each time is proportional to the difference between the current portfolio value and the floor as long as the current portfolio value is greater than the floor, and the proportionality factor is called \textit{multiplier} (see, e.g. \cite{black1989constant}). Since CPPI strategies are assumed to be self-financing, the remaining wealth is invested in the reserve asset. While the CPPI strategy uses a constant multiplier throughout the investment horizon, the \textit{proportional portfolio insurance} (PPI) strategy allows for a time-varying multiplier to better adapt the strategy's exposure to market fluctuations.\\
    In ideal markets without frictions, where portfolio rebalancing is  continuous, PI strategies are characterized as maximisers of expected utility of \textit{hyperbolic absolute risk aversion} (HARA) type, subject to the constraint that the strategy value at least replicates  the guaranteed amount at the end of the investment horizon. This problem has been explored by \cite{KINGSTON1989345, BLACK1992403, zieling2014performance} among others, under various modeling setting. We also mention \citet{bernard2016dynamic} who study optimal PPI strategies under standard expected utility preferences with fixed horizon and  forward utilities, in a diffusion market model.  However, with the introduction of frictions, either through jumps in the dynamics of the underlying as in \cite{cont2009constant}, or through trading restrictions like in \cite{balder2009effectiveness}, PI strategies fail to meet one of their objectives, potentially leading to the so-called \textit{gap risk}. In such instances, the value of the strategy drops below the floor, and therefore, according to the PI allocation rule, from that time until maturity the wealth is fully directed into the reserve asset. This results in a loss, namely the \textit{gap}, given by the difference between the final portfolio wealth and the guaranteed amount.
    One of the drawbacks of this mechanism is that the portfolio becomes fully monetized and thus incapable of capitalising on equity market participation in the event of subsequent market rises. Taking gap risk into account is of fundamental importance, for instance to avoid losses for financial institutions that issue these kinds of strategies, or to protect the capital in pension fund management. For instance, in case of defined contribution (DC) pension funds, the gap risk could drop pension funds' wealth below the desired level, coinciding with the present value of pension obligations, making them underfunded, see, e.g. \cite{temocin2018constant} for further details.  A discussion on the performance of standard OBPI and CPPI strategies within the framework of DC pension funds management, in presence of downside risk, is given in \citet{xu2020portfolio}. Hence, several gap risk hedging methodologies have been proposed in the literature, which spam from the use of a new class of exotic options, called \textit{gap options}, proposed by \cite{tankov2010pricing}, to the introduction of a \textit{conditional multiplier}  (see \cite{hamidi2009caviar, HAMIDI20141, ameur2014portfolio, ameur2018risk, dichtl2017bootstrap}), that incorporates expectations of future market drops.\\ 
    Due to their popularity, PI strategies are also analysed using elements of behavioural finance. In particular, \cite{DICHTL20111683} carried out an extensive empirical analysis and showed that PI strategies can also be explained using the \textit{cumulative prospect theory} (CPT) developed by \cite{tversky1992advances}. This study highlights PI investors are loss-averse and endowed with an utility function, that is concave for gains and convex for losses. Put in other words, portfolio insurers evaluate the investment outcome based on deviation from a reference point, for instance the guaranteed amount, and assess potential gains and losses asymmetrically. More precisely, they evaluate the marginal utility of potential losses higher than that of potential gains.
    
    In the CPT framework, \cite{ELZ2020} study the optimal design of a variant of the CPPI strategy, called \textit{generalized behavioural portfolio insurance} (GBPI) strategy which is well defined even in case of gap risk. 
    After such an event, the GBPI maintains positive exposure to the risky assets aiming to mitigate losses and reach the funded area. The authors show optimality of the proposed GBPI strategy in a CPT framework, assuming discrete-time rebalancing, diffusive dynamics for the risky asset, and constant risk-free rate.\\
    Our paper extends the idea of \cite{ELZ2020} to the PPI strategy. We propose a portfolio allocation rule that maintains equity exposure even when the cushion becomes negative, hence under gap risk. To achieve this, we consider a PI insurer who aims to maximize the expected utility of the cushion at maturity. We model the insurer's preferences using an S-shaped utility function, which reflects varying levels of risk aversion for gains and losses. We consider a setting that includes frictions in financial insurance market, via downward jumps in the dynamics of the underlying risky asset. We solve the optimization problem using a procedure based on the following two steps. First, we show the equivalence between the optimization problem with an S-shape utility function with that of its concave envelope. Then we apply a modification of the  \emph{martingale method}, based on duality theory, where the original dynamic problem is transformed into an equivalent static problem. 
    Here the main difficulty arises due to market incompleteness, which does not allow for a univocal characterization of the state price density. Hence, we rely on a technique developed by \cite{michelbrink2012martingale} which combines the martingale method and \emph{worst case} probability, and allows to determine the PPI strategy, as well as the state price density, by solving a system of non-linear equations. We discuss a case where the optimized multiplier is unique and given in quasi-closed form. We also perform an accurate numerical analysis, using several specifications of jump size distributions, providing interesting results. In particular, we show that the optimized multiplier of the proposed PPI strategy prevents gap risk during the entire investment horizon.\\
    The remainder of the paper is organized as follows. In Section \ref{sect:PPI}, we briefly recap the state of the art on optimized PPI strategies. In Section \ref{sect:Financial_mkt_model}, we describe the market model with jumps. The optimization problem is introduced in Section \ref{sec:optimization_pb}, also containing a discussion about the concavification method. Section \ref{sec:martingale_approach} determines the optimal PPI strategy and discusses a case where a (semi-)closed form of the optimal multiplier can be established. We perform a numerical analysis in Section \ref{sect:numeric}, and Section \ref{sect:conclusion} poses the conclusions. Technical details are provided in the appendices.
    
    \section{Proportional Portfolio Insurance strategies and gap risk}\label{sect:PPI} 
    PPI strategies are designed to capitalize on the returns of a risky asset $S$, such as a market index, through a dynamic trading strategy while ensuring a guaranteed minimum amount $G$ at a maturity $T$. To accomplish this objective, the fund manager divides her position between a risky asset $S=\left(S_t\right)_{t\in[0,T]}$ and a reserve asset, typically a zero-coupon bond or a money market account with price process $P=\left(P_t\right)_{t\in[0,T]}$, 
    according to the following procedure.  Initially, the manager defines a floor process $F=(F_t)_{t\in[0,T]}$ and the cushion process $C=(C_t)_{t\in[0,T]}$. The floor represents the protected capital amount at each point in time $t\in [0,T]$, and is given by the expected present value of the guaranteed amount at maturity. The cushion is the difference between the  portfolio value $V$ and the floor $F$, i.e. at every time $t \in [0,T]$, it satisfies $C_{t}=V_{t}-F_{t}$.
    The exposure to the risky asset is consistently linked to the cushion. Indeed, at any given time $t \in [0,T]$, if $V_{t}>F_{t}$, the manager invests $m_{t}\cdot C_{t}$ in the risky asset, with the multiplier $m=(m_{t})_{t\in[0,T]}$ determining the proportionality factor. However, if $V_{t}\le F_{t}$ at some $t< T$, then a complete shift of the portfolio value into the reserve asset until maturity is prompted. In summary, the exposure to the risky asset in a PPI strategy is expressed as $m_{t}\cdot\left(C_{t}\right)^{+}$ for all $t \in [0,T]$. In a standard setup, PPI insurers aim to determine the multiplier to maximize the expected hyperbolic absolute risk aversion (HARA)\footnote{A HARA utility function is given by $U^{HARA}(x)=\frac{(x-G)^{1-\gamma}}{1-\gamma}$,   
    for $\gamma>0$, $\gamma\neq 1$, and $x>G\ge 0$.}
    utility of terminal portfolio value under the constraint that the terminal portfolio value exceeds or equals the guaranteed amount $G$ (see e.g. \cite{KINGSTON1989345}). 
    Equivalently, the problem can be rephrased in terms of the cushion: the optimal multiplier maximizes the expected constant relative risk aversion (CRRA) utility function of the form $U^{CRRA}(x)=\frac{x^{1-\gamma}}{1-\gamma}$, of the terminal cushion, under the constraint that the terminal cushion is non-negative. This optimization problem has been solved by \cite{zieling2014performance} taking the assumption that the market is frictionless and the dynamics of the underlying risky asset and the reserve asset follow geometric Brownian motions. Under this setup, however, the problem simplifies because the constraint for the terminal cushion is not binding, and the optimization can be solved via standard arguments. The optimal multiplier turns out to be non-constant and presents two components: the myopic demand expressed in terms of the Merton solution and a strongly model-dependent intertemporal hedging demand. Under the special choice of constant drift and volatility of the underlying risky asset and constant risk-free interest rate, optimal multiplier becomes constant and given by the instantaneous Sharpe ratio, properly scaled with the risk aversion parameter. In this case, the PPI strategy collapses into the Constant Proportion Portfolio Insurance (CPPI) strategy. A model of this type, however, fails to capture frictions in market dynamics, such as liquidity shortages or unexpected shocks that induce sudden price changes. Accounting for these features would necessitate considering market dynamics that eventually exhibit unpredictable downward jumps.  \cite{cont2009constant} show that, when jumps are included, the value of a PPI strategy is affected by gap risk with positive probability, under certain parameter configurations. Technically, it means that there exists a critical time $\tau:=\inf\left\lbrace t\ge 0:V_{t}\leq F_{t}\right\rbrace \wedge T$, so that under a standard PPI mechanism, from $\tau$ on, the residual wealth is entirely invested into the reserve asset, since $(C_t)^+=0$ for all $t \in [\tau \wedge T, T]$. To illustrate the impact of gap risk on portfolio insurers, we have conducted a historical simulation of the CPPI strategy applied to the Standard and Poor 500 index, spanning from 2006 to 2013, utilizing a constant multiplier of 10. The results, provided in Figure \ref{fig:Historical_simulation}, reveal that the CPPI strategy fails to adequately mitigate the risk posed by the sudden market collapse in 2008, leading to a drop in its value, which falls below the predetermined floor. Consequently, the remaining wealth is entirely shifted to the risk-free asset, making the strategy incapable of securing the guaranteed amount at maturity. Most importantly, although the Standard and Poor 500 index nicely recovers from early 2009 onwards, the presence of gap risk prevents the CPPI strategy from engaging in any equity market participation post-2008.
    \begin{figure}[th!]
    \centering
    \includegraphics[width=0.75\textwidth]{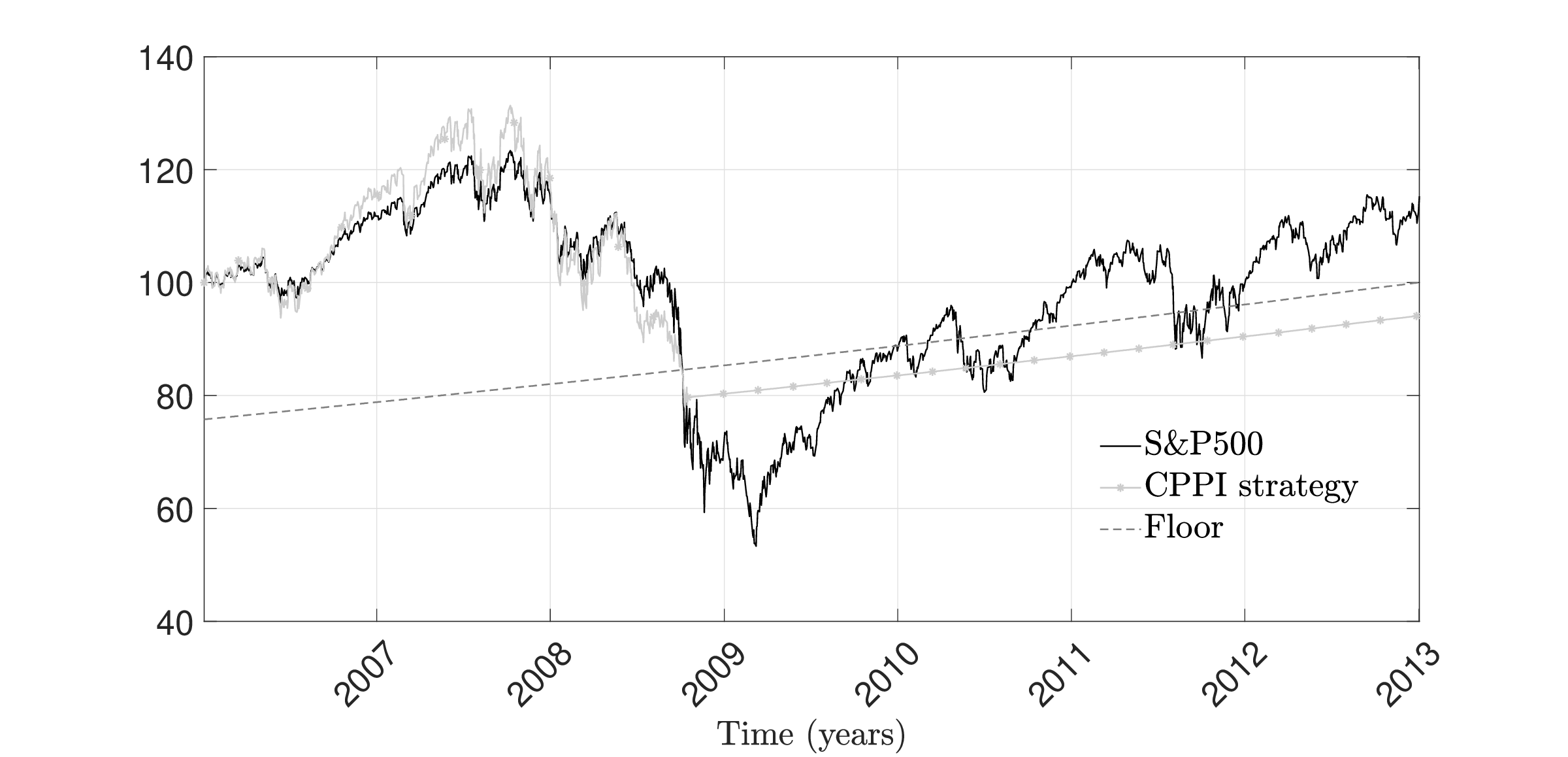} 
    \caption{Historical simulations for the CPPI strategy on the S\&P 500 index. The solid black line represents the trajectory of the S\&P500 index, the solid grey line is the value of the CPPI strategy with constant multiplier equal to 10, and the dashed grey line is the value of the floor with constant interest rate equal to $3.5$.\%} 
    \label{fig:Historical_simulation}
    \end{figure}
     In this paper, we propose a modification of the PPI algorithm that guarantees equity market participation over the entire investment time horizon, even when gap risk occurs, eventually leading to a smaller loss at maturity.
     
    \section{The financial market model}\label{sect:Financial_mkt_model}
    We introduce the mathematical framework. Let $(\Omega, \mathcal F, \mathbb{P})$ be a probability space and let $T$ be a finite time horizon. Consider a one-dimensional  Brownian motion $(W_t)_{t \ge 0}$ and a jump process described by a homogeneous Poisson random measure $N\left(\de t,\de y\right)$, independent of each other. We let $\nu\left(\de y\right)$ be the $\sigma$-finite compensator of $N(\de t, \de y)$, with support in a measurable set $E \subseteq \mathbb{R}$. Let $\mathbb F=\{\mathcal{F}_{t},\;t \in [0,T]\}$ be a complete and right continuous filtration given by 
    \begin{equation*}
    \mathcal{F}_{t}:=\sigma\left\lbrace\left(W_s, N\left((0,s],A\right)\right)| \ A\in\mathcal{B}(E), \ s\in[0,t]\right\rbrace\vee\mathcal{N},
    \end{equation*}
    with $\mathcal{F}_T=\mathcal F$ and $\mathcal{F}_{0}$ being the trivial $\sigma$-algebra, where $\mathcal{N}$ is the collection of all $\mathbb{P}$-null sets and $\mathcal{B}(E)$ is the Borel-$\sigma$-algebra on $E$. Next, we consider a financial market which consists of a money market account with value process $P=(P_t)_{t\in[0,T]}$ and a stock with price process $S=(S_t)_{t\in[0,T]}$. We assume that a fund manager can buy and sell continuously without restrictions or transaction costs. We assume that the interest rate is constant and equal to $r>0$, so that the value of the money market account is $P_t=e^{rt} $. The stock dynamics is assumed to follow a geometric jump-diffusion process, that is 
    \begin{equation}\label{eq:stock_price}
    \frac{\de S_t}{S_{t-}}=\mu\de t+\sigma\de W_t+\int_E\gamma(t,y)N(\de t,\de y),\quad S_{0}= s_0>0,
    \end{equation}
    where  $\mu\in\mathbb{R}$ is the drift of the risky asset, $\sigma\in\mathbb{R}_{+}$ is the volatility of the risky asset, and $\gamma(t,y)$ is a continuous function accounting for the size of jumps in the stock price. To ensure non-negativity of the risky asset price we assume that $\gamma(t,y)>-1$ for all $(t,y)\in [0,T]\times E$.  
    Note that that equation \eqref{eq:stock_price} admits a unique strong solution.\\
    We consider a self-financing trading strategy $\bm{\pi}=\left(1-\pi_t, \pi_t\right)_{t\in[0,T]}$ where $\pi_t$ is the fraction of portfolio value invested in the risky asset at time $t$ (consequently, $1-\pi_t$ is the fraction of wealth invested in the bond at time $t$). Hence, the dynamics of the wealth process ${V}^{\bm{\pi}}=({V}^{\bm{\pi}}_{t})_{t\in[0,T]}$ is given by 
    \begin{equation*}
    \frac{\de{V}^{\bm{\pi}}_{t}}{{V}^{\bm{\pi}}_{t-}}=\left[r+\pi_t\left(\mu-r\right)\right]\de t+\pi_t\sigma\de W_t+\pi_t\int_E\gamma(t,y)N(\de t,\de y), 
    \end{equation*}
    with ${V}^{\bm{\pi}}_{0}={v}_0$ being the initial endowment. We further assume standard integrability conditions on $\bm{\pi}$, that is
    \begin{equation}\label{eq:integrability_cond}
    \mathbb{E}\left[\int_0^T\pi_s^2\left(1+\int_E\gamma^2(s,y)\nu(\de y)\right) \de s\right]<\infty.
    \end{equation}
    Note that condition \eqref{eq:integrability_cond} simplifies in $\mathbb{E}\left[\int_0^T \pi_s^2 \de s\right]<\infty$ if $\mathrm{supp} (\nu)=E$ is a compact subset of $\mathbb{R}$. Note also that $\pi_t$ can assume values in $\mathbb{R}$ with the convention that a negative $\pi_t$ corresponds to short-selling of the risky asset and $\pi_t>1$ correspond to borrowing from the bank account. 
    We let $G$ be the guaranteed amount at maturity. In typical agreements $G=\xi\,v_0$, where $\xi\in (0,1]$ is some pre-determined \textit{protection level}; hence $G$ is set at the initial time. We define the floor $F=(F_t)_{t\in[0,T]}$ as the present value of the guaranteed amount, i.e. $F_t=Ge^{-r(T-t)}$ and, for every strategy $\bm{\pi}$, we assume that the exposure to the risky asset is proportional to the difference between the portfolio value $V^{\bm{\pi}}$ and the floor $F$. We denote by $m=(m_t)_{t\in[0,T]}$ the proportionality factor, and then in mathematical terms, we have $\pi_t V_t^{\bm{\pi}}=m_t C^{m}_t$ for every $t\in [0,T]$, where ${C}^{m}_{t}:={V}^{\bm{\pi}}_{t}-F_{t}$ indicates the cushion, regardless of gap risk. Note that there is a one to one correspondence between the strategy $\bm{\pi}$ and the multiplier $m$, therefore we can equivalently use the superscript $V^{m}$ when expressing the portfolio value in terms of the multiplier, that is
    \begin{align}\label{eq:PPI_portfolio_value}
    \de{V}^{m}_{t}=r{V}^{m}_{t-}\de t+m_t\left({V}^{m}_{t-}-F_{t}\right)\left[\left(\mu-r\right)\de t+\sigma\de W_t+\int_E\gamma(t,y)N(\de t,\de y)\right],\quad{V}^{m}_{0}={v}_0,
    \end{align}
    and the dynamics of the cushion process are
    \begin{align}\label{eq:cushion_process}
    \frac{\de{C}^{m}_{t}}{{C}^{m}_{t-}}=\left[r+m_{t}\left(\mu-r\right)\right]\de t+m_{t}\sigma\de W_t+m_{t}\int_E\gamma(t,y)N(\de t,\de y),\quad {C}^{m}_{0}={c}_0.
    \end{align}
    We note that the integrability condition \eqref{eq:integrability_cond} can be rewritten in terms of the multiplier as   
    \begin{equation}\label{eq:integrability_cond_mult}
    \mathbb{E}\left[\int_0^Tm^2_s\left(1+\int_E\gamma^2(s,y)\nu(\de y) \right) \de s\right]<\infty.
    \end{equation}
    Hence we can give the following definition of admissible multipliers.
    \begin{defn}
    An admissible multiplier is an $\mathbb{F}$-predictable process $m=(m_{t})_{t\in[0,T]}$ such that
    \begin{itemize}
    \item[(i)] The integrability condition \eqref{eq:integrability_cond_mult} is satisfied;
    \item[(ii)] ${C}^{m}_{t}\ge -F_{t}$ $\mathbb{P}$-a.s. for all $t\in[0,T]$, and for every initial cushion ${c}_0={v}_0- F_0>0$.\footnote{Equivalently, an admissible strategy $\bm{\pi}$ satisfies $V^{\bm{\pi}}_{t}\geq 0$, for all $t\in[0,T]$, and for every initial wealth ${v}_{0}>0$.}
    \end{itemize}
    \end{defn}
    We denote by $\mathcal{M}$ the set of all admissible multipliers $m$.\footnote{This set is non-empty, as an investor can always invest all his/her funds into the money market account, that is $m_{t}=0$ for all $t\in[0,T]$.} By stretching the definition slightly, we refer to the strategy described above as PPI. Moreover,  to help readability we drop the superscript if there is no ambiguity in the notation, or if we do not need to stress the dependence of the processes on the strategy.
    \section{The optimization problem}\label{sec:optimization_pb}
    The presence of jumps in the dynamics of the underlying risky asset leads to gap risk, even under broader definitions of PPI strategies. The cushion process may get negative values which does not permit modelling the preferences of the insurer via a CRRA utility function. Therefore, in our work, we adopt the prospect utility theory to model the preference of the PI insurer. Motivated by the work of \cite{DICHTL20111683}, we assume that insurer's preferences are described by an S-shaped utility function as follows (see also \cite{bian2011smooth} and \cite{DONG2020341}):  
    \begin{equation}\label{eq:S_shaped_utility_fun} 
    U(x;G)=
    \begin{cases}
    \begin{aligned}
    &\dfrac{\left(x-G\right)^{1-\delta_1}}{1-\delta_1}, \quad x\geq G,\\
    &-\tilde{\lambda}\dfrac{\left(G-x\right)^{1-\delta_2}}{1-\delta_2},\quad 0\leq x<G,
    \end{aligned}
    \end{cases}
    \end{equation}
    where $\delta_i\in\left(0,1\right)$, for $i=1,2$ and $\tilde{\lambda}>\frac{1-\delta_2}{1-\delta_1}$, and the guaranteed amount $G$ is the reference point. Hence, the fund's manager aims to determine the optimal multiplier of our new version of the PPI strategy in order to maximize the expected S-shaped utility of the terminal wealth ${V}^{m}_{T}$. 
    Due to the special choice of the utility function, we do not restrict ourselves to considering strategies leading to positive cushion. Even if the gap occurs during the investment horizon, the fund manager maintains the exposure to the risky asset proportional to the negative cushion. 
    Hence, we recast the optimization problem in terms of the cushion as follows
    \begin{equation}\label{eq:opt_problem_2}
    \mbox{Maximize }\mathbb{E}^{t,{c}}\left[\tilde{U}\left({C}^{m}_{T};G\right)\right]\mbox{ over all }m\in\mathcal{M},
    \end{equation}
    where $\mathbb{E}^{t,{c}}$ denotes the conditional expectation given ${C}_{t}={c}$, and $\tilde{U}(\cdot\mbox{ };G)$ is an S-shaped utility on $[-G,\infty)$ with reference point equal to zero, defined as
    \begin{equation*} 
    \tilde{U}(x;G)=
    \begin{cases}
    \begin{aligned}
    &\frac{x^{1-\delta_1}}{1-\delta_1}, \quad x\geq 0,\\
    &-\tilde{\lambda}\frac{\left(-x\right)^{1-\delta_2}}{1-\delta_2},\quad -G\leq x<0.
    \end{aligned}
    \end{cases}
    \end{equation*}
    From now on, we omit the dependence of the multiplier $m$ in the notation of the cushion process to help for readability. We define the value function corresponding to the problem \eqref{eq:opt_problem_2} as follows
    \begin{equation}\label{eq:value_2}
    v(t,{c}):=\sup_{m\in\mathcal{M}\left({c}\right)}\mathbb{E}^{t,{c}}\left[\tilde{U}\left({C}_{T};G\right)\right].    
    \end{equation}
    \subsection{The concavification}\label{sect:Opt_problem}
    Since the utility function in equation \eqref{eq:value_2} is  not concave,  we consider its concave envelope and show that the optimum of problem \eqref{eq:value_2} is achieved when $\tilde U(x)$ and the concave envelope coincide (see, e.g., \cite{carpenter2000does} for additional references on this procedure).  We recall that for a function $f$ with domain $D$, the concave envelope $f^{con}$ is given by
    \begin{equation*}\label{eq:def_concavification}
    f^{con}:=\inf\left\lbrace g:D\rightarrow\mathbb{R}\mbox{ s.t. }g \mbox{ is concave, }g(x)\geq f(x),\mbox{ }\forall x\in D\right\rbrace.   
    \end{equation*}
    Figure \ref{fig:Payoff_Diagram} provides a representation of the function $\tilde{U}$ (solid line) and its concavification $\tilde{U}^{con}(\cdot\mbox{ };G)$ (dotted line).
    \begin{figure}[H]
    \centering
    \includegraphics[width=0.7\textwidth]{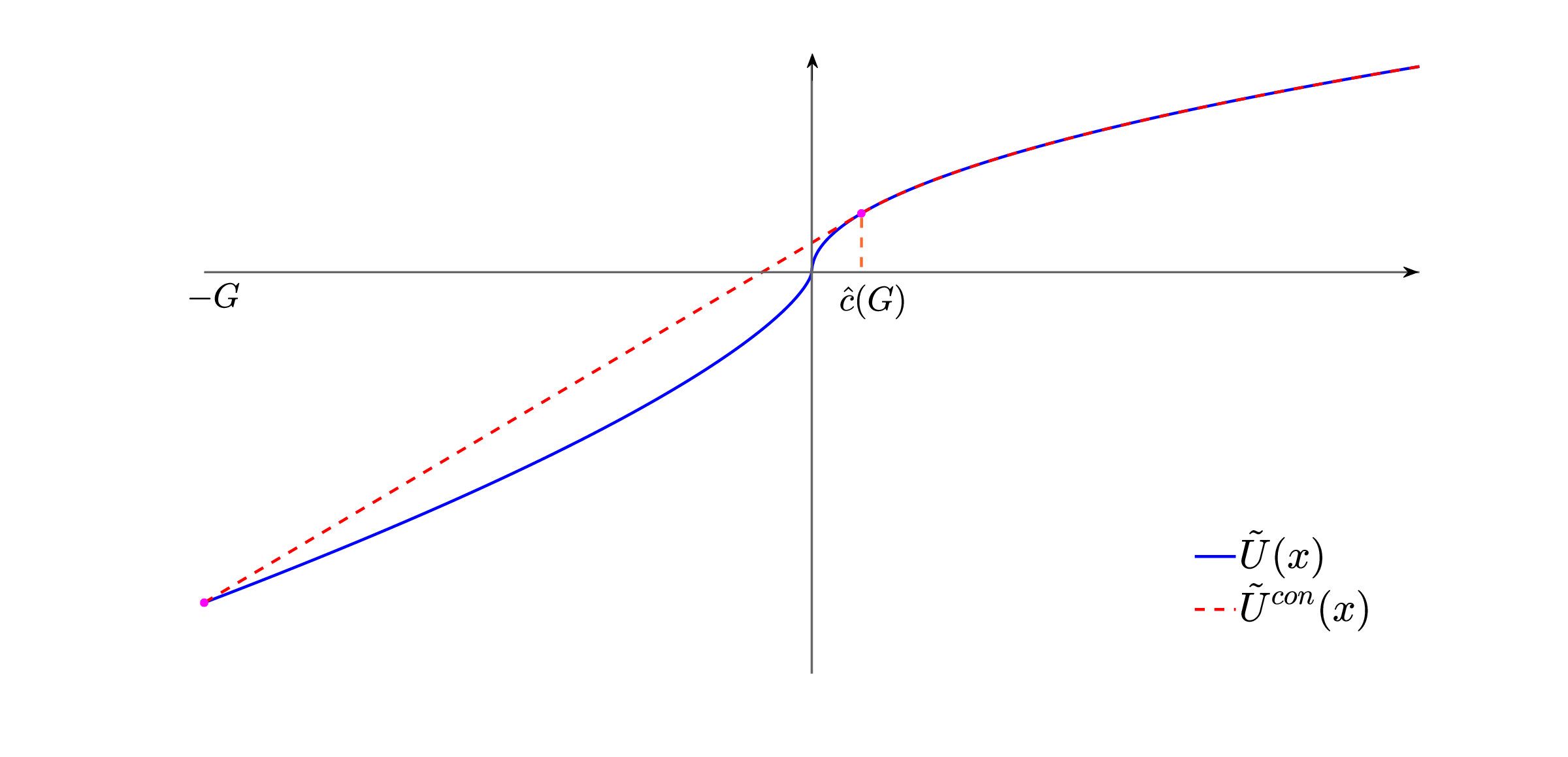} 
    \caption{The solid line plots the utility function $\tilde U(x)$, the dotted line is the concave envelope of $\tilde{U}(x)$, $\tilde{U}^{con}(x)$. In this plot we have used the parameters: $\delta_1=0.3$, $\delta_2=0.5,$ $\tilde{\lambda}=2.25$ and $G=100.$} 
    \label{fig:Payoff_Diagram}
    \end{figure}
    In particular, $\tilde{U}^{con}$ is obtained by replacing part of the original function with a straight line from $-G$ to a point, $\hat{c}(G)>0$, at which the slope of such a straight line equals the slope of $\tilde{U}$. The following lemma shows that for every $G>0$, the point $\hat{c}(G)>0$ exists and is unique.
    \begin{lem}\label{prop:existence_uniqueness_point_c_hat}
    For every $G>0$, there exists a unique point $\hat{c}(G)>0$ which is the solution of the following equation
    \begin{equation}\label{eq:tangent_point}
    \frac{\tilde{U}\left(\hat c(G);G\right)-\tilde{U}(-G;G)}{\hat c(G)+G}=\tilde{U}^\prime\left(\hat c(G);G\right).
    \end{equation}
    \end{lem}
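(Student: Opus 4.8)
The plan is to reduce the tangency condition \eqref{eq:tangent_point} to a single scalar equation and to show that the associated function is continuous, strictly monotone, and changes sign exactly once on $(0,\infty)$. Since we seek a tangency point $\hat c(G)>0$, we work on the concave branch, where $\tilde U(x;G)=x^{1-\delta_1}/(1-\delta_1)$, $\tilde U'(x;G)=x^{-\delta_1}$ and $\tilde U''(x;G)=-\delta_1 x^{-\delta_1-1}<0$, while $\tilde U(-G;G)=-\tilde\lambda\, G^{1-\delta_2}/(1-\delta_2)$ is a fixed finite constant. Multiplying \eqref{eq:tangent_point} through by $c+G>0$, the equation is equivalent to $\Phi(c)=0$, where
\[
\Phi(c):=\tilde U(c;G)-\tilde U(-G;G)-(c+G)\,\tilde U'(c;G),\qquad c>0.
\]

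First I would differentiate: a direct computation gives $\Phi'(c)=-(c+G)\,\tilde U''(c;G)=\delta_1(c+G)c^{-\delta_1-1}>0$ for all $c>0$, so $\Phi$ is strictly increasing on $(0,\infty)$, which immediately yields uniqueness of any root. Next I would analyse the boundary behaviour. Writing $\Phi$ explicitly as
\[
\Phi(c)=\frac{\delta_1}{1-\delta_1}\,c^{1-\delta_1}-G\,c^{-\delta_1}+\tilde\lambda\frac{G^{1-\delta_2}}{1-\delta_2},
\]
and recalling $\delta_1\in(0,1)$, the term $-G\,c^{-\delta_1}\to-\infty$ dominates as $c\to 0^+$, so $\Phi(0^+)=-\infty$, whereas the term $\frac{\delta_1}{1-\delta_1}c^{1-\delta_1}\to+\infty$ dominates as $c\to\infty$, so $\Phi(c)\to+\infty$. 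Continuity of $\Phi$ on $(0,\infty)$ together with the intermediate value theorem then gives existence of a root, and strict monotonicity gives uniqueness; this root is the desired $\hat c(G)>0$.

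There is no substantial obstacle here: the argument is a one-dimensional sign-change and monotonicity analysis, and the only points requiring care are the correct one-sided limits (in particular, that the negative power $c^{-\delta_1}$ controls $\Phi(0^+)$ while the constant $\tilde U(-G;G)$ remains bounded) and the sign of $\Phi'$, which follows from strict concavity of $\tilde U$ on the positive axis. I note that the standing assumption $\tilde\lambda>(1-\delta_2)/(1-\delta_1)$ is not needed for the existence and uniqueness of $\hat c(G)$ itself; it plays its role in the subsequent step, ensuring that the chord joining $(-G,\tilde U(-G;G))$ and $(\hat c(G),\tilde U(\hat c(G);G))$ indeed lies above $\tilde U$, so that this construction genuinely produces the concave envelope $\tilde U^{con}$.
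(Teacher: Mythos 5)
Your proof is correct and essentially identical to the paper's: your $\Phi$ is exactly the function $f$ that the authors analyse on $(0,\infty)$ (their display is your $\Phi$ written out explicitly), with the same sign of the derivative and the same one-sided limits $\Phi(0^+)=-\infty$, $\Phi(+\infty)=+\infty$ feeding an intermediate-value-plus-monotonicity argument. The only difference, immaterial for the statement as phrased, is that the paper also examines the branch $(-G,0)$ to rule out a tangency point there, a check that serves the subsequent construction of the concave envelope rather than the lemma itself.
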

    \begin{proof}
    See Appendix \ref{app:A_1}.
    \end{proof}
    The explicit expression for the concavified function $\tilde{U}^{con}(\cdot;G)$ is given by
    \begin{equation*}
    \tilde{U}^{con}(x;G)
    =
    \begin{cases}
    \begin{aligned}
    &\dfrac{x^{1-\delta_1}}{1-\delta_1},\quad x\geq\hat{c}(G),\\
    &kx+\dfrac{G}{\hat{c}(G)+G}\dfrac{\hat{c}(G)^{1-\delta_2}}{1-\delta_2}-\tilde\lambda\dfrac{G^{1-\delta_1}}{\hat{c}(G)+G}\dfrac{\hat{c}(G)}{1-\delta_1},\quad -G\leq x<\hat{c}(G),
    \end{aligned}
    \end{cases}
    \end{equation*}
    where $k=\dfrac{1}{\hat{c}(G)+G}\left(\dfrac{\hat{c}(G)^{1-\delta_2}}{1-\delta_2}+\dfrac{\tilde\lambda G^{1-\delta_1}}{1-\delta_1}\right)$   is the slope of the tangent line and $\hat{c}(G)>0$ is the unique solution of equation \eqref{eq:tangent_point}. Furthermore, $\tilde{U}^{con}(x;G)\geq\tilde{U}(x;G)$ for all $(x,G)\in[-G,+\infty)\times\left(0,+\infty\right)$ and $\tilde{U}^{con}\left(x;G\right)=\tilde{U}\left(x;G\right)$ for $x=-G$ and for all $x>\hat{c}(G)$. Next, we show that the solution of the optimization is achieved at a point where the utility function $\tilde{U}$ and its concave envelop $\tilde{U}^{con}$ coincide.   
    \begin{lem}\label{lem:2}
    The optimal terminal cushion is achieved at ${C}^{\star}_T\geq\hat{c}(G)$.
    \end{lem}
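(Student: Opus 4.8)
The plan is to compare the original problem \eqref{eq:value_2} with its concavified counterpart and to show that any maximiser of the latter automatically lives on the set where $\tilde U$ and $\tilde U^{con}$ agree, which by construction is $\{-G\}\cup[\hat c(G),+\infty)$; the bulk of the work is then to rule out the extreme-loss point $x=-G$ so that the maximiser satisfies $C^\star_T\ge\hat c(G)$. Concretely, I would first introduce the auxiliary value function $v^{con}(t,c):=\sup_{m\in\mathcal{M}(c)}\mathbb{E}^{t,c}[\tilde U^{con}(C_T;G)]$. Since $\tilde U^{con}(\cdot;G)\ge\tilde U(\cdot;G)$ pointwise on $[-G,+\infty)$, every admissible multiplier gives $\mathbb{E}^{t,c}[\tilde U(C_T;G)]\le\mathbb{E}^{t,c}[\tilde U^{con}(C_T;G)]$, so that $v(t,c)\le v^{con}(t,c)$.

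Next I would solve the now concave problem for $v^{con}$ through the first-order / duality characterisation that underlies the martingale method of Section~\ref{sec:martingale_approach}. The key structural fact is the shape of the marginal utility of the concave envelope: from the explicit form of $\tilde U^{con}$, its derivative equals the constant $k$ on the whole segment $[-G,\hat c(G)]$ and equals $x^{-\delta_1}$, strictly decreasing from $k$ down to $0$, on $[\hat c(G),+\infty)$; by the tangency condition \eqref{eq:tangent_point} these two pieces match continuously at $\hat c(G)$, where $x^{-\delta_1}=k$. Consequently $(\tilde U^{con})'$ maps $[-G,+\infty)$ onto $(0,k]$, and its single-valued inverse on $(0,k)$ takes values in $(\hat c(G),+\infty)$. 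Writing the optimiser in the usual inverse-marginal-utility form $C^\star_T=((\tilde U^{con})')^{-1}(y\,\rho_T)$, with $\rho_T$ the relevant pricing density and $y$ the Lagrange multiplier fixed by the budget constraint, the optimal terminal cushion therefore lies in $(\hat c(G),+\infty)$ on the event $\{y\rho_T<k\}$.

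Finally I would close the argument by showing that the remaining events contribute nothing that violates the claim. On $\{y\rho_T=k\}$ the inverse is multi-valued over $[-G,\hat c(G)]$, but this level set is null because the law of $\rho_T$ is atomless in the present jump-diffusion model; the only other alternative is the corner $x=-G$, i.e. total ruin $V_T=0$, which I would exclude so that the optimal cushion concentrates on the upper coincidence branch. One then has $C^\star_T\ge\hat c(G)$ almost surely, and on this set $\tilde U^{con}(C^\star_T;G)=\tilde U(C^\star_T;G)$, whence $v^{con}(t,c)=\mathbb{E}^{t,c}[\tilde U(C^\star_T;G)]\le v(t,c)$; combined with the reverse inequality above this yields $v=v^{con}$ and identifies $C^\star_T$ as a maximiser of the original problem \eqref{eq:value_2}.

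The step I expect to be the main obstacle is precisely the exclusion of the strip $(-G,\hat c(G))$ together with the corner $-G$: because the market is incomplete, one cannot simply invoke a complete-market replication of the ideal concavified payoff, and the flat piece of $(\tilde U^{con})'$ makes the inverse-marginal-utility map set-valued exactly on the region one wants to avoid. Making rigorous that an optimal admissible cushion \eqref{eq:cushion_process} places no mass on $[-G,\hat c(G))$ — rather than being driven into it by non-attainability — is the delicate point, and it is where the worst-case pricing density and the atomlessness of $\rho_T$ developed in Section~\ref{sec:martingale_approach} must be brought to bear.
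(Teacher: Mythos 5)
Your route is genuinely different from the paper's. You solve the concavified problem by duality and try to read off that the inverse marginal utility of $\tilde U^{con}$ lands in $[\hat c(G),+\infty)$, whereas the paper argues directly on the terminal random variable: any mass that $C^{\star}_T$ places in $(-G,\hat c(G))$ is rewritten as a convex combination of the endpoints $-G$ and $\hat c(G)$ and redistributed onto those endpoints without decreasing expected utility, and the corner $-G$ is then excluded by comparison with the riskless strategy $m\equiv 0$, whose terminal cushion is $c_0e^{rT}>0$. The paper's argument is elementary and uses no market structure at all. Yours leans on the dual machinery of Section \ref{sec:martingale_approach}, which in the paper is built \emph{after} and \emph{on top of} Lemma \ref{lem:2}: the objects $\mathcal{X}_{\bm{\theta}}$ and $Y_{\bm{\theta}}$ are defined using only the CRRA branch $x^{1-\delta_1}/(1-\delta_1)$ precisely because Lemma \ref{lem:2} has already confined the optimum to $[\hat c(G),+\infty)$. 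To run your argument you would have to redo the duality for the full concave envelope, including its linear piece, and establish attainability of the candidate payoff in the incomplete market; neither is available at this point in the paper, and you yourself flag the attainability issue without resolving it.

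Beyond that structural circularity, there is a concrete gap: the event $\{y\rho_T>k\}$. Since $(\tilde U^{con})'$ is bounded above by the slope $k$ of the linear piece (it equals $k$ on $[-G,\hat c(G))$ and equals $x^{-\delta_1}\le k$ for $x\ge\hat c(G)$), the pointwise maximiser of $x\mapsto\tilde U^{con}(x;G)-y\rho_T\,x$ over $[-G,+\infty)$ is the corner $-G$ whenever $y\rho_T>k$. You treat only $\{y\rho_T<k\}$ (fine) and $\{y\rho_T=k\}$ (null by atomlessness), and then assert that the corner ``would be excluded'' without an argument. But $\rho_T=H^{\bm{\theta}}_{T}$ has unbounded support, so $\mathbb{P}(y\rho_T>k)>0$ for every $y>0$, and the inverse-marginal-utility candidate genuinely places positive mass at $-G$; the conclusion $C^{\star}_T\ge\hat c(G)$ a.s.\ therefore does not follow from the dual formula alone. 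Some additional input is required to kill that event --- for instance an improvement argument in the spirit of the paper's comparison with the $m\equiv 0$ strategy --- and that is exactly the step your proposal leaves open.
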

    \begin{proof}
    See Appendix \ref{app:A_2}.
    \end{proof}
    The range of possible values for the optimal cushion depends on both the evaluation of gains (i.e. on the parameter $\delta_1$) and the penalization of losses (i.e. the parameters $\delta_2$ and $\tilde \lambda$). In particular, for fixed $\delta_1$, the optimal cushion assumes a range of larger values if penalization of losses is smaller.\\     
    Thanks to Lemma \ref{lem:2}, the optimization problem \eqref{eq:opt_problem_2} translates into the following equivalent one
    \begin{equation}\label{eq:opt_problem_3}
    \mbox{Maximize }\mathbb{E}^{t,{c}}\left[\tilde{U}^{con}\left({C}_{T};G\right)\right]\mbox{ over all }m\in\mathcal{M}.
    \end{equation}
    \section{A martingale approach to optimal investment under market incompleteness}\label{sec:martingale_approach}
    Incompleteness of the market impedes the resolution of optimization problem \eqref{eq:opt_problem_3} via the standard martingale method. Therefore, we use a generalization of this approach developed by \cite{michelbrink2012martingale}, that combines the martingale method with the worst case probability, suitably tailored to our setup.
    \begin{remark}
    Alternatively one could apply the dynamic programming approach, and characterize the value of the optimization as the unique viscosity solution of the Hamilton-Jacobi-Bellman (HJB) equation, or the unique classical solutions under additional conditions on the parameters of the model  (see, e.g. \cite{pham1998optimal}). Under general conditions on the dynamics of the state processes, the optimal PPI strategy can only be expressed in feedback form and the value function must be computed numerically. The approach followed in this paper permits to obtain a different characterization of the optimal PPI strategy via the solution of a non-linear system. This may be as complicated as solving the HJB equation, but it has the advantage to draw a connection between the optimal multiplier and the market price of risk.         
    \end{remark}
    Under the assumption of no arbitrage, there are infinitely many equivalent martingale measures (EMMs) that can be characterized by the family of Radon-Nikodym densities $Z^{\bm{\theta}}$ so that
    \begin{equation}\label{eq:radon_nik_der}
    \frac{\de\mathbb{Q}_{\bm{\theta}}}{\de\mathbb{P}}\bigg|_{\mathcal{F}_{T}}=Z^{\bm{\theta}}_{t},
    \end{equation}
    depending on the vector process $\bm{\theta}=\left(\theta^D,\theta^J\right)$, where $\theta^D$ and $\theta^J$ are the market prices of risk for the diffusion and for jumps of the stock price, respectively. For the predictable process $\left(\theta^D_t\right)_{t\in[0,T]}$, and the predictable random field $\left(\theta^J_t(y)\right)_{t\in[0,T]}$, we assume that the process $Z^{\bm{\theta}}$ is a martingale\footnote{Sufficient conditions on $\bm{\theta}$ will be outlined later.} and it is the solution of the SDE
    \begin{equation}
    \frac{\de Z^{\bm{\theta}}_{t}}{Z^{\bm{\theta}}_{t-}}=\theta^{D}_t\de W_t-\int_E\left(1-\theta^J_t(y)\right)\left(N(\de t,\de y)-\nu(\de y)\de t\right),
    \end{equation}
    with $Z^{\bm{\theta}}_{0}=1$. If $\theta^J_t(y)>0$ for all $y\in E$ and $t\in[0,T]$, it holds that
    \begin{multline}\label{eq:radon_nik_der_1}
    Z^{\bm{\theta}}_{t}=\exp\Bigg\{\int_0^t\theta^D_s\de W_{s}-\frac{1}{2}\int_0^t\left(\theta^D_s\right)^2\de s\\+\int_0^t\int_E\log\theta^J_s(y)N(\de s,\de y)+\int_0^t\int_E\left(1-\theta^J_s(y)\right)\nu(\de y)\de s\Bigg\},
    \end{multline}
    see, Theorem $4.61$ of \cite{jacod2013limit} for further details. By Girsanov's Theorem, if $Z^{\bm{\theta}}_{t}$ is a $\mathbb{P}$-martingale with expectation equal to one, we get that  
    \begin{equation}\label{eq:Brown_Motion_Q1}
     W^{\mathbb{Q}_{\bm{\theta}}}_t=W_t-\int_0^t\theta^D_s\de s,\quad t\in[0,T]
    \end{equation}
    is a Brownian motions under $\mathbb{Q}_{\bm{\theta}}$ and the compensator of the random measure $N$ under $\mathbb{Q}_{\bm{\theta}}$ is $\theta^J(t,y)\nu(\de y)\de t$ for every $t\in[0,T].$ 
    To summarize, we consider the following definition. 
    \begin{defn}\label{def:set_maket_price_risk}
    The vector process $\bm{\theta}=(\theta^D, \theta^J)$ is an admissible market price of risk if it is predictable and the following conditions hold
    \begin{itemize}
    \item[(i)] $\theta^J_t(y)>0$ for all $y\in E$ and $t\in[0,T]$,
    \item[(ii)] $Z^{\bm{\theta}}_{t}$ defined by \eqref{eq:radon_nik_der_1} is a martingale with expected value equal to one,
    \item[(iii)] for all $t\in[0,T]$ it holds that
    \begin{align} \label{eq:No_Arb_Cond1} 
    &\mu-r+\sigma\theta^D_t+\int_E\gamma(t,y)\theta^J_t(y)\nu(\de y)=0.
    \end{align}
    \end{itemize}
    We let $\Theta$ be the set of all admissible market prices of risk. 
    \end{defn}
    To guarantee that $Z^{\bm{\theta}}$ is a martingale with $\mathbb{E}\left[Z^{\bm{\theta}}_{t}\right]=1$, one could assume a generalization of the Novikov's condition for jump-diffusion processes, that is $\mathbb{E}\left[e^{\frac{1}{2}\int_0^T\left(\theta^D_u\right)^2\de u+\int_0^T\left(1-\theta^J_u(y)\right)^2\de u}\right]<\infty$, see, e.g. Theorem $9$ of \cite{protter2008no}. Condition $(iii)$ of Definition \ref{def:set_maket_price_risk} ensures that the discounted stock price $S_te^{-rt}$ is a local martingale under the corresponding new probability measure $\mathbb{Q}_{\bm{\theta}}$ defined by \eqref{eq:radon_nik_der}. 
    We further impose the following.
    \begin{ass}
    The set $\Theta$ is non-empty.    
    \end{ass}
    For every $\bm{\theta}\in\Theta$, using the definition of the $\mathbb{Q}_{\bm{\theta}}$-Brownian motions, the $\mathbb{Q}_{\bm{\theta}}$-compensator of the Poisson random measure and conditions  \eqref{eq:No_Arb_Cond1}, it can be easily seen that the discounted cushion process $({C}_{t}e^{-rt})_{t \in [0,T]}$ is a local $\mathbb{Q}_{\bm{\theta}}$-martingale. To obtain properties of the discounted cushion under the original probability $\mathbb{P}$, we introduce the state price density associated with $\bm{\theta}\in\Theta$. 
    \begin{defn}
    For $\bm{\theta}=\left(\theta^D,\theta^J\right)$, the process defined by $H^{\bm{\theta}}_{t}:=Z^{\bm{\theta}}_{t}e^{-rt}$, for every $t\in[0,T]$, is called the state price density associated with $\bm{\theta}$, where $Z^{\bm{\theta}}$ is defined by equation \eqref{eq:radon_nik_der_1}. The dynamics of $H^{\bm{\theta}}$ is given by
    \begin{align*}
    \de H^{\bm{\theta}}_{t}&=H^{\bm{\theta}}_{t-}\left[-r\de t+\theta^D_t\de W_t-\int_E\left(1-\theta^J_t(y)\right)\left(N(\de t,\de y)-\nu(\de y)\de t\right)\right].
    \end{align*}
    \end{defn}
    For $\bm{\theta}\in\Theta$, the process $M$ defined by $ M_{t}:={C}_{t}H^{\bm{\theta}}_{t},$ for all $t\in[0,T]$,
    is a local $\mathbb{P}$-martingale. Moreover, for all $m\in\mathcal{M}({c}_0)$, the following budget constraint holds: 
    $\mathbb{E}^{\mathbb{Q}}\left[{C}_{T}e^{-rT}\right]\leq{c}_0 $ or equivalently $ \mathbb{E}\left[{C}_{T}H^{\bm{\theta}}_{T}\right]\leq {c}_0$,
    with ${c}_{0}:={v}_{0}-F_0$, see, e.g., Proposition $3.3$ of \cite{michelbrink2012martingale}. Hence, the static optimization problem equivalent to \eqref{eq:opt_problem_3} is
    \begin{equation}\label{eq:opt_prob_3}
    \max_{{C}_T\in\mathcal{F}_{T}}\mathbb{E}\left[\tilde U^{con}\left({C}_T\right)\right], \mbox{ with budget constraint } \mathbb{E}\left[{C}_TH^{\bm{\theta}}_{T}\right]\leq{c}_{0}.
    \end{equation}
    Thanks to Lemma \ref{lem:2}, for $\bm{\theta}=\left(\theta^D,\theta^J\right)$, we define for all $y >0$, $\mathcal{X}_{\bm{\theta}}(y):=y^{-\frac{1}{\delta_1}}\mathbb{E}\left[\left(H^{\bm{\theta}}_{T}\right)^{1-\frac{1}{\delta_1}}\right]
    $, and consider the set $\tilde{\Theta}\subseteq\Theta$ of all change of measures for which $\mathcal{X}_{\bm{\theta}}(y)$ is finite.
    For a fixed ${c}_{0}>0$ and $\bm{\theta}\in\tilde{\Theta}$, we define the non-negative random variable $Y_{\bm{\theta}}$ as
    \begin{equation}\label{eq:set_solutions}
    Y_{\bm{\theta}}:=\left(\mathcal{X}_{\bm{\theta}}^{-1}(c)H^{\bm{\theta}}_{T}\right)^{-\frac{1}{\delta_1}}.
    \end{equation}
    The properties $Y_\theta$ are provided in the lemma below (see also Lemma 1 in \cite{michelbrink2012martingale}).
    \begin{lem}\label{lem:Lemma_1}
    For any $\bm{\theta}\in\tilde{\Theta}$, $Y_{\bm{\theta}}$ defined in equation \eqref{eq:set_solutions} satisfies
    \begin{itemize}
    \item[(i)] $\mathbb{E}\left[H^{\bm{\theta}}_{T}Y_{\bm{\theta}}\right]={c}_{0}$,
    \item[(ii)] $\mathbb{E}\left[\min\left(\dfrac{Y_{\bm{\theta}}^{1-\delta_1}}{1-\delta_1},0\right)\right]>-\infty$,
    \item[(iii)] $\mathbb{E}\left[\dfrac{{C}_{T}^{1-\delta_1}}{1-\delta_1}\right]\leq\mathbb{E}\left[\dfrac{Y^{1-\delta_1}_{\bm{\theta}}}{1-\delta_1}\right]$ for all $m\in\mathcal{M}\left({c}_0\right)$.
    \end{itemize}
    \end{lem}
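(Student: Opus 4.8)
The plan is to first make $Y_{\bm{\theta}}$ fully explicit and then dispatch the three properties in increasing order of difficulty, the last being the only substantive one. For $\bm{\theta}\in\tilde{\Theta}$ the constant $K_{\bm{\theta}}:=\mathbb{E}\left[(H^{\bm{\theta}}_{T})^{1-1/\delta_1}\right]$ is finite and strictly positive (finiteness is exactly the defining property of $\tilde{\Theta}$, positivity follows from $H^{\bm{\theta}}_{T}>0$), so $\mathcal{X}_{\bm{\theta}}(y)=y^{-1/\delta_1}K_{\bm{\theta}}$ is a strictly decreasing bijection of $(0,\infty)$ onto itself with inverse $\mathcal{X}_{\bm{\theta}}^{-1}(c_0)=(K_{\bm{\theta}}/c_0)^{\delta_1}$. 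Substituting into \eqref{eq:set_solutions} yields the closed form $Y_{\bm{\theta}}=\tfrac{c_0}{K_{\bm{\theta}}}(H^{\bm{\theta}}_{T})^{-1/\delta_1}$, which is strictly positive; this single expression drives all three claims.

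Property (i) is then a one-line verification: $\mathbb{E}\left[H^{\bm{\theta}}_{T}Y_{\bm{\theta}}\right]=\tfrac{c_0}{K_{\bm{\theta}}}\mathbb{E}\left[(H^{\bm{\theta}}_{T})^{1-1/\delta_1}\right]=c_0$, the normalising constant cancelling against $K_{\bm{\theta}}$. Property (ii) is immediate from the sign of the exponent: since $\delta_1\in(0,1)$ we have $1-\delta_1>0$, and as $Y_{\bm{\theta}}>0$ the integrand $Y_{\bm{\theta}}^{1-\delta_1}/(1-\delta_1)$ is non-negative almost surely, so its negative part vanishes and the truncated expectation equals $0>-\infty$.

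The core of the lemma is (iii), which I would establish by the concave-duality (supporting-line) argument identifying $Y_{\bm{\theta}}$ as the maximiser of the CRRA criterion under the budget constraint. Write $u(x)=x^{1-\delta_1}/(1-\delta_1)$, concave on $[0,\infty)$ with $u'(x)=x^{-\delta_1}$. The key algebraic observation is that the marginal utility at $Y_{\bm{\theta}}$ is proportional to the state price density: $u'(Y_{\bm{\theta}})=Y_{\bm{\theta}}^{-\delta_1}=\mathcal{X}_{\bm{\theta}}^{-1}(c_0)\,H^{\bm{\theta}}_{T}=:\lambda H^{\bm{\theta}}_{T}$ with $\lambda=\mathcal{X}_{\bm{\theta}}^{-1}(c_0)>0$. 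The tangent inequality $u(x)\le u(Y_{\bm{\theta}})+\lambda H^{\bm{\theta}}_{T}(x-Y_{\bm{\theta}})$, applied pointwise with $x=C_{T}$ for an arbitrary $m\in\mathcal{M}(c_0)$ and integrated, gives $\mathbb{E}[u(C_{T})]\le \mathbb{E}[u(Y_{\bm{\theta}})]+\lambda\big(\mathbb{E}[H^{\bm{\theta}}_{T}C_{T}]-\mathbb{E}[H^{\bm{\theta}}_{T}Y_{\bm{\theta}}]\big)$. Using (i) to set $\mathbb{E}[H^{\bm{\theta}}_{T}Y_{\bm{\theta}}]=c_0$ and the budget constraint $\mathbb{E}[H^{\bm{\theta}}_{T}C_{T}]\le c_0$, the bracketed term is $\le 0$; since $\lambda>0$ it only lowers the bound, yielding the asserted inequality.

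The delicate point, and where I would spend most care, is the legitimacy of integrating the tangent inequality. This requires checking that the linear majorant $u(Y_{\bm{\theta}})+\lambda H^{\bm{\theta}}_{T}(C_{T}-Y_{\bm{\theta}})$ is genuinely integrable — its $H^{\bm{\theta}}_{T}C_{T}$ term is controlled by the budget constraint, its $H^{\bm{\theta}}_{T}Y_{\bm{\theta}}$ term by (i), and $\mathbb{E}[u(Y_{\bm{\theta}})]=(c_0/K_{\bm{\theta}})^{1-\delta_1}K_{\bm{\theta}}/(1-\delta_1)$ is finite precisely on $\tilde{\Theta}$ — so that the inequality survives passing to expectations (with the left-hand side possibly $+\infty$ a priori, which the bound then excludes). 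I would also flag that the tangent estimate is stated for $x\ge 0$, which is consistent with the region $C_{T}\ge\hat{c}(G)>0$ singled out by Lemma \ref{lem:2}; this is exactly the juncture at which the concavification step feeds into the martingale argument and guarantees $u(C_{T})$ is well defined.
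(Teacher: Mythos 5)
Your proof is correct. Note that the paper itself contains no proof of this lemma: it defers entirely to Lemma 1 of \cite{michelbrink2012martingale}, and your argument --- explicit inversion of $\mathcal{X}_{\bm{\theta}}$ to get $Y_{\bm{\theta}}=\tfrac{c_0}{K_{\bm{\theta}}}(H^{\bm{\theta}}_T)^{-1/\delta_1}$, the cancellation of $K_{\bm{\theta}}$ in (i), non-negativity of the integrand in (ii), and the supporting-line inequality $u(x)\le u(Y_{\bm{\theta}})+u'(Y_{\bm{\theta}})(x-Y_{\bm{\theta}})$ with $u'(Y_{\bm{\theta}})=\mathcal{X}_{\bm{\theta}}^{-1}(c_0)H^{\bm{\theta}}_T$ combined with the budget constraint for (iii) --- is precisely the standard verification underlying that reference. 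The two points you single out as delicate are indeed the only ones that need care, and you resolve both correctly: the linear majorant is integrable term by term (each term non-negative with finite expectation, using (i) and the finiteness of $K_{\bm{\theta}}$ on $\tilde{\Theta}$), so the inequality survives taking expectations even if $\mathbb{E}[u(C_T)]$ is a priori only defined in $[0,+\infty]$; and (iii) as literally stated only makes sense when $C_T\ge 0$ almost surely, since admissibility only forces $C_T\ge -F_T$ and $C_T^{1-\delta_1}$ is not real-valued for negative $C_T$. That restriction is left implicit in the paper, which invokes the lemma only after Lemma \ref{lem:2} has confined the optimal terminal cushion to $[\hat{c}(G),+\infty)$; your flagging of it is appropriate. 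No gaps.
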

    An immediate consequence of point $(iii)$ in Lemma \ref{lem:Lemma_1} is  that
    \begin{equation}\label{eq:inf_problem}
    \sup_{\tilde{m}\in\mathcal{M}\left({c}_0\right)}\mathbb{E}\left[\dfrac{{C}_{T}^{1-\delta_1}}{1-\delta_1}\right]\leq\inf_{\tilde{\bm{\theta}}\in\tilde{\Theta}}\mathbb{E}\left[\dfrac{Y^{1-\delta_1}_{{\tilde{\bm{\theta}}}}}{1-\delta_1}\right],
    \end{equation}
    or in other terms, the expected utility of the auxiliary process outperforms or is at least equal to the expected utility of any admissible multiplier.
    \begin{defn}
    A martingale measure $\mathbb{Q}_{\theta}$ obtained by equation \eqref{eq:radon_nik_der} in terms of a $\hat{\bm{\theta}}\in\tilde{\Theta}$ is called optimal for the infimum problem defined in equation \eqref{eq:inf_problem} if  
    \begin{equation}\label{eq:opt_martingale_measure}
    \mathbb{E}\left[\frac{Y^{1-\delta_1}_{\hat{\bm{\theta}}}}{1-\delta_1}\right]=\inf_{\bm{\theta}\in\tilde{\Theta}}\mathbb{E}\left[\frac{Y^{1-\delta_1}_{\bm{\theta}}}{1-\delta_1}\right],
    \end{equation}
    where $Y_{\bm{\theta}}$ is defined by equation \eqref{eq:set_solutions}.
    \end{defn}
    In the following, we link the optimal martingale measure $\mathbb{Q}_{\bm{\theta}}$ defined by means of equation \eqref{eq:opt_martingale_measure} with the optimal multiplier which solve \eqref{eq:opt_problem_2}. First, we consider for any $\bm{\theta}\in\tilde{\Theta}$ the martingale $M^{\bm{\theta}}$ defined as follows
    \begin{equation}\label{eq:process_m}
    M^{\bm{\theta}}_{t}:=\mathbb{E}\left[H^{\bm{\theta}}_{T}Y_{\bm{\theta}}|\mathcal{F}_{t}\right].
    \end{equation}
    such that $M^{\bm{\theta}}_{0}={c}_{0}$ $\mathbb{P}-$a.s. for every $\bm{\theta}\in\tilde{\Theta}$. Then, in terms of $M^{\bm{\theta}}$, 
    we have
    \begin{equation}\label{eq:cushion_theta}
    {C}^{\bm{\theta}}_t:=\frac{M^{\bm{\theta}}_{t}}{H^{\bm{\theta}}_{t}},\quad t\in[0,T],\quad\bm{\theta}\in\tilde{\Theta}.
    \end{equation}
    
    To draw a connection with a more standard case, in a complete market $\bm{\theta}$ is unique and hence $Y_{\bm{\theta}}$ is the optimal cushion at maturity and $C^{\bm{\theta}}$ is the cushion process. If the market is incomplete clearly this does not hold for generic $\bm{\theta}$, so the task is to identify a specific process, say $\hat{\bm{\theta}}$, whose associated random variable $Y_{\hat{\bm{\theta}}}$ and process $C^{\hat{\bm{\theta}}}$ determine the optimal cushion at maturity and cushion process. 
    
    To derive the optimal multiplier, and optimal market risk premia, the martingale representation coefficients of $M^{\bm{\theta}}$ in equation \eqref{eq:process_m} need to be computed. To do this, we first have to decompose the state price density $H^{\bm{\theta}}$ as in the following Proposition.
    
    \begin{prop}\label{propos:state_price_decomp_3}
    For all $\bm{\theta}\in\tilde{\Theta}$, it holds that $\left(H^{\bm{\theta}}_{t}\right)^{-\frac{1-\delta_1}{\delta_1}}=\dfrac{D\left(t,H^{\bm{\theta}}_{t}\right)}{\tilde{D}\left(t\right)}$, for all $t\in[0,T]$,
    where the process $\tilde{D}\left(t\right)$ is given by
    \begin{align}\label{eq:tilde_G_process_1}
    \nonumber\tilde{D}(t)=&\exp\bigg\{\dfrac{1-\delta_1}{\delta_1}r(T-t)+\int_t^T\int_E\left[-\dfrac{1-\delta_1}{\delta_1}\left(1-\theta^J_s(y)\right)+\left(\theta^J_s(y)\right)^{-\frac{1-\delta_1}{\delta_1}}-1\right]\nu(\de y)\de s\\
    &+\frac{1}{2}\dfrac{1-\delta_1}{\delta_1^2}\int_t^T\left(\theta^D_s\right)^2\de s\bigg\},\quad t\in[0,T],
    \end{align}
    and the process $D\left(t,H^{\bm{\theta}}_{t}\right)$ is a martingale given by 
    \begin{align}\label{eq:martingale_process_3}
    \nonumber D\left(t,H^{\bm{\theta}}_{t}\right)=&\exp\bigg\{-\frac{1}{2}\left(\dfrac{1-\delta_1}{\delta_1}\right)^2\int_0^t\left(\theta^{D}_s\right)^2\de s-   \int_0^t\int_E\left[\left(\theta^J_s(y)\right)^{-\frac{1-\delta_1}{\delta_1}}-1\right]\nu(\de y)\de s\\
    \nonumber&-\dfrac{1-\delta_1}{\delta_1}\int_0^t\theta^{D}_s\de W^2_{s}-\frac{1-\delta_1}{\delta_1}\int_0^t\int_E\ln\left(\theta^J_s(y)\right)N(\de s,\de y)\Bigg\},\quad t\in[0,T].
    \end{align}
    \end{prop}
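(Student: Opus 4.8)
The plan is to derive the decomposition by applying the change-of-variable (It\^o) formula for jump--diffusions to the map $h\mapsto h^{-\frac{1-\delta_1}{\delta_1}}$ along the state price density $H^{\bm{\theta}}$, and then to read off the multiplicative splitting of $\big(H^{\bm{\theta}}\big)^{-\frac{1-\delta_1}{\delta_1}}$ into a strictly positive exponential (local) martingale, to be identified with $D(t,H^{\bm{\theta}}_t)$, and a finite--variation factor, to be identified with $1/\tilde{D}(t)$. Writing $\alpha:=\frac{1-\delta_1}{\delta_1}$ and using the dynamics of $H^{\bm{\theta}}$, I would first note that a jump of $N$ at $(t,y)$ gives $\Delta H^{\bm{\theta}}_t=H^{\bm{\theta}}_{t-}\big(\theta^J_t(y)-1\big)$, hence $H^{\bm{\theta}}_t=H^{\bm{\theta}}_{t-}\,\theta^J_t(y)$ on that event, so that the corresponding jump of the power is $\big(H^{\bm{\theta}}_{t-}\big)^{-\alpha}\big((\theta^J_t(y))^{-\alpha}-1\big)$; the strict positivity $\theta^J_t(y)>0$ from Definition~\ref{def:set_maket_price_risk} makes this well posed.

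Applying It\^o's formula to $\big(H^{\bm{\theta}}_t\big)^{-\alpha}$ then produces a drift built from three sources: the first-order action of the $-r\,\de t$ term and of the jump compensator $\int_E(1-\theta^J_t(y))\,\nu(\de y)$ carried by the dynamics of $H^{\bm{\theta}}$; the second-order It\^o correction $\tfrac12\alpha(\alpha+1)(\theta^D_t)^2$ from the Brownian part; and the compensated part of the jump term $\int_E\big((\theta^J_t(y))^{-\alpha}-1\big)N(\de t,\de y)$. The algebraic simplification that makes the statement come out is the identity $\alpha+1=\tfrac{1}{\delta_1}$, so that $\tfrac12\alpha(\alpha+1)=\tfrac{1-\delta_1}{2\delta_1^2}$ reproduces the coefficient of $\int(\theta^D_s)^2\,\de s$ in \eqref{eq:tilde_G_process_1}, while the accumulated jump compensator collapses to the integrand $-\alpha\big(1-\theta^J_s(y)\big)+(\theta^J_s(y))^{-\alpha}-1$ appearing there.

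I would then isolate the genuinely stochastic factor. The Brownian contribution $\exp\{-\alpha\int_0^t\theta^D_s\,\de W_s-\tfrac12\alpha^2\int_0^t(\theta^D_s)^2\,\de s\}$ is the Dol\'eans--Dade exponential of $-\alpha\int\theta^D\,\de W$, and the jump contribution $\exp\{-\alpha\int_0^t\!\int_E\log\theta^J_s(y)\,N(\de s,\de y)-\int_0^t\!\int_E\big((\theta^J_s(y))^{-\alpha}-1\big)\nu(\de y)\,\de s\}$ is the stochastic exponential with integrand $g=\log\big((\theta^J)^{-\alpha}\big)$, whose compensator is exactly $\int\!\int(e^{g}-1)\,\nu=\int\!\int((\theta^J)^{-\alpha}-1)\,\nu$. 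Their product is the martingale $D(t,H^{\bm{\theta}}_t)$ of \eqref{eq:martingale_process_3}, and the remaining purely finite--variation (in fact deterministic, once $\bm{\theta}$ is frozen) factor, obtained as the exponential of the negative accumulated drift, is $1/\tilde{D}(t)$, yielding $\big(H^{\bm{\theta}}_t\big)^{-\alpha}=D(t,H^{\bm{\theta}}_t)/\tilde{D}(t)$ as claimed.

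The step I expect to be delicate is the upgrade of $D$ from a local to a true martingale, which is what the statement asserts. Positivity $\theta^J_t(y)>0$ keeps the jump exponential strictly positive and $\log\theta^J$ well defined; the generalized Novikov condition recalled after Definition~\ref{def:set_maket_price_risk} controls the Brownian exponential; and the defining integrability of $\tilde{\Theta}$, namely finiteness of $\mathcal{X}_{\bm{\theta}}$, which since $1-\tfrac{1}{\delta_1}=-\alpha$ is precisely $\mathbb{E}\big[(H^{\bm{\theta}}_T)^{-\alpha}\big]<\infty$, furnishes the uniform integrability needed to conclude that $D$ is a martingale on $[0,T]$ with $\mathbb{E}[D(t,H^{\bm{\theta}}_t)]=1$. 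By contrast, the bookkeeping of the drift terms in the first two steps is routine, and the only genuine care it requires is the consistent use of the compensated Poisson integral, so that the $\log\theta^J$ jump term and its compensator $((\theta^J)^{-\alpha}-1)\,\nu$ pair up correctly.
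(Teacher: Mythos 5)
Your route is genuinely different from the paper's. The paper defines $D(t,H^{\bm{\theta}}_t):=\mathbb{E}\big[(H^{\bm{\theta}}_T)^{-\frac{1-\delta_1}{\delta_1}}\,\big|\,\mathcal{F}_t\big]$, so that the martingale property is automatic (given integrability of the terminal value), and then obtains the explicit formulas by writing the Feynman--Kac PIDE for $D(t,h)$, making the multiplicative ansatz $D(t,h)=h^{-\frac{1-\delta_1}{\delta_1}}\tilde D(t)$ with $\tilde D(T)=1$, and solving the resulting ODE for $\log\tilde D$. You instead apply It\^o's formula directly to $h\mapsto h^{-\alpha}$, $\alpha=\frac{1-\delta_1}{\delta_1}$, and split $(H^{\bm{\theta}}_t)^{-\alpha}$ into a Dol\'eans--Dade exponential times the exponential of the accumulated drift $\rho_s=\alpha r+\frac{\alpha(\alpha+1)}{2}(\theta^D_s)^2-\alpha\int_E(1-\theta^J_s(y))\nu(\de y)+\int_E\big((\theta^J_s(y))^{-\alpha}-1\big)\nu(\de y)$. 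The two computations are equivalent in substance (your $\rho$ is exactly the integrand of $\log\tilde D$), but they package the work differently: the paper gets martingality for free and must justify the ansatz; you get the explicit decomposition for free and must work to upgrade the local martingale to a true one. Your route also makes transparent a point the paper leaves implicit, namely that $\tilde D(t)$ involves $\bm{\theta}$ on $[t,T]$ and is only a \emph{bona fide} deterministic function when $\bm{\theta}$ is deterministic (as the PIDE ansatz tacitly assumes).

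Two concrete issues. First, a normalization mismatch that a careful execution of your own computation would expose: the product of the Dol\'eans--Dade exponential in \eqref{eq:martingale_process_3} (which starts at $1$) with $\exp\{\int_0^t\rho_s\,\de s\}$ equals $(H^{\bm{\theta}}_t)^{-\alpha}$, but $\exp\{\int_0^t\rho_s\,\de s\}=\tilde D(0)/\tilde D(t)$, not $1/\tilde D(t)$. Equivalently, the conditional expectation $\mathbb{E}[(H^{\bm{\theta}}_T)^{-\alpha}|\mathcal{F}_t]$ equals $\tilde D(0)$ times the right-hand side of \eqref{eq:martingale_process_3}. This constant is harmless downstream (it cancels in $M^{\bm{\theta}}_t=c_0 D(t,H^{\bm{\theta}}_t)/D(0,H^{\bm{\theta}}_0)$) and the slip is already present in the paper's own statement, but you assert the product identity without the constant, which is where it goes missing. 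Second, and more substantively, your justification of the true-martingale property does not work as stated: for a nonnegative local martingale $L$ with $L_0=1$ one always has $\mathbb{E}[L_T]\le 1$, so finiteness of $\mathcal{X}_{\bm{\theta}}$, i.e.\ $\mathbb{E}[(H^{\bm{\theta}}_T)^{-\alpha}]<\infty$, is automatic and yields no uniform integrability; likewise the Novikov condition quoted after Definition \ref{def:set_maket_price_risk} controls $Z^{\bm{\theta}}$ itself, not the exponential with kernel $\big(-\alpha\theta^D,(\theta^J)^{-\alpha}\big)$. A separate integrability hypothesis is genuinely needed here; the paper supplies one, namely $\mathbb{E}\big[\int_0^T\alpha^2(\theta^D_s)^2\,\de s+\int_0^T\int_E\big|(\theta^J_s(y))^{-\alpha}-1\big|\,\nu(\de y)\,\de s\big]<\infty$, and your argument should invoke something of this kind rather than the $L^1$-ness of the terminal value.
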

    \begin{proof}
    See Appendix \ref{app:B_6}.
    \end{proof}
This result allows to characterize $\mathcal{X}_{\bm{\theta}}(x)$, hence $Y_{\theta}$, and consequently the martingale $M^{\bm{\theta}}$, as shown in the proposition below.

\begin{prop}\label{propos:martingale_rapp_theo_1}
    Let $a^D(t)$, and $a^J(t,y)$ be the martingale representation coefficients of $M^{\bm{\theta}}$ depending on $\bm{\theta}\in\tilde{\Theta}$. Then, the dynamics of $M^{\bm{\theta}}$ is given by
    \begin{equation}\label{eq:M_mrt_1}
    \de M^{\bm{\theta}}_{t}=a^D(t)\de W_t+\int_Ea^J(t,y)\left(N\left(\de t,\de y\right)-\nu\left(\de y\right)\de t\right),
    \end{equation}
    where
    \begin{align*}
    a^D(t)&=-M^{\bm{\theta}}_{t-}\frac{1-\delta_1}{\delta_1}\theta^D_s,\\
    a^J(t,y)&=M^{\bm{\theta}}_{t-}\left[\left(\theta^J_s(y)\right)^{-\frac{1-\delta_1}{\delta_1}}-1\right].
    \end{align*}
    \end{prop}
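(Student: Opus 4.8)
\section*{Proof proposal}

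The plan is to show that $M^{\bm{\theta}}$ is, up to a deterministic constant, exactly the martingale $D(\cdot,H^{\bm{\theta}}_\cdot)$ of Proposition~\ref{propos:state_price_decomp_3}, and then to read off its representation coefficients by applying It\^o's formula to the explicit exponential form \eqref{eq:martingale_process_3}. First I would unwind the definition of $Y_{\bm{\theta}}$. Since $\mathcal{X}_{\bm{\theta}}^{-1}(c)$ is a deterministic constant, setting $\kappa:=\left(\mathcal{X}_{\bm{\theta}}^{-1}(c)\right)^{-\frac{1}{\delta_1}}$ we have $Y_{\bm{\theta}}=\kappa\,\big(H^{\bm{\theta}}_{T}\big)^{-\frac{1}{\delta_1}}$, hence
\begin{equation*}
M^{\bm{\theta}}_{t}=\mathbb{E}\left[H^{\bm{\theta}}_{T}Y_{\bm{\theta}}\,\big|\,\mathcal{F}_{t}\right]
=\kappa\,\mathbb{E}\left[\big(H^{\bm{\theta}}_{T}\big)^{1-\frac{1}{\delta_1}}\,\big|\,\mathcal{F}_{t}\right]
=\kappa\,\mathbb{E}\left[\big(H^{\bm{\theta}}_{T}\big)^{-\frac{1-\delta_1}{\delta_1}}\,\big|\,\mathcal{F}_{t}\right],
\end{equation*}
where I used $1-\frac{1}{\delta_1}=-\frac{1-\delta_1}{\delta_1}$. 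Integrability of $H^{\bm{\theta}}_{T}Y_{\bm{\theta}}$ is guaranteed by $\bm{\theta}\in\tilde\Theta$ (indeed $\mathbb{E}[H^{\bm{\theta}}_{T}Y_{\bm{\theta}}]=c_0$ by Lemma~\ref{lem:Lemma_1}(i)), so $M^{\bm{\theta}}$ is a genuine $\mathbb{P}$-martingale.

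Next I would invoke Proposition~\ref{propos:state_price_decomp_3} at the terminal time. Evaluating \eqref{eq:tilde_G_process_1} at $t=T$ collapses every integral over $[T,T]$ and the term $r(T-t)$, giving $\tilde{D}(T)=1$; therefore $\big(H^{\bm{\theta}}_{T}\big)^{-\frac{1-\delta_1}{\delta_1}}=D\big(T,H^{\bm{\theta}}_{T}\big)$. Because $D(\cdot,H^{\bm{\theta}}_\cdot)$ is a martingale, the tower property yields $\mathbb{E}\big[D(T,H^{\bm{\theta}}_{T})\mid\mathcal{F}_{t}\big]=D(t,H^{\bm{\theta}}_{t})$, so $M^{\bm{\theta}}_{t}=\kappa\,D\big(t,H^{\bm{\theta}}_{t}\big)$. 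The dynamics of $M^{\bm{\theta}}$ thus coincide with those of $D$ up to the constant factor $\kappa$, which will cancel once the coefficients are written in terms of $M^{\bm{\theta}}_{t-}$.

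It then remains to differentiate the exponential \eqref{eq:martingale_process_3}. Writing $D(t,H^{\bm{\theta}}_{t})=e^{X_t}$ and applying the It\^o formula for jump-diffusions, the continuous drift of $X$ contributes $-\tfrac12\big(\tfrac{1-\delta_1}{\delta_1}\big)^2(\theta^D_t)^2$ while the quadratic-variation (It\^o) correction $\tfrac12\,e^{X_{t-}}\,d\langle X^c\rangle_t$ contributes $+\tfrac12\big(\tfrac{1-\delta_1}{\delta_1}\big)^2(\theta^D_t)^2$, so these cancel exactly; the diffusive part leaves $-\tfrac{1-\delta_1}{\delta_1}\theta^D_t\,\de W_t$. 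For the jumps, a jump of $X$ of size $-\tfrac{1-\delta_1}{\delta_1}\ln\theta^J_t(y)$ produces a multiplicative factor $e^{\Delta X}-1=\big(\theta^J_t(y)\big)^{-\frac{1-\delta_1}{\delta_1}}-1$, and combining the resulting $N(\de t,\de y)$ integral with the remaining drift $-\int_E\big[(\theta^J_t(y))^{-\frac{1-\delta_1}{\delta_1}}-1\big]\nu(\de y)\,\de t$ reassembles the compensated measure $N(\de t,\de y)-\nu(\de y)\de t$. This gives
\begin{equation*}
\de M^{\bm{\theta}}_{t}=-M^{\bm{\theta}}_{t-}\frac{1-\delta_1}{\delta_1}\theta^D_t\,\de W_t
+M^{\bm{\theta}}_{t-}\int_E\left[\big(\theta^J_t(y)\big)^{-\frac{1-\delta_1}{\delta_1}}-1\right]\left(N(\de t,\de y)-\nu(\de y)\de t\right),
\end{equation*}
which is precisely \eqref{eq:M_mrt_1} with the stated $a^D$ and $a^J$.

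The computation is essentially routine once the reduction is in place; the step I expect to carry the real content is the identification $M^{\bm{\theta}}_{t}=\kappa\,D(t,H^{\bm{\theta}}_{t})$, which hinges on the two facts $\tilde D(T)=1$ and the martingale property of $D$ from Proposition~\ref{propos:state_price_decomp_3}. The only mild delicacy in the final step is verifying the exact cancellation of the two $\tfrac12\big(\tfrac{1-\delta_1}{\delta_1}\big)^2(\theta^D_t)^2$ drift contributions and the correct passage to the compensated Poisson integral, which together confirm that the drift of $M^{\bm{\theta}}$ vanishes, consistent with $M^{\bm{\theta}}$ being a martingale.
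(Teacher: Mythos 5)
Your proposal is correct and follows essentially the same route as the paper's proof: both reduce $M^{\bm{\theta}}_{t}$ to a deterministic constant times $D\left(t,H^{\bm{\theta}}_{t}\right)$ via the decomposition in Proposition \ref{propos:state_price_decomp_3} (the paper makes your $\kappa$ explicit as $c_0/D\left(0,H^{\bm{\theta}}_{0}\right)$), and then apply It\^o's formula to the exponential form of $D$ to read off the coefficients. Your explicit verification of the drift cancellation and of $\tilde D(T)=1$ merely spells out steps the paper leaves implicit.
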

    \begin{proof}
    See Appendix \ref{app:B_2}.
    \end{proof}
    
    To close the loop we characterize both the optimal multiplier and state price density, determining in turn the worst-case martingale measure.  
    For the optimal market price of risk $\hat{\bm{\theta}}\in\tilde{\Theta}$, the corresponding process ${C}^{\hat{\bm{\theta}}}$ defined by equation \eqref{eq:cushion_theta} gives the optimal cushion.

   \begin{thm}\label{teo:thm_1}
    Suppose that there exist a $\hat{\bm{\theta}}\in\tilde{\Theta}$ and a multiplier $\hat{m}^{\hat{\bm{\theta}}}\in\mathcal{M}$ that satisfy
    \begin{equation}\label{eq:system_sol_1}
    \begin{cases}
    \begin{aligned}
    &m_t\sigma=-\dfrac{\theta^D_t}{\delta_1},\\
    &m_t\gamma(t,y)=\left(\theta^J_t(y)\right)^{-\frac{1}{\delta_1}}-1,
    \end{aligned}
    \end{cases}
    \;
    \end{equation}
    for all $y\in E$. Assume further that the SDE depicted in equation \eqref{eq:cushion_process} has a solution for $m=\hat m^{\hat{\bm{\theta}}}$. Then $\hat m^{\hat{\bm{\theta}}}$ is a solution to the optimization problem \eqref{eq:opt_problem_2}. The corresponding cushion process is given by ${C}_{t}={C}^{\hat{\bm{\theta}}}_{t}$ $\mathbb{P}-$a.s., for all $t\in[0,T]$, where ${C}^{\hat{\bm{\theta}}}$ is defined in equation \eqref{eq:cushion_theta}.
    \end{thm}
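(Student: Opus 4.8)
The plan is to exploit the duality already assembled in \eqref{eq:inf_problem}: for every admissible multiplier $m$ and every $\bm{\theta}\in\tilde{\Theta}$, Lemma \ref{lem:Lemma_1}$(iii)$ gives $\mathbb{E}[C_T^{1-\delta_1}/(1-\delta_1)]\leq\mathbb{E}[Y_{\bm{\theta}}^{1-\delta_1}/(1-\delta_1)]$. Hence it suffices to produce one pair $(\hat{\bm{\theta}},\hat m^{\hat{\bm{\theta}}})$ for which the primal objective attains this upper bound, i.e. for which the cushion generated by $\hat m^{\hat{\bm{\theta}}}$ has terminal value $Y_{\hat{\bm{\theta}}}$ $\mathbb{P}$-a.s.; the inequality of Lemma \ref{lem:Lemma_1}$(iii)$, applied with $\bm{\theta}=\hat{\bm{\theta}}$ to every competitor $m$, then forces $\hat m^{\hat{\bm{\theta}}}$ to be optimal and $\hat{\bm{\theta}}$ to realise the infimum in \eqref{eq:opt_martingale_measure}. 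The whole argument therefore reduces to identifying the candidate process $C^{\hat{\bm{\theta}}}$ of \eqref{eq:cushion_theta} with the cushion actually produced by the multiplier $\hat m^{\hat{\bm{\theta}}}$ of \eqref{eq:system_sol_1}.

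I would first record the terminal value. Since $M^{\hat{\bm{\theta}}}_T=\mathbb{E}[H^{\hat{\bm{\theta}}}_TY_{\hat{\bm{\theta}}}\mid\mathcal{F}_T]=H^{\hat{\bm{\theta}}}_TY_{\hat{\bm{\theta}}}$ by \eqref{eq:process_m}, the definition \eqref{eq:cushion_theta} yields $C^{\hat{\bm{\theta}}}_T=M^{\hat{\bm{\theta}}}_T/H^{\hat{\bm{\theta}}}_T=Y_{\hat{\bm{\theta}}}$, and $C^{\hat{\bm{\theta}}}_0=M^{\hat{\bm{\theta}}}_0/H^{\hat{\bm{\theta}}}_0=c_0$. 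The core step is to derive the SDE satisfied by $C^{\hat{\bm{\theta}}}=M^{\hat{\bm{\theta}}}/H^{\hat{\bm{\theta}}}$. Applying the jump-diffusion quotient rule, using the dynamics of $M^{\hat{\bm{\theta}}}$ from Proposition \ref{propos:martingale_rapp_theo_1} together with the dynamics of $H^{\hat{\bm{\theta}}}$, I would compute the diffusive and jump coefficients of $C^{\hat{\bm{\theta}}}$. The continuous martingale part combines the $\mathrm{d}W$-coefficient $a^D(t)/H^{\hat{\bm{\theta}}}_{t-}=-C^{\hat{\bm{\theta}}}_{t-}\frac{1-\delta_1}{\delta_1}\theta^D_t$ coming from $M^{\hat{\bm{\theta}}}$ with the $-C^{\hat{\bm{\theta}}}_{t-}\theta^D_t$ coming from $1/H^{\hat{\bm{\theta}}}$, giving the total $-C^{\hat{\bm{\theta}}}_{t-}\theta^D_t/\delta_1$; at a jump with mark $y$ the multiplicative factors $(\theta^J_t(y))^{-(1-\delta_1)/\delta_1}$ of $M^{\hat{\bm{\theta}}}$ and $(\theta^J_t(y))^{-1}$ of $1/H^{\hat{\bm{\theta}}}$ combine into $(\theta^J_t(y))^{-1/\delta_1}$, so the relative jump of $C^{\hat{\bm{\theta}}}$ is $(\theta^J_t(y))^{-1/\delta_1}-1$.

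Matching these coefficients against the cushion dynamics \eqref{eq:cushion_process}, whose diffusive and jump coefficients are $m_t\sigma C_{t-}$ and $m_t\gamma(t,y)C_{t-}$, produces exactly the two identities of system \eqref{eq:system_sol_1} and so identifies the multiplier $\hat m^{\hat{\bm{\theta}}}$. To upgrade the matching of the martingale coefficients to an identity of processes I would argue by uniqueness: by construction $M^{\hat{\bm{\theta}}}=C^{\hat{\bm{\theta}}}H^{\hat{\bm{\theta}}}$ is a true $\mathbb{P}$-martingale (its expectation is $c_0$ by Lemma \ref{lem:Lemma_1}$(i)$), and the cushion $C$ driven by $\hat m^{\hat{\bm{\theta}}}$ likewise makes $CH^{\hat{\bm{\theta}}}$ a local $\mathbb{P}$-martingale; imposing the martingale property pins down the drift of $C^{\hat{\bm{\theta}}}$ to coincide with $[r+\hat m^{\hat{\bm{\theta}}}_t(\mu-r)]C^{\hat{\bm{\theta}}}_{t-}$, where the no-arbitrage relation \eqref{eq:No_Arb_Cond1} is exactly what reconciles the drift obtained from Itô's formula with the one in \eqref{eq:cushion_process}. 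Since $C^{\hat{\bm{\theta}}}$ and the cushion generated by $\hat m^{\hat{\bm{\theta}}}$ then solve the same SDE \eqref{eq:cushion_process} with the same initial value $c_0$, and such a solution is assumed to exist and is unique, they coincide, i.e. $C_t=C^{\hat{\bm{\theta}}}_t$ $\mathbb{P}$-a.s. on $[0,T]$.

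Finally, the terminal identity $C_T=Y_{\hat{\bm{\theta}}}$ together with Lemma \ref{lem:Lemma_1}$(iii)$ shows $\hat m^{\hat{\bm{\theta}}}$ maximises $\mathbb{E}[C_T^{1-\delta_1}/(1-\delta_1)]$ over $\mathcal{M}(c_0)$, hence solves the concavified problem \eqref{eq:opt_problem_3}; since by Lemma \ref{lem:2} the optimal terminal cushion lies in the region $\{x\geq\hat c(G)\}$ where $\tilde U^{con}=\tilde U$, the same multiplier solves the original S-shaped problem \eqref{eq:opt_problem_2}. I expect the genuine difficulty to sit entirely in the core step: carrying out the jump-diffusion quotient rule for $M^{\hat{\bm{\theta}}}/H^{\hat{\bm{\theta}}}$ with the correct treatment of the predictable $t-$ values and the jump compensators, and in particular verifying that the drift delivered by Itô's formula matches $[r+\hat m^{\hat{\bm{\theta}}}_t(\mu-r)]C^{\hat{\bm{\theta}}}_{t-}$ precisely on account of \eqref{eq:No_Arb_Cond1} --- the diffusion and jump matchings are routine book-keeping, whereas the drift reconciliation is where the no-arbitrage condition must genuinely be used.
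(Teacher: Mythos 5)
Your proposal is correct and follows essentially the same route as the paper's own proof: apply It\^{o}'s formula to $C^{\hat{\bm{\theta}}}=M^{\hat{\bm{\theta}}}/H^{\hat{\bm{\theta}}}$, match the diffusion and jump coefficients to system \eqref{eq:system_sol_1}, reconcile the drift via the no-arbitrage condition \eqref{eq:No_Arb_Cond1} so that $C^{\hat{\bm{\theta}}}$ solves the cushion SDE \eqref{eq:cushion_process}, and then invoke Lemma \ref{lem:Lemma_1}$(iii)$ with $C_T=Y_{\hat{\bm{\theta}}}$ to conclude optimality. The only (harmless) difference is presentational: you make explicit the terminal identification $C^{\hat{\bm{\theta}}}_T=Y_{\hat{\bm{\theta}}}$ and the final passage back from the concavified problem to the original S-shaped problem via Lemma \ref{lem:2}, which the paper leaves implicit.
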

    \begin{proof}
    See Appendix \ref{app:B_3}.    
    \end{proof}    

We remark that to obtain equivalence between the solution of the dual problem and that of the primal problem we rely on Theorem 2 of \cite{michelbrink2012martingale}. 

\medskip
    
Thanks to condition \eqref{eq:No_Arb_Cond1}, we can characterize the optimal multiplier in terms of the market parameters only. That is, if the equation  
    \begin{equation} \label{eq:absence_arb}
    \mu-r-\sigma^2 \delta_1 m_t+\int_E\gamma(t,y) \left[m_{t} \gamma(t,y)+1\right]^{-\delta_1} \nu(\de y)=0,
    \end{equation}
    has a unique solution $\hat{m}^{\hat{\bm{\theta}}}_t$ for every $t\in[0,T]$, then $\hat{m}^{\bm{\hat{\theta}}}_t$ is the optimal multiplier. Note that \eqref{eq:absence_arb} is well defined only if $m_{t-}\gamma(t,y)>-1$. This condition implies in particular that the cushion is strictly positive, which agrees with the result in Lemma \ref{lem:2}.\\
    Next, we discuss specific conditions under which equation \eqref{eq:absence_arb} has a unique solution and hence the optimal multiplier is well-defined. For the sake of simplicity, we further assume that the jump size is time-independent, 
    that is $\gamma(t,y)=\gamma(y)$ for every $t\in[0,T]$. In this case, the Girsanov kernel $\bm{\theta}$ with $\theta^D_t:=\theta^D$ and $\theta^J_t(y):=\theta^J(y)$ satisfies a simplified version of the condition \eqref{eq:absence_arb} which is given by
    \begin{equation}\label{eq:Girsanov_kernel_Levy_process}
    \mu-r+\sigma\theta^D+\int_E
    \gamma(y)\theta^J(y)\nu\left(\de y\right)=0.
    \end{equation}
    The optimal multiplier is no longer time-varying and satisfies the following conditions
    \begin{equation}\label{eq:system}
    \begin{cases}
    \begin{aligned}
    &m\sigma=-\dfrac{\theta^D}{\delta_1},\\   
    &m\gamma(y)=\left(\theta^J(y)\right)^{-\frac{1}{\delta_1}}-1.
    \end{aligned}
    \end{cases}
    \end{equation}
    Taking condition \eqref{eq:Girsanov_kernel_Levy_process} on $\hat{\bm{\theta}}$ into account, we get that the optimal multiplier $\hat m^{\hat{\bm{\theta}}}$ can be characterized as the solution of the equation   
    \begin{equation}\label{eq:optimal_condition_mult}
    \mu-r-\delta_1\sigma^2m+\int_E
    \frac{\gamma(y)}{\left(1+\gamma(y)m\right)^{\delta_1}}\nu\left(\de y\right)=0 .  
    \end{equation}
    We discuss the sufficient conditions for the existence and uniqueness of the solution of equation \eqref{eq:optimal_condition_mult}. Let $\gamma(E)\subseteq[-1,+\infty)$ be the subset in which all jumps $\gamma\left(\cdot\right)$ lie. Furthermore, denote by $\phi_1=\inf_{y\in\mathbb{R}\setminus\left\lbrace 0\right\rbrace}\gamma(y)$ and $\phi_2=\sup_{y\in\mathbb{R}\setminus\left\lbrace 0\right\rbrace}\gamma(y)$, the infimum and the supremum of the set $\Phi$, respectively. 
    \begin{prop}\label{prop:existence_uniqueness}
    Assume that the integrals
    \begin{equation}
    \int_E
    \gamma(y)\left(1-\frac{\gamma(y)}{\phi_1}\right)^{-\delta_1}\nu(\de y) \mbox{ and }\int_E
    \gamma(y)\left(1-\frac{\gamma(y)}{\phi_2}\right)^{-\delta_1}\nu(\de y) 
    \end{equation}
    are well-defined or, at most, are $\pm\infty$. If the model parameters are such that
    \begin{align}
    \label{eq:cond_1}\mu-r+\dfrac{\delta_1\sigma^2}{\phi_2}+\int_E
    \gamma(y)\left(1-\frac{\gamma(y)}{\phi_2}\right)^{-\delta_1}\nu\left(\de y\right)&\geq 0,\\
    \label{eq:cond_2}\mu-r+\dfrac{\delta_1\sigma^2}{\phi_1}+\int_E
    \gamma(y)\left(1-\frac{\gamma(y)}{\phi_1}\right)^{-\delta_1}\nu\left(\de y\right)&\leq 0,
    \end{align}
    then there exists a unique $\hat m^{\hat{\bm{\theta}}}$ that solves equation \eqref{eq:optimal_condition_mult}. Moreover: 
    \begin{itemize}
    \item[(i)] If $-1\le\phi_1,\,\phi_2<0$, then $\hat m^{\hat{\bm{\theta}}}\in(-\infty, -1/\phi_1)$;  
    \item[(ii)] If $0<\phi_1,\,\phi_2<+\infty$, then $\hat m^{\hat{\bm{\theta}}}\in(-1/\phi_2,+\infty)$;  
    \item[(iii)] If $-1\leq\phi_1<0<\phi_2<+\infty$, then $\hat m^{\hat{\bm{\theta}}}\in\left(-1/\phi_2,-1/\phi_1\right)$.
    \end{itemize}
    \end{prop}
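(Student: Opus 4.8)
The plan is to recast \eqref{eq:optimal_condition_mult} as the vanishing of a single scalar function and then exploit strict monotonicity together with the intermediate value theorem. Define, on the admissible range of multipliers,
\[
f(m):=\mu-r-\delta_1\sigma^2 m+\int_E\frac{\gamma(y)}{\left(1+\gamma(y)m\right)^{\delta_1}}\nu(\de y),
\]
so that $\hat m^{\hat{\bm{\theta}}}$ solves \eqref{eq:optimal_condition_mult} if and only if $f(\hat m^{\hat{\bm{\theta}}})=0$. Since the integrand is defined only where $1+\gamma(y)m>0$ for all $y$ in the support of $\nu$, the natural domain $\mathcal D$ of $f$ is an open interval whose endpoints are dictated by $\phi_1$ and $\phi_2$: in case $(i)$ (all jumps negative) one gets $\mathcal D=(-\infty,-1/\phi_1)$; in case $(ii)$ (all jumps positive) $\mathcal D=(-1/\phi_2,+\infty)$; and in the mixed case $(iii)$, $\mathcal D=(-1/\phi_2,-1/\phi_1)$. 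These are exactly the intervals appearing in $(i)$–$(iii)$, so it suffices to locate a unique root of $f$ inside $\mathcal D$.

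Next I would establish that $f$ is strictly decreasing on $\mathcal D$. On every compact subinterval of $\mathcal D$ the quantity $1+\gamma(y)m$ is bounded away from $0$ and from above, so, using $\int_E\gamma^2(y)\nu(\de y)<\infty$ (guaranteed by the admissibility condition \eqref{eq:integrability_cond_mult} for any nonzero multiplier), differentiation under the integral sign is legitimate and yields
\[
f'(m)=-\delta_1\sigma^2-\delta_1\int_E\frac{\gamma^2(y)}{\left(1+\gamma(y)m\right)^{\delta_1+1}}\nu(\de y)<0,
\]
because $\delta_1\in(0,1)$, $\sigma>0$, and the integrand is nonnegative on $\mathcal D$. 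Hence $f$ is continuous and strictly monotone, which already guarantees \emph{at most} one root.

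It then remains to show that $f$ changes sign across $\mathcal D$. I would compute the one-sided limits at the endpoints by monotone (resp.\ dominated) convergence, the hypothesis that the two boundary integrals are well-defined in $[-\infty,+\infty]$ being precisely what licenses passing to the limit inside the integral. Substituting $m=-1/\phi_2$ and $m=-1/\phi_1$ into $f$ reproduces exactly the left-hand sides of \eqref{eq:cond_1} and \eqref{eq:cond_2}, so that
\[
\lim_{m\downarrow -1/\phi_2}f(m)=\mu-r+\frac{\delta_1\sigma^2}{\phi_2}+\int_E\gamma(y)\left(1-\frac{\gamma(y)}{\phi_2}\right)^{-\delta_1}\nu(\de y)\ge 0
\]
by \eqref{eq:cond_1}, and likewise $\lim_{m\uparrow -1/\phi_1}f(m)\le 0$ by \eqref{eq:cond_2}; in cases $(i)$ and $(ii)$ the endpoint not covered by a condition is $\mp\infty$, where a direct computation (the linear term dominating while the integral term vanishes by dominated convergence) gives $+\infty$ at the left, resp.\ $-\infty$ at the right, consistently with the required sign. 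With a nonnegative limit at the left endpoint and a nonpositive limit at the right endpoint, continuity and the intermediate value theorem produce a zero of $f$ in $\mathcal D$, while strict monotonicity forces it to be unique; since the root lies in $\mathcal D$, the localizations $(i)$–$(iii)$ follow at once.

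The main obstacle I anticipate is the analysis near the endpoints of $\mathcal D$, where $1+\gamma(y)m\to 0$ for the extremal jumps and the integrand becomes singular: one must justify taking the limits defining $f$ at $-1/\phi_1$ and $-1/\phi_2$ inside the integral and identify them with the expressions in \eqref{eq:cond_1}–\eqref{eq:cond_2}, which is exactly why the proposition assumes those boundary integrals to be well-defined in the extended reals. A second, minor, point is the degenerate situation in which one of \eqref{eq:cond_1}–\eqref{eq:cond_2} holds with equality: then the sign change is attained only in the limit at an excluded endpoint, and one should either read the relevant condition with a strict inequality or note that the corresponding endpoint limit is in fact $\pm\infty$ (as happens automatically in cases $(i)$ and $(ii)$), so that an interior root still exists. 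Everything else is a routine application of monotonicity and the intermediate value theorem.
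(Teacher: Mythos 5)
Your proposal is correct and follows essentially the same route as the paper's own proof: define the scalar function $g(m)=\mu-r-\delta_1\sigma^2 m+\int_E\gamma(y)\left(1+\gamma(y)m\right)^{-\delta_1}\nu(\de y)$, observe that $g'<0$ on the interval determined by $\phi_1,\phi_2$, and conclude existence and uniqueness of the root from the sign conditions \eqref{eq:cond_1}--\eqref{eq:cond_2} at the endpoints via the intermediate value theorem. Your write-up is in fact more careful than the paper's (which is a three-line sketch), since you additionally justify differentiation under the integral sign, the identification of the endpoint limits with the expressions in \eqref{eq:cond_1}--\eqref{eq:cond_2}, and the degenerate case where one of these conditions holds with equality.
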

    \begin{proof}
    See Appendix \ref{app:B_7}.
    \end{proof}
    \begin{remark}\label{rem:opt_multiplier_features}
    In the cases (i), (ii), and (iii) of Proposition \ref{prop:existence_uniqueness} the optimal multiplier is such that the corresponding cushion is non-negative in the whole time interval $[0,T]$, as the condition $1+\hat{m}^{\hat \theta}\gamma(y)>0$ is satisfied. Hence, gap risk does not occur, even if $S$ has downward jumps. 
    Case (i) corresponds to a model where $S$ has only negative jumps, and these jumps may be unbounded. Here the optimal multiplier may be negative, which implies negative equity exposure, equivalently short-selling the asset. From a financial point of view, if the stock price dynamics presents a downward trend, generated for instance by many negative jumps and a small or negative excess return, then short-selling the asset may turn losses into profits. In case $(ii)$ the stock price has only positive jumps, and the range of values of the multiplier that does not lead to negative cushion is bounded below by  $-1/\phi_2$, hence $\hat m_{\hat \theta}$ may potentially be negative. In particular if $\phi_2=+\infty$, there is always positive equity exposure. Note that, here that we have a different PPI mechanism that does not impose the exposure to be equal to zero when the value of the portfolio equals the floor,  hence cash-lock does not occur. 
    In case (iii) the stock price $S$ has both positive and negative jumps, but the negative jumps are bounded. Form a modeling perspective this corresponds to assume that plausibly, the stock cannot suddenly loose more than a certain finite percentage of its value. The optimal multiplier takes values over a bounded interval. If $\phi_2=+\infty$, $\hat m_{\hat \theta}$ is positive, which guarantees positive equity exposure.    
    \end{remark}
    \section{Numerical analysis}\label{sect:numeric}
    In this section, we propose a few numerical experiments where the results of Proposition \ref{prop:existence_uniqueness} are applied to compute the optimal multiplier $\hat m^{\hat{\bm{\theta}}}$. 
    We consider different specifications of the jump size distributions and we also run a sensitivity analysis on model parameters. 
    We compare three cases: $(1)$ constant jump size; $(2)$ double exponential distribution of the jump sizes, namely, the Kou's model; $(3)$ Gaussian distribution of the jump sizes, i.e.,  the Merton's model.

    \subsection{Optimal multiplier under a jump-diffusion model with constant jump size}
    In our first numerical example we assume that the dynamics of the risky asset price process follows a jump-diffusion and presents only downward jumps. In particular, we assume that  \begin{equation}\label{eq:const_negative_jump_size}
    \frac{\de S_{t}}{S_{t-}}=\mu\de t+\sigma\de W_t+\tilde{\gamma}\de N(t),
    \end{equation}
    where $N$ is an homogeneous Poisson process with intensity rate $\lambda>0$, and $\tilde{\gamma}\in[-1,0)$ is the constant negative jump size. This example serves as a simplification of the general case, resulting in a readily computable multiplier. However, it has the following interpretation: $\tilde \gamma$ can be viewed as the largest plausible loss, which could be estimated for instance based on some risk measure, and hence, the hedging problem of the guaranteed amount is studied under a prudential perspective. The optimal multiplier $\hat m^{\hat{\bm{\theta}}}$ is determined in the corollary below.
    \begin{cor}\label{cor:constant_jump_size}
    In a market model where the risky asset price process is described by equation \eqref{eq:const_negative_jump_size}, the optimal multiplier $\hat m^{\hat{\bm{\theta}}}\in\left(-\infty,- 1/\tilde{\gamma}\right)$ is the unique solution of 
    \begin{equation}\label{eq:optimal_mult_constant_jump_size}
    \mu-r-\delta_1\sigma^2m+\lambda\tilde{\gamma}\left(1+\tilde{\gamma}m\right)^{-\delta_1}=0.
    \end{equation}
    \end{cor}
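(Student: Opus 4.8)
The plan is to read Corollary \ref{cor:constant_jump_size} as the degenerate specialization of Proposition \ref{prop:existence_uniqueness} in which the jump size is deterministic, and then to establish existence and uniqueness by a direct one-dimensional analysis rather than by substituting into the sufficient conditions \eqref{eq:cond_1}--\eqref{eq:cond_2}. First I would observe that the dynamics \eqref{eq:const_negative_jump_size} is the case $\gamma(t,y)\equiv\tilde{\gamma}$ of the general model, with the Poisson random measure collapsing to a homogeneous Poisson process of intensity $\lambda$, i.e. $\nu$ concentrated on the single atom $\tilde{\gamma}$ with mass $\lambda$. Consequently the integral term in the optimality condition \eqref{eq:optimal_condition_mult} reduces to $\lambda\tilde{\gamma}(1+\tilde{\gamma}m)^{-\delta_1}$, so that \eqref{eq:optimal_condition_mult} becomes exactly \eqref{eq:optimal_mult_constant_jump_size}. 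Since here $\phi_1=\phi_2=\tilde{\gamma}\in[-1,0)$, we are in case (i) of Proposition \ref{prop:existence_uniqueness}, which predicts the admissible interval $(-\infty,-1/\tilde{\gamma})$; this is also precisely the set on which $1+\tilde{\gamma}m>0$, so that the left-hand side of \eqref{eq:optimal_mult_constant_jump_size} is well defined and smooth.

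Next I would analyse the scalar function
\begin{equation*}
f(m):=\mu-r-\delta_1\sigma^2 m+\lambda\tilde{\gamma}\left(1+\tilde{\gamma}m\right)^{-\delta_1},\qquad m\in\left(-\infty,-1/\tilde{\gamma}\right),
\end{equation*}
directly. The two boundary limits have opposite signs: as $m\to-\infty$ one has $1+\tilde{\gamma}m\to+\infty$, so the jump term vanishes while $-\delta_1\sigma^2 m\to+\infty$, whence $f(m)\to+\infty$; as $m\uparrow-1/\tilde{\gamma}$ one has $1+\tilde{\gamma}m\downarrow 0^{+}$, so $(1+\tilde{\gamma}m)^{-\delta_1}\to+\infty$ and, because $\tilde{\gamma}<0$, $f(m)\to-\infty$. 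Continuity of $f$ on the open interval together with the intermediate value theorem then yields at least one root in $(-\infty,-1/\tilde{\gamma})$.

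For uniqueness I would differentiate, obtaining
\begin{equation*}
f'(m)=-\delta_1\sigma^2-\lambda\delta_1\tilde{\gamma}^2\left(1+\tilde{\gamma}m\right)^{-\delta_1-1}.
\end{equation*}
On the admissible interval $1+\tilde{\gamma}m>0$, and since $\sigma^2>0$, $\lambda>0$, $\delta_1\in(0,1)$ and $\tilde{\gamma}^2>0$, both summands are strictly negative; hence $f$ is strictly decreasing and the root is unique. Finally, invoking Theorem \ref{teo:thm_1} under the standing assumption that the cushion SDE \eqref{eq:cushion_process} admits a solution for this multiplier, this unique root is precisely the optimal multiplier $\hat{m}^{\hat{\bm{\theta}}}$, and it lies in $(-\infty,-1/\tilde{\gamma})$ by construction.

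The step I expect to require the most care is not the computation but the justification for bypassing the general sufficient conditions of Proposition \ref{prop:existence_uniqueness}: when $\phi_1=\phi_2$ the two integrals appearing in \eqref{eq:cond_1}--\eqref{eq:cond_2} degenerate, since the factor $(1-\gamma(y)/\phi_i)^{-\delta_1}$ is then evaluated at the atom itself and diverges, so the conditions cannot be read off mechanically. The boundary behaviour must instead be extracted from the explicit limits above, which recover the sign pattern ($f\to+\infty$ and $f\to-\infty$) that those conditions are designed to guarantee in the non-degenerate case. Once this point is made explicit, the remainder is elementary.
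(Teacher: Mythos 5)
Your proposal is correct and follows essentially the same route as the paper's own proof: reduce \eqref{eq:optimal_condition_mult} to the scalar equation \eqref{eq:optimal_mult_constant_jump_size}, show the function tends to $+\infty$ as $m\to-\infty$ and to $-\infty$ as $m\uparrow-1/\tilde{\gamma}$, and conclude by strict monotonicity of the derivative on $(-\infty,-1/\tilde{\gamma})$. Your remark that the general conditions \eqref{eq:cond_1}--\eqref{eq:cond_2} degenerate here and must be replaced by the explicit boundary limits is a fair observation (the paper does exactly this computation without comment), and your derivative $f'(m)=-\delta_1\sigma^2-\lambda\delta_1\tilde{\gamma}^2(1+\tilde{\gamma}m)^{-\delta_1-1}$ is in fact the correct one.
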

    \begin{proof}
    See Appendix \ref{app:C_1}.
    \end{proof}
    We fix the model parameters according to Table \ref{tab:cons_jump_size}. Then, we vary each of them (one by one) to run a sensitivity analysis on the optimal multiplier.
    \begin{table}[ht]
    \centering
    \begin{tabular}{cccccc} 
    \hline
    $\mu-r$  & $\sigma$ & $\lambda$ & $\tilde{\gamma}$ & $\delta_1$ & $\hat{m}^{\hat{\bm{\theta}}}$\\ 
    \hline
    $0.20$   & $0.30$     & $11$     & $-0.03$ & $0.60$ & $-2.18$\\
    \hline
    \end{tabular}
    \caption{Parameters for the jump-diffusion model with negative constant jump size}\label{tab:cons_jump_size}
    \end{table}
    The numerical analysis is relatively easy to obtain for the constant jump size, due to the quasi-explicit form of the multiplier. Nevertheless, they provide a useful comparison case study. Figure \ref{fig:constant} shows that the multiplier is decreasing with respect to the jump intensity, and it is increasing in the excess return of the underlying risky asset, the volatility of the underlying risky asset, the risk aversion level and the jump size. What is interesting and deviates from the classical results on PPI in the diffusion setting is that the multiplier may be negative. Since the optimal multiplier does not allow for a negative cushion (see Remark \ref{rem:opt_multiplier_features}), then having $\hat m^{\hat{\bm{\theta}}}<0$ implies that the fund manager should short-sell the risky asset. This behaviour occurs for low values of the volatility of the risk asset and the investor's risk aversion. When such two parameters increase, the multiplier increases, reducing the short-selling position and the overall riskiness of the strategy. More interestingly, short-selling also occurs when the intensity of negative jumps increases, the jump size (in absolute value) becomes prohibitive, or the excess return gets close to zero (or negative). These two latter effects push down the value of the risky asset, and hence, short-selling the asset may contrast with such a downward trend. 
    \begin{figure}[ht!]
    \centering
    \includegraphics[width=0.32\linewidth]{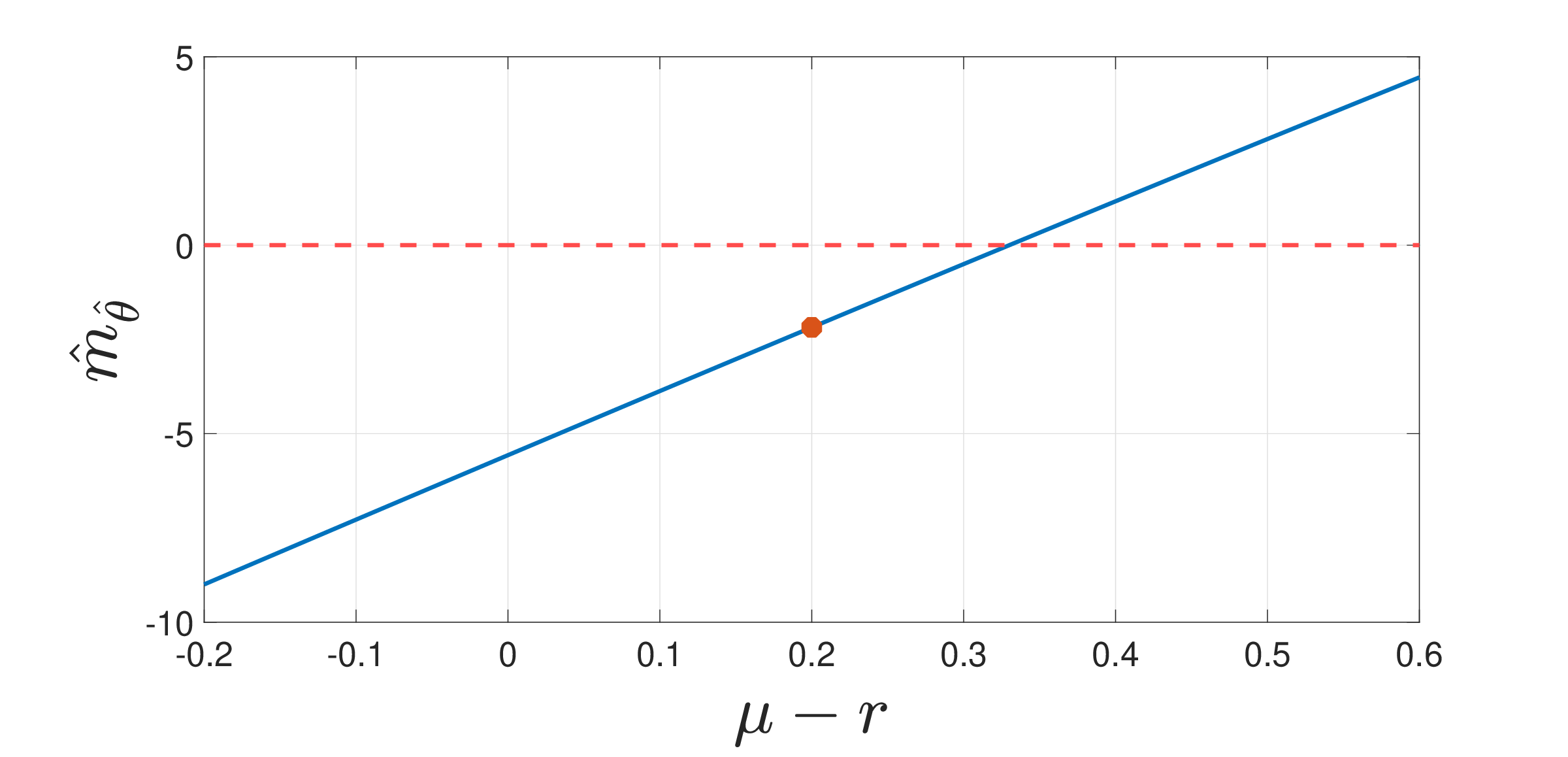}
    \hspace{0.001\textwidth}
    \includegraphics[width=0.32\linewidth]{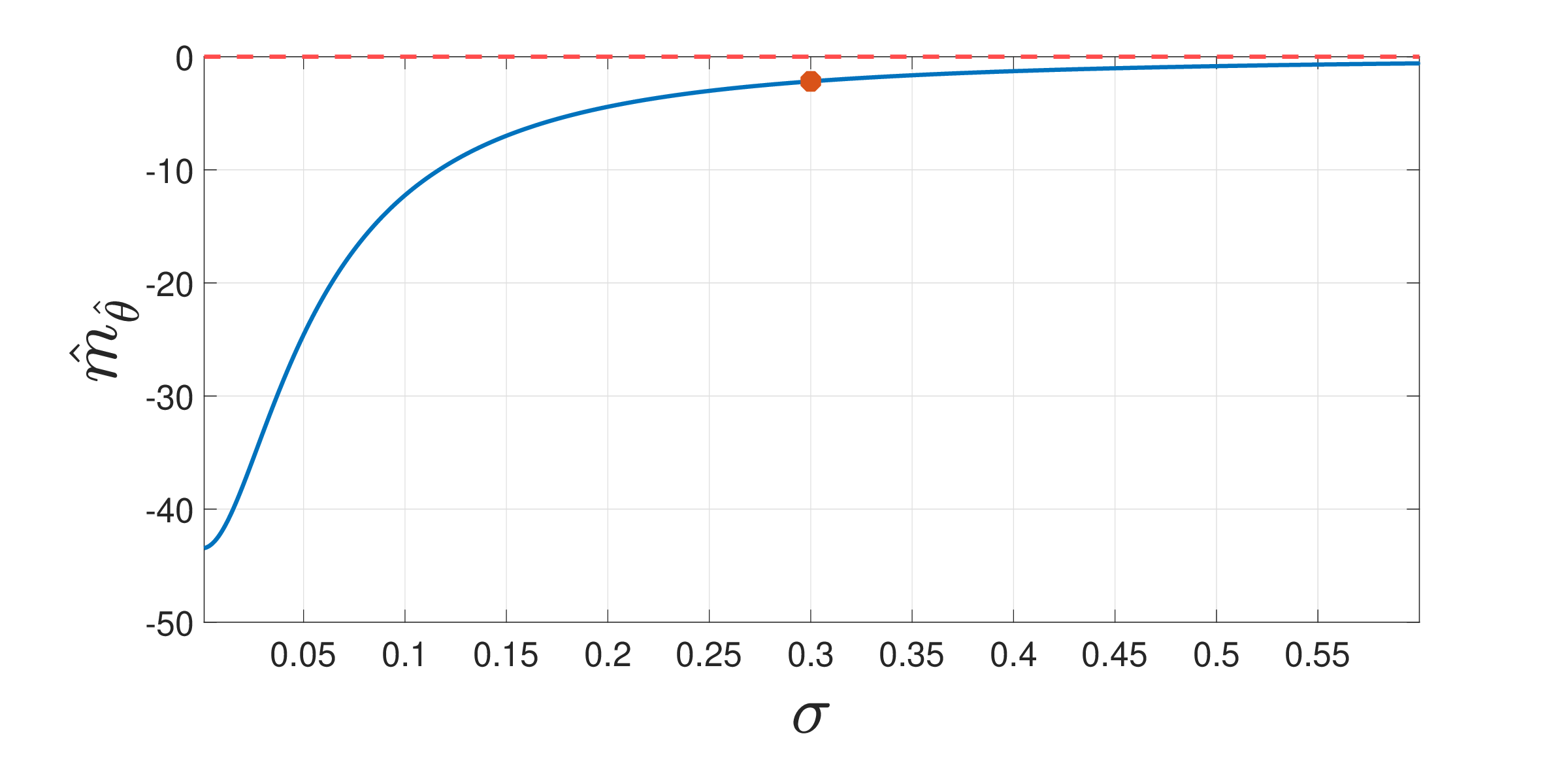}
    \hspace{0.001\textwidth}
    \includegraphics[width=0.32\linewidth]{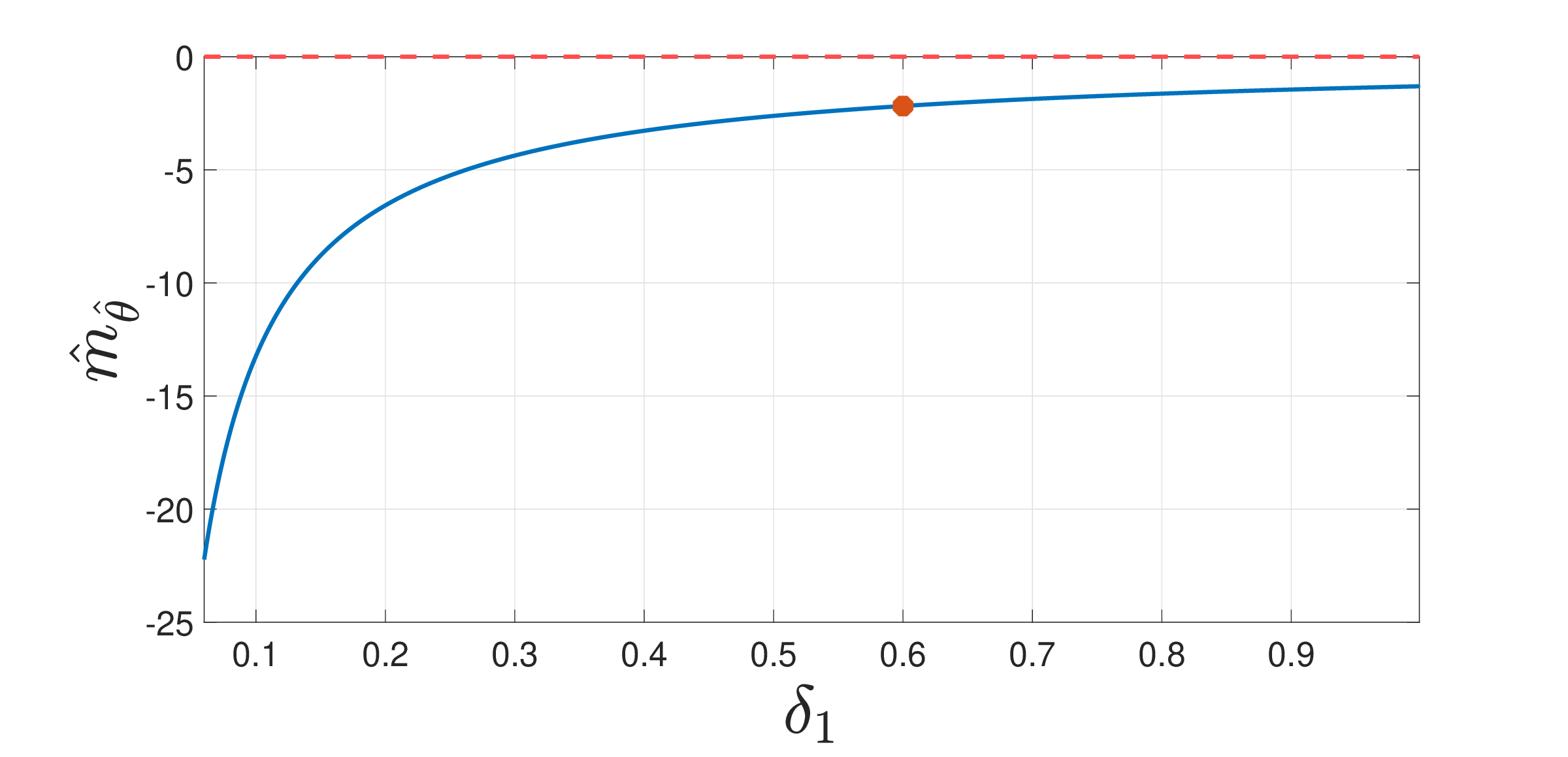}
    \vspace{0.6cm} 
    \centering
    \includegraphics[width=0.32\linewidth]{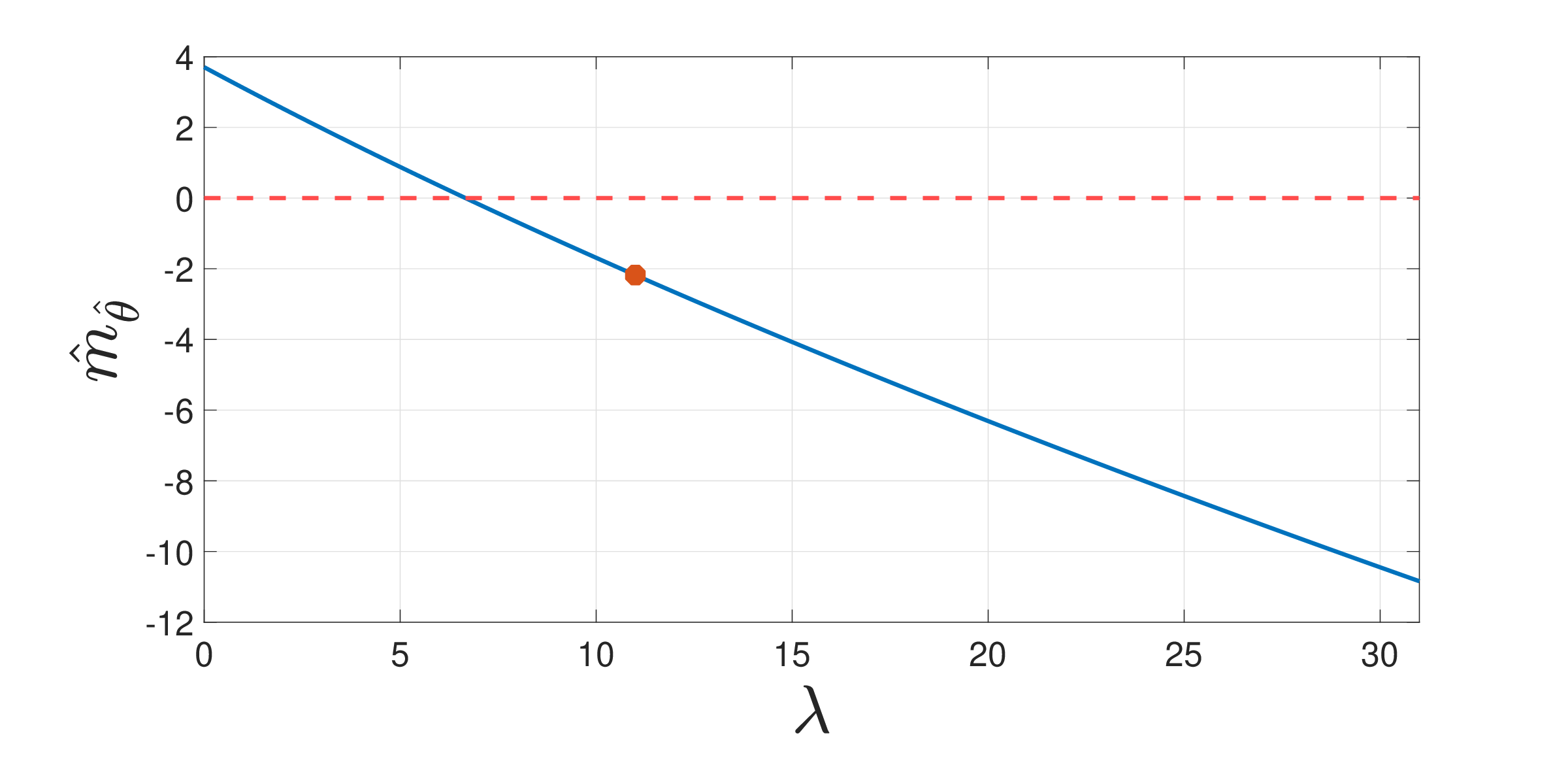}
    \hspace{0.01\textwidth}
    \includegraphics[width=0.32\linewidth]{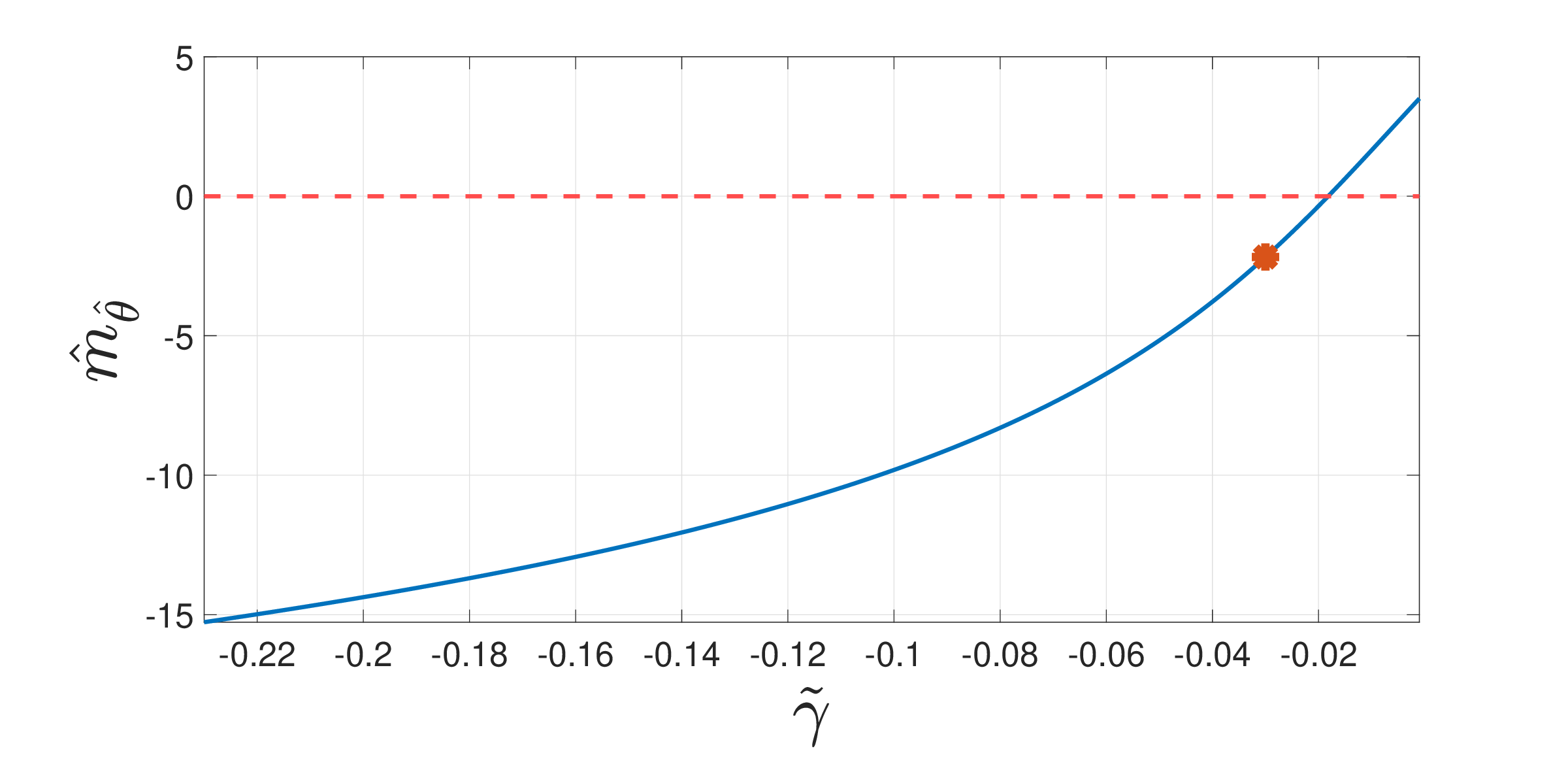}
    \caption{Sensitivity analysis for the optimal multiplier with respect to the jump-diffusion model with constant negative jump size. The blue line is the value of the multiplier for the different values of the parameter indicated under each panel and the red dot corresponds to the value of the multiplier under the parameter configuration in Table \ref{tab:cons_jump_size}.}
    \label{fig:constant}
    \end{figure}
    We have also represented the value function with respect to time to maturity and the initial cushion in {left chart of} Figure \ref{fig:Value_fun_constant_and_traj_constant}. {In particular, it is increasing with respect to both time and the initial value of the cushion and concave}. Most interestingly, we have numerically verified the theoretical result that the gap risk does not occur.
    Indeed, {in right chart of} Figure \ref{fig:Value_fun_constant_and_traj_constant}  we have plotted the dynamic portfolio values $q_0(t)$, $q_{99}(t)$ between $\mathbb{P}( V_{t}\le q_0(t))=0$ and $\mathbb{P}( V_{t}\le q_{99}(t))=0.99$ for every $t \le T$. In particular, the zero quantile $q_0(t)$ is always above the floor value, as indicated in the red dashed line. 
    \begin{figure}[th!]
    \centering
    \includegraphics[width=0.49\linewidth]{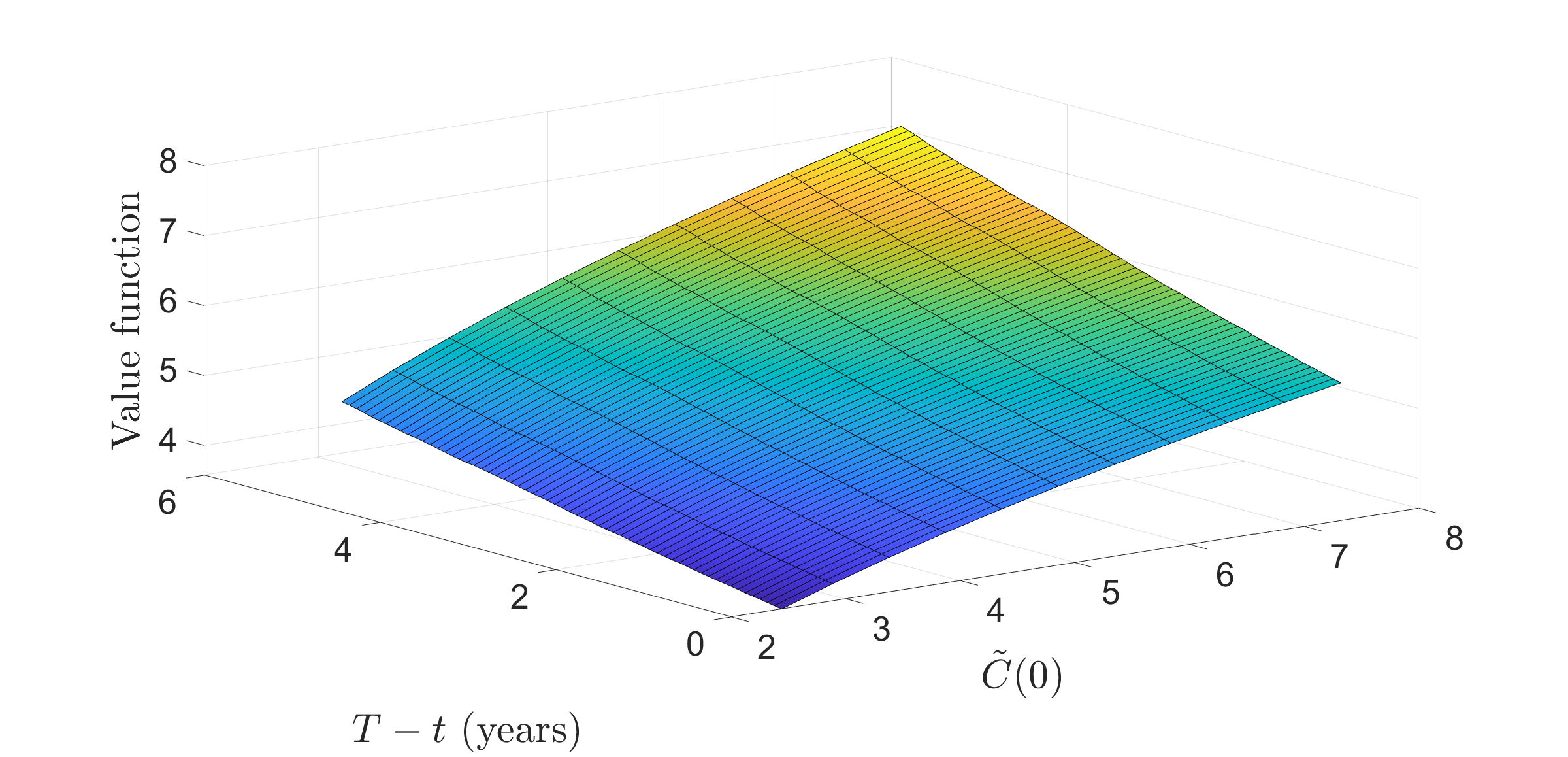}
    \hfill 
    \includegraphics[width=0.49\linewidth]{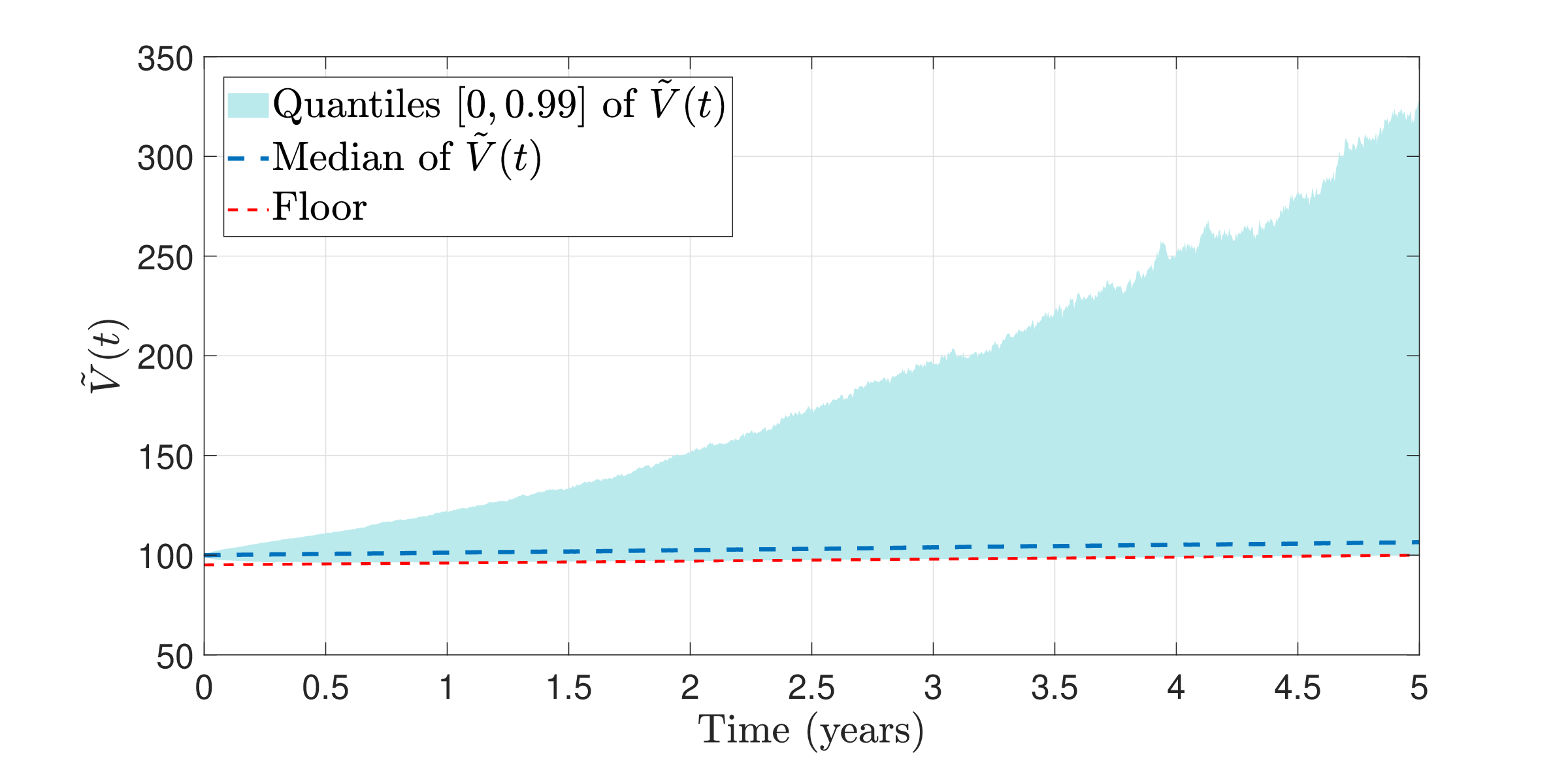}
    \caption{Value function for the constant jump size, with respect to time to maturity and initial cushion (left chart). Median (dotted line) and extreme scenarios for the portfolio value in case of constant jump size. The shaded area represents the portfolio values between the zero and the $99\%$ quantiles. The red dashed line represents the level of the floor which is always below the zero quantile (right chart).}
    \label{fig:Value_fun_constant_and_traj_constant}
    \end{figure}
    We finally perform a numerical exercise to test the performance of the multiplier $\hat m^{\hat{\bm{\theta}}}$, computed for the constant negative jump size against small variations of $\tilde \gamma$. For the test we take the parameters as in Table \ref{tab:cons_jump_size}, jump size  $\gamma=\tilde \gamma + \sigma_{\tilde \gamma} \epsilon$, where $\epsilon$ is a standard Gaussian random variable, and compute $\mathbb{P}(C^{\hat{m}^{\hat {\bm{\theta}}}}_T(\gamma)\le 0)$.  The values, reported in Table \ref{tab:loss}, show that for small errors in the estimation of $\tilde\gamma$, although sub-optimal, the multiplier performs quite well with zero or very small probability of a loss at maturity. The probability of gap risk increases rapidly for larger errors in the estimation of jump sizes.
    \begin{table}[th!]
    \centering
    \begin{tabular}{c|ccccc} 
    \Xhline{1.5pt}
    & $\sigma_{\tilde\gamma}=0.05$ & $\sigma_{\tilde\gamma}=0.075$ & $\sigma_{\tilde\gamma}=0.1$  & $\sigma_{\tilde\gamma}=0.125$ & $\sigma_{\tilde\gamma}=0.15$  \\
    \hline
    $T=1$  & $0$      & $0$       & $0.0003$ & $0.0052$  & $0.0374$ \\
    $T=5$  & $0$      & $0$       & $0.0015$ & $0.0293$  & $0.1471$ \\
    $T=10$ & $0$      & $0$       & $0.0024$ & $0.0565$  & $0.2493$\\
    \Xhline{1.5pt}
    \end{tabular}
    \caption{Probabilities of gap risk for different errors in the estimation of $\tilde \gamma$. The multiplier is computed for the constant jump size, with guaranteed amount at different maturities. All other parameters are set as in Table \ref{tab:cons_jump_size}.}
    \label{tab:loss}
    \end{table} 
    \subsection{Optimal multiplier under Kou's and Merton's models}
    Kou's model assumes the following dynamics for the underlying risky asset $S$
    \begin{equation}\label{eq:Kou_model}
    \frac{\de S_t}{S_{t^-}}=\mu\de t+\sigma\de W_t+\de\left[\sum_{j=1}^{N(t)}\left(V_j-1\right)\right],\quad t\in[0,T],
    \end{equation}
    where $N$ is an homogeneous Poisson process with intensity rate $\lambda>0$, and $\left\lbrace V_1,V_2,\dots\right\rbrace$ is a sequence of independent and identically distributed random variables such that, for every $j$, $\log\left(V_j\right)$ is subject to the following asymmetric double exponential density function
    \begin{equation}\label{eq:double_exp_density_function}
    f(y)=p\eta_{+}e^{-\eta_{+}y}\mathds{1}_{y\geq 0}+\left(1-p\right)\eta_{-}e^{\eta_{-}y}\mathds{1}_{y<0},
    \end{equation}
    where $\eta_{+}>1$, is the parameter governing the severity of upward jumps, $\eta_{-}>0$ is the parameters governing the amplitude of downward jumps, and $p\in(0,1)$ is the probability of an upward jump to occur. By applying the result in Proposition \ref{prop:existence_uniqueness}, we derive the following condition for the existence and uniqueness of the optimal multiplier that maximizes the expected utility of terminal cushion.  
    \begin{cor}\label{cor:KOU_MODEL}
    Assume that the stock price process follows the Kou's model stated in equation \eqref{eq:Kou_model}. If the model parameters satisfy
    \begin{align}
    \label{eq:g_0_Kou}&\mu-r-\dfrac{p\lambda}{1-\eta_{+}}-\frac{\left(1-p\right)\lambda}{1+\eta_{-}}\geq 0,\\
    \label{eq:d_1_Kou}&\mu-r-\delta_1\sigma^2-\frac{p\eta_{+}\lambda}{\left(1-\delta_1-\eta_{+}\right)\left(\delta_1+\eta_{+}\right)}+\frac{(1-p)\eta_{-}\lambda}{\left(1-\delta_1+\eta_{-}\right)\left(\delta_1-\eta_{-}\right)}\leq 0,
    \end{align}
    then there exists a unique optimal multiplier $\hat{m}^{\hat{\bm{\theta}}}\in\left[0,1\right]$ that can be found by solving
    \begin{equation}
    \mu-r-\delta_1\sigma^2 m+\lambda\int_{\mathbb{R}\setminus\left\lbrace0\right\rbrace}\frac{e^y-1}{\left[1+\left(e^y-1\right)m\right]^{\delta_1}}\left[p\eta_{+}e^{-\eta_{+}y}\mathds{1}_{y\geq 0}+\left(1-p\right)\eta_{-}e^{\eta_{-}y}\mathds{1}_{y<0}\right]\de  y=0.
    \end{equation}
    \end{cor}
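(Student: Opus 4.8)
The plan is to recognize the Kou dynamics as a particular instance of the general jump-diffusion model \eqref{eq:stock_price} and then to invoke Proposition \ref{prop:existence_uniqueness} directly, reading off the Kou-specific hypotheses \eqref{eq:g_0_Kou}--\eqref{eq:d_1_Kou} from the general conditions \eqref{eq:cond_1}--\eqref{eq:cond_2} by an explicit moment-generating-function computation. First I would identify the jump structure: writing $y=\log(V_j)$, a log-jump of size $y$ produces a return jump $V_j-1=e^y-1$, so \eqref{eq:Kou_model} matches \eqref{eq:stock_price} with $\gamma(y)=e^y-1$ and Lévy measure $\nu(\de y)=\lambda f(y)\,\de y$, with $f$ as in \eqref{eq:double_exp_density_function}. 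Since $e^y>0$ we have $\gamma(y)>-1$, consistent with the standing assumption, and the extreme jump sizes are $\phi_1=\inf_y(e^y-1)=-1$ (as $y\to-\infty$) and $\phi_2=\sup_y(e^y-1)=+\infty$ (as $y\to+\infty$). Hence we are in the limiting form of case (iii) of Proposition \ref{prop:existence_uniqueness}, with $-1/\phi_2=0$ and $-1/\phi_1=1$, which localizes the optimal multiplier in $(0,1)\subset[0,1]$ and, in particular, keeps the cushion strictly positive.

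Next I would specialize the two sign conditions. For \eqref{eq:cond_1} the substitution $\phi_2=+\infty$ annihilates the term $\delta_1\sigma^2/\phi_2$ and sends the weight $(1-\gamma(y)/\phi_2)^{-\delta_1}$ to $1$, so the condition collapses to $\mu-r+\lambda\int(e^y-1)f(y)\,\de y\ge 0$. The integral equals $\mathbb{E}[e^Y]-1$ with $Y$ double-exponential; using $\mathbb{E}[e^{sY}]=\frac{p\eta_+}{\eta_+-s}+\frac{(1-p)\eta_-}{\eta_-+s}$, valid on $-\eta_-<s<\eta_+$ (which holds at $s=1$ precisely because $\eta_+>1$), one gets $\mathbb{E}[e^Y]-1=\frac{p}{\eta_+-1}-\frac{1-p}{1+\eta_-}$, and this reproduces exactly \eqref{eq:g_0_Kou}. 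For \eqref{eq:cond_2} the choice $\phi_1=-1$ turns $\delta_1\sigma^2/\phi_1$ into $-\delta_1\sigma^2$ and the weight $(1-\gamma(y)/\phi_1)^{-\delta_1}=(1+\gamma(y))^{-\delta_1}=e^{-\delta_1 y}$, so the integral becomes $\lambda\int(e^y-1)e^{-\delta_1 y}f(y)\,\de y=\lambda\bigl(\mathbb{E}[e^{(1-\delta_1)Y}]-\mathbb{E}[e^{-\delta_1 Y}]\bigr)$. Evaluating the MGF at $s=1-\delta_1$ and $s=-\delta_1$ and combining the $p$- and $(1-p)$-terms over common denominators yields precisely the two fractions appearing in \eqref{eq:d_1_Kou}.

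Having matched both hypotheses, the existence, uniqueness, and localization of $\hat m^{\hat{\bm{\theta}}}$ solving \eqref{eq:optimal_condition_mult} follow verbatim from Proposition \ref{prop:existence_uniqueness}; substituting $\gamma(y)=e^y-1$ and $\nu(\de y)=\lambda f(y)\,\de y$ into \eqref{eq:optimal_condition_mult} gives the displayed Kou equation, completing the argument. I would present the two MGF evaluations as the only computational content and cite Proposition \ref{prop:existence_uniqueness} for the conclusion, rather than re-running its proof.

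The one genuine subtlety I foresee concerns the convergence of the integral defining \eqref{eq:cond_2}. The downward-jump part of $\int(e^y-1)e^{-\delta_1 y}f(y)\,\de y$ behaves like $-e^{(\eta_--\delta_1)y}$ as $y\to-\infty$, so $\mathbb{E}[e^{-\delta_1 Y}]$ is finite only when $\delta_1<\eta_-$; this is exactly the regime in which the closed-form term $\frac{(1-p)\eta_-\lambda}{(1-\delta_1+\eta_-)(\delta_1-\eta_-)}$ in \eqref{eq:d_1_Kou} is well-defined, while for $\delta_1\ge\eta_-$ the general condition degenerates to the harmless $-\infty$ case explicitly permitted by Proposition \ref{prop:existence_uniqueness}. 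I would therefore state the requirement $\delta_1<\eta_-$ explicitly, and I would also note that the defining equation \eqref{eq:optimal_condition_mult} itself stays well-posed throughout $(0,1)$, since there $1+(e^y-1)m\in(1-m,1)$ keeps the denominator bounded away from $0$ and the integrand integrable against $\nu$.
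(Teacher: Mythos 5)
Your proposal is correct and follows essentially the same route as the paper: identify $\gamma(y)=e^{y}-1$, $\nu(\de y)=\lambda f(y)\,\de y$, $\phi_1=-1$, $\phi_2=+\infty$, reduce the general conditions of Proposition \ref{prop:existence_uniqueness} to $g(0)\geq 0$ and $g(1)\leq 0$, and evaluate the two integrals (your MGF evaluations at $s=1$, $s=1-\delta_1$, $s=-\delta_1$ are exactly the paper's direct computations of $g(0)$ and $g(1)$). Your added caveat that $\mathbb{E}\left[e^{-\delta_1 Y}\right]$ is finite only for $\delta_1<\eta_-$, with the divergent case falling under the $\pm\infty$ clause of Proposition \ref{prop:existence_uniqueness}, is a point the paper's proof passes over silently and is worth keeping.
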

    \begin{proof}
    See Appendix \ref{app:C_2}.
    \end{proof}
    As in the previous example we consider the set of parameters in Table \ref{tab:Kou_model_par}, and change their values one by one. 
    Figure \ref{fig:Kou_sensitivity} shows results consistent with the case of constant jump size. 
    \begin{table}[H]
    \centering
    \begin{tabular}{cccccccc} 
    \hline
    $\mu-r$  & $\sigma$ & $\lambda$ & $p$     & $\eta_+$ & $\eta_{-}$ & $\delta_1$ & $\hat{m}^{\hat{\bm{\theta}}}$ \\ 
    \hline
    $0.24$   & $0.26$ & $20$      & $0.72$ & $64.94$   & $49.02$ & $0.60$ & $0.77$\\
    \hline
    \end{tabular}
    \caption{Parameters for the Kou model.}\label{tab:Kou_model_par}
    \end{table}
    \begin{figure}[th!]
    \centering
    \includegraphics[width=0.32\linewidth]{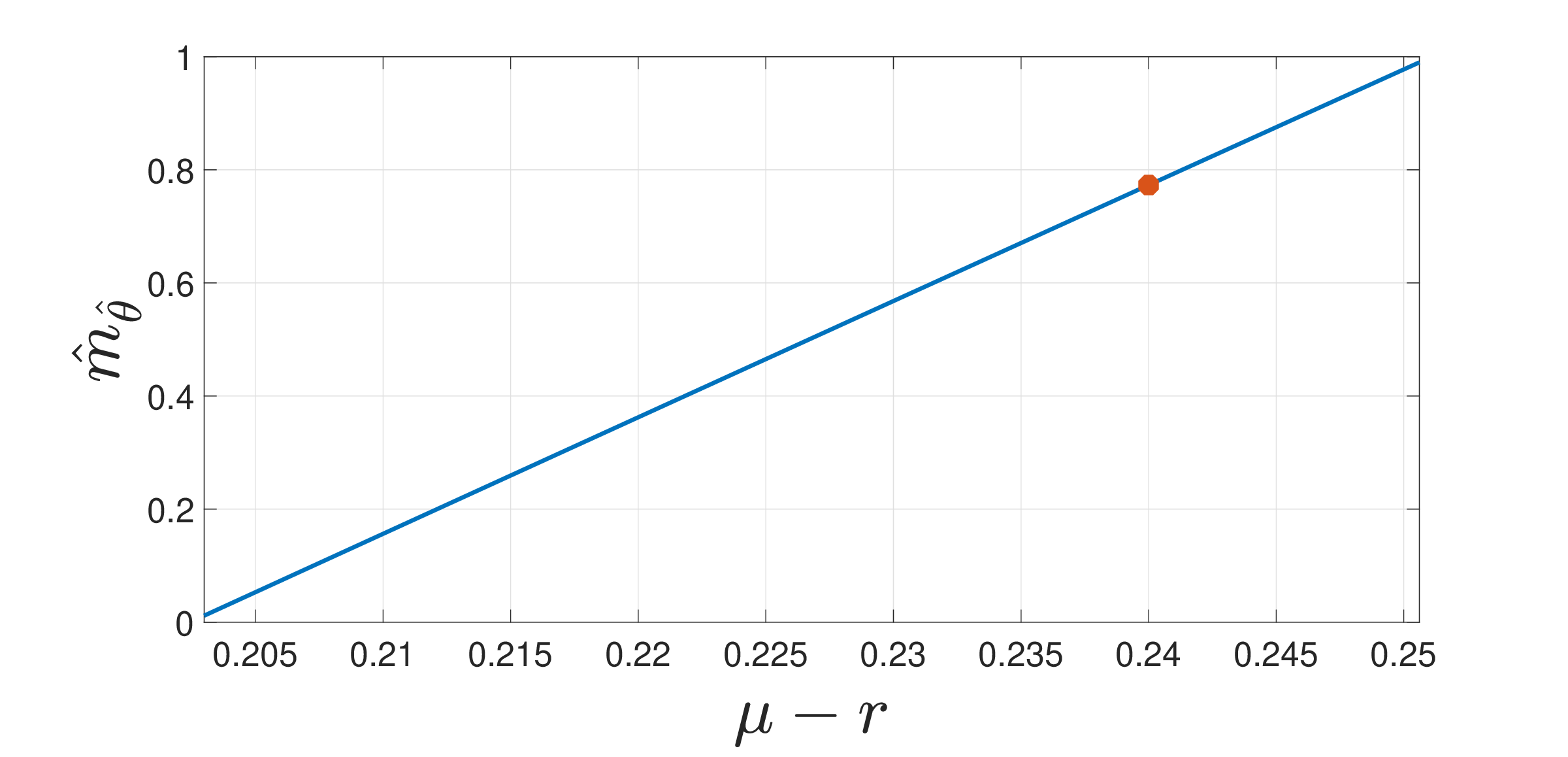}
    \hfill 
    \includegraphics[width=0.32\linewidth]{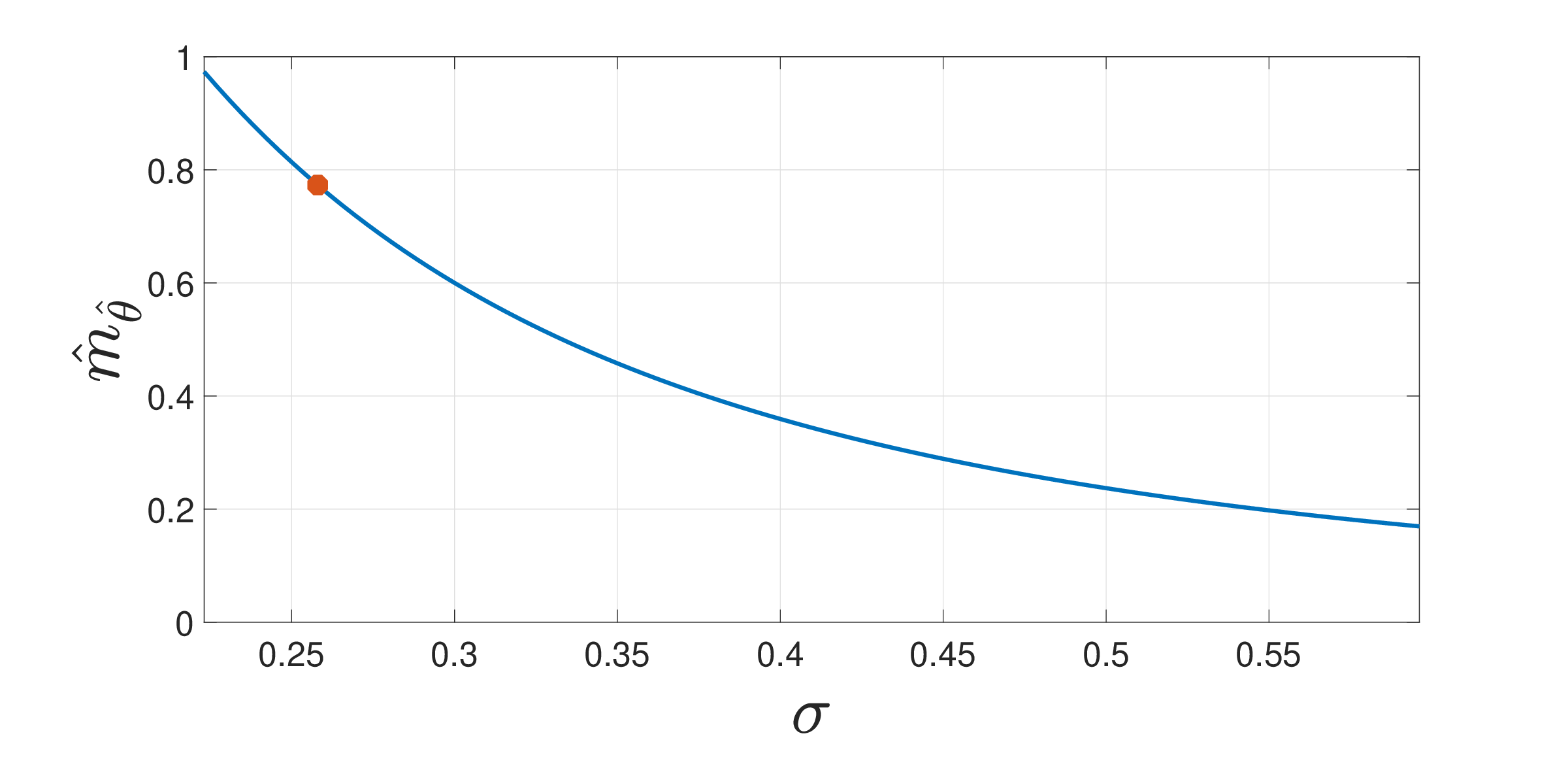}
    \hfill
    \includegraphics[width=0.32\linewidth]{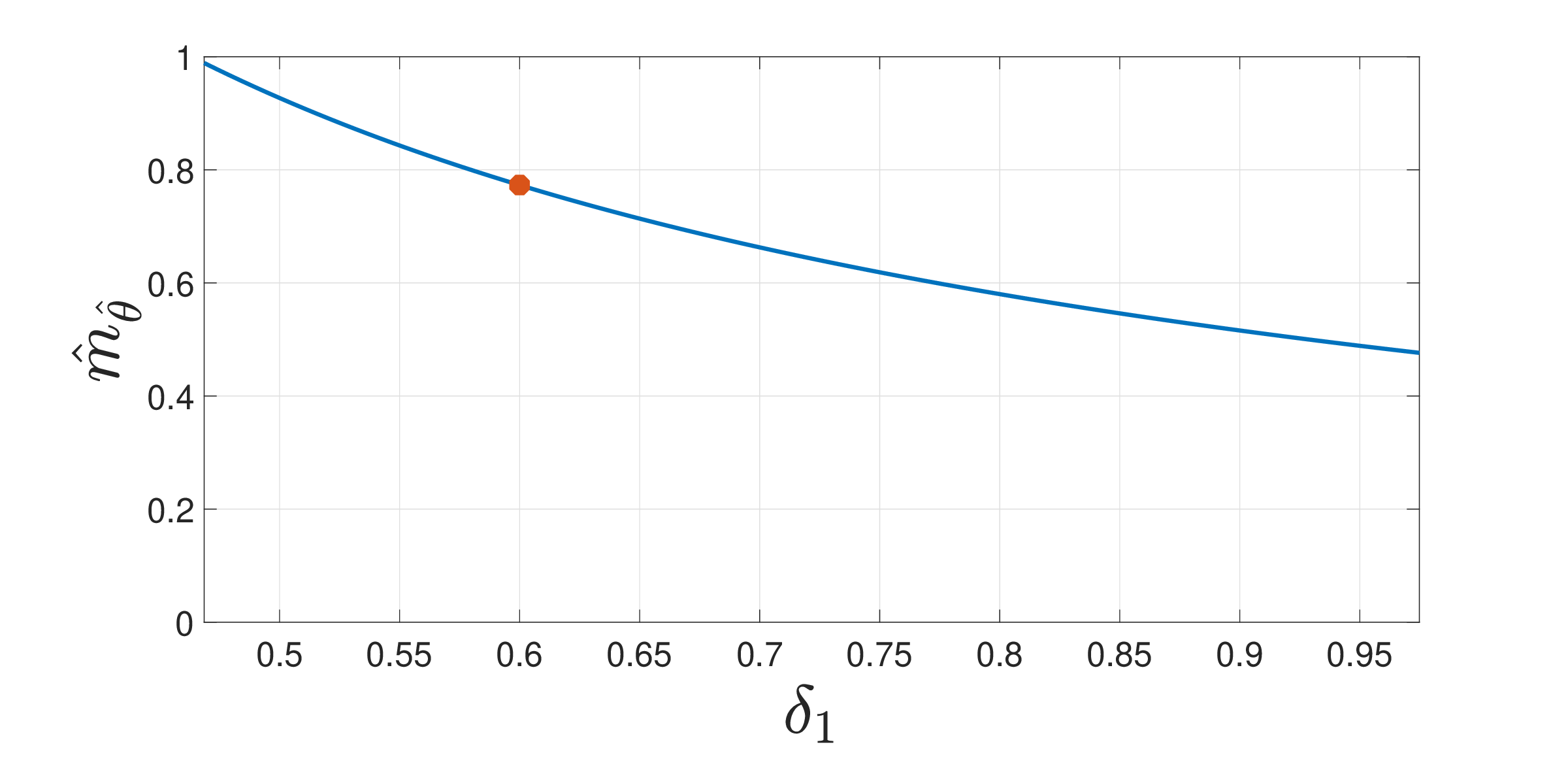}
    \vspace{0.5cm} 
    \includegraphics[width=0.32\linewidth]{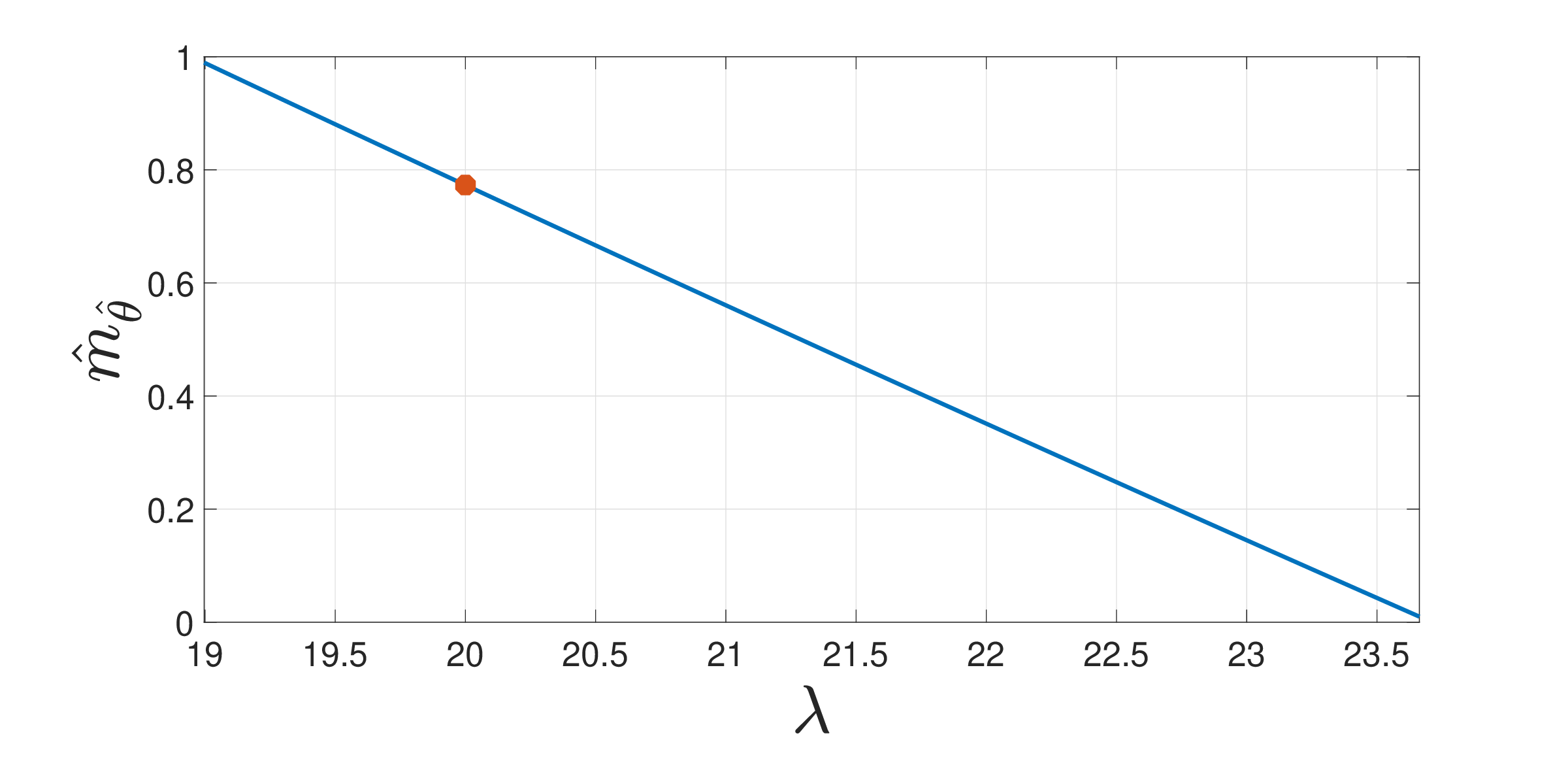}
    \hfill
    \includegraphics[width=0.32\linewidth]{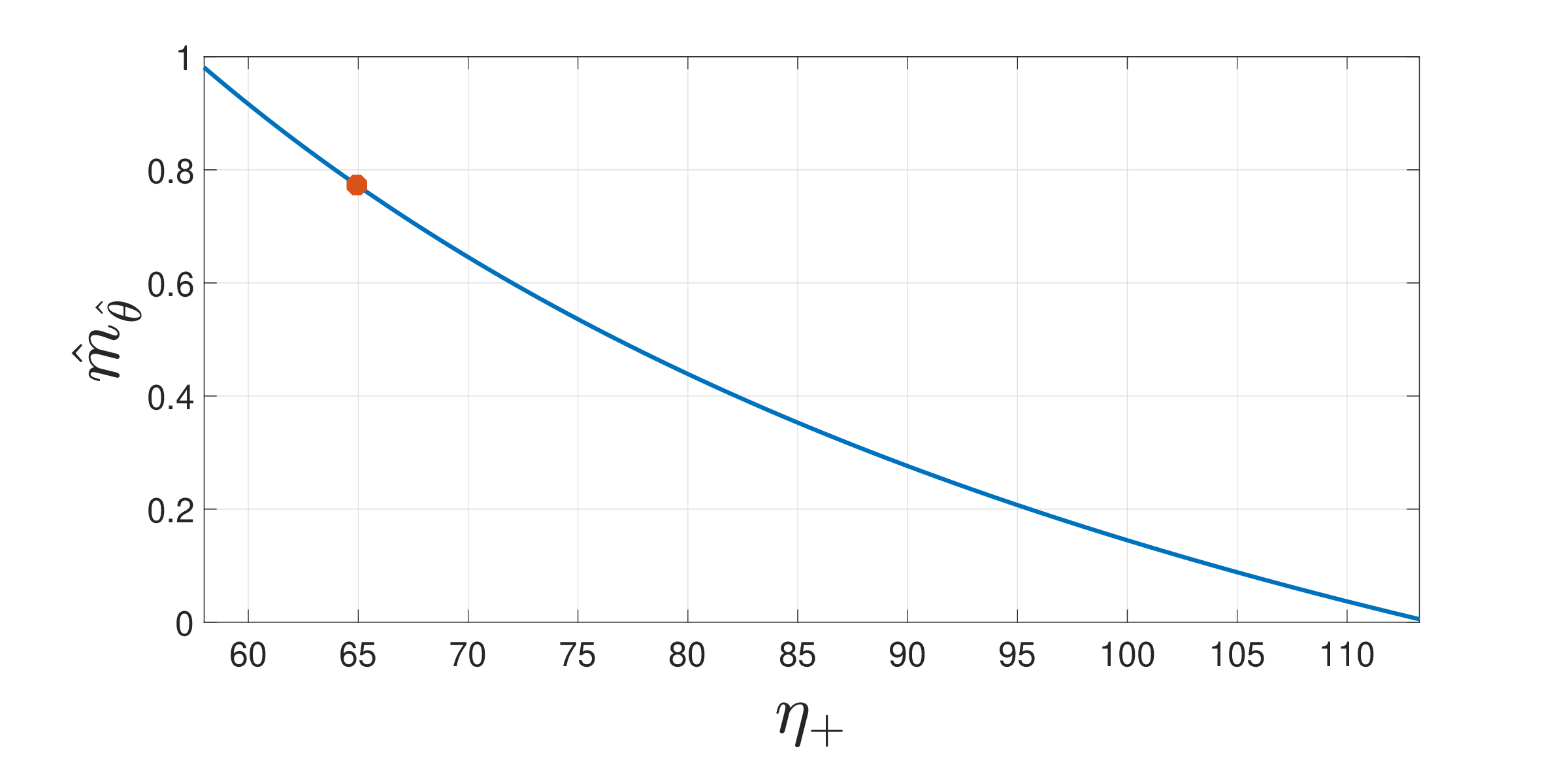}
    \hfill
    \includegraphics[width=0.32\linewidth]{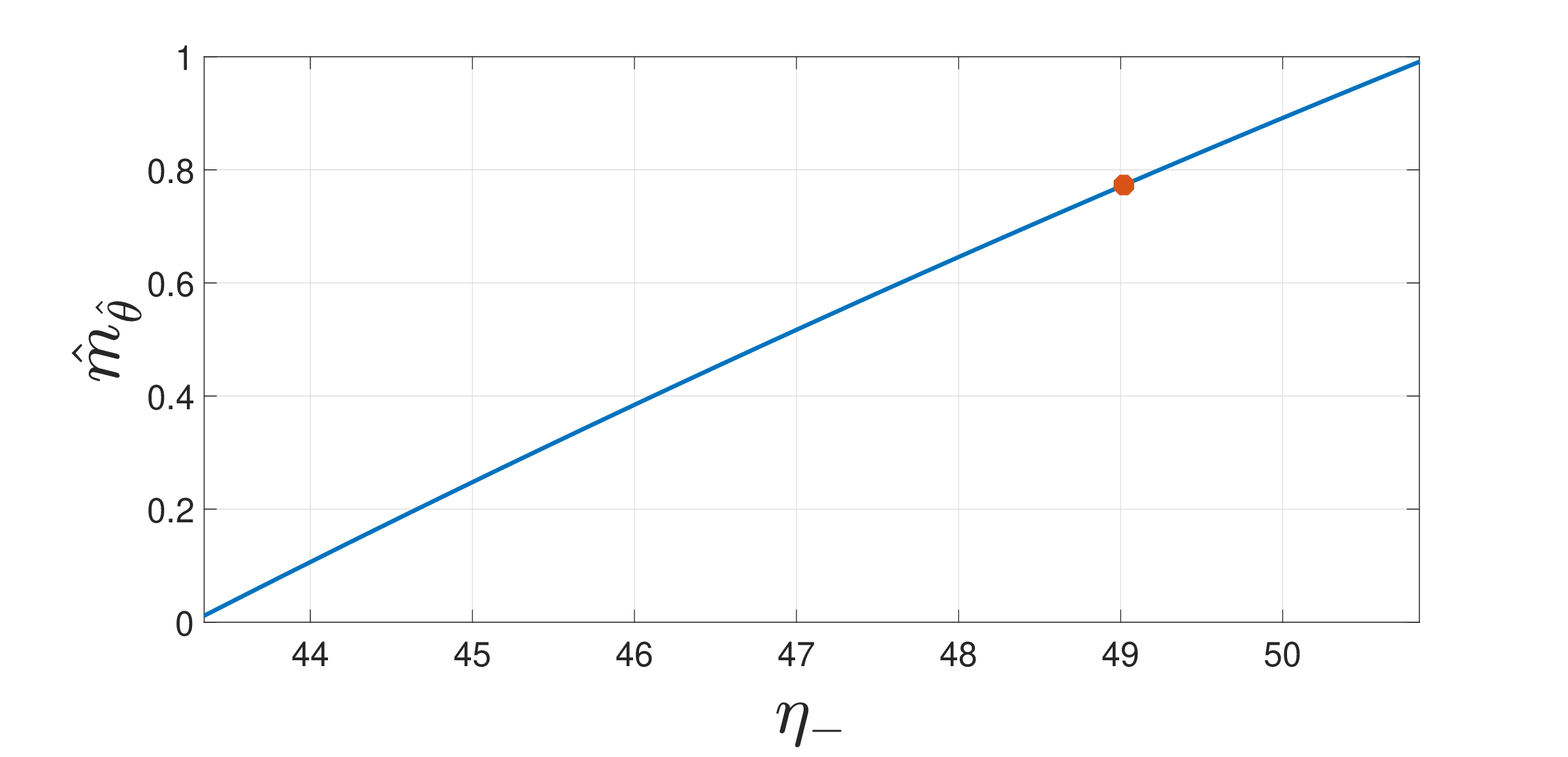}
    \vspace{0.5cm} 
    \includegraphics[width=0.32\linewidth]{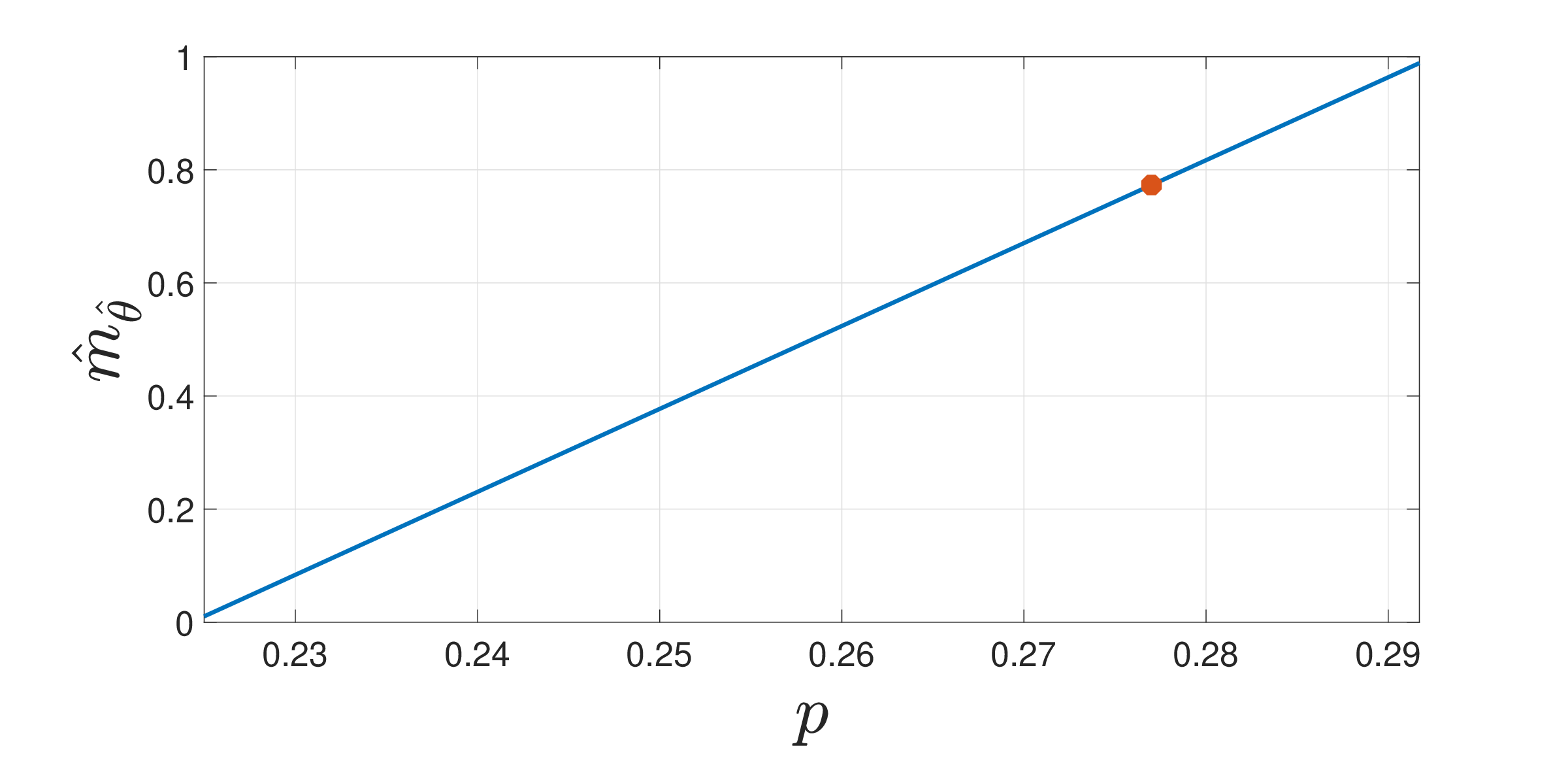}
    \caption{Sensitivity analysis for the optimal multiplier with respect to Kou's model parameters. The blue line is the value of the multiplier for the different values of the parameter indicated under each panel and the red dot corresponds to the value of the multiplier under the parameter configuration in Table \ref{tab:Kou_model_par}.}
    \label{fig:Kou_sensitivity}
    \end{figure}
    We stress that when $\eta_{-}$ increases, the expected size of negative jumps decreases. That would mean smaller average downward moves of the risky asset price, suggesting a larger exposure to the risky asset may be advantageous. The situation reverts when we look at $\eta_+$. Indeed, when the parameter governing the severity of positive jumps increases, upward moves become less pronounced, leading to a smaller exposure. What is interesting in this case is that the optimal multiplier assumes value in $[0,1]$, which, in virtue of the relation $\pi_t  V_t=m_{t}  C_{t}$, translates into an investment strategy where short-selling and borrowing from the money market account are not allowed (here we also used the fact that $ C_{t}< V_t$). The value function for the optimization problem in Kou's specification shares the same features as the one depicted in Figure \ref{fig:Value_fun_constant_and_traj_constant} for the constant jump size case. Hence, for the sake of brevity, we have omitted it here. Figure \ref{fig:traj_Kou} shows the extreme scenarios for the portfolio values, and it confirms that the optimal investment strategy allows the portfolio value process to stay well above the floor, indicated with the dotted red line.
    \begin{figure}[th!]
    \centering
    \includegraphics[width=0.60\linewidth]{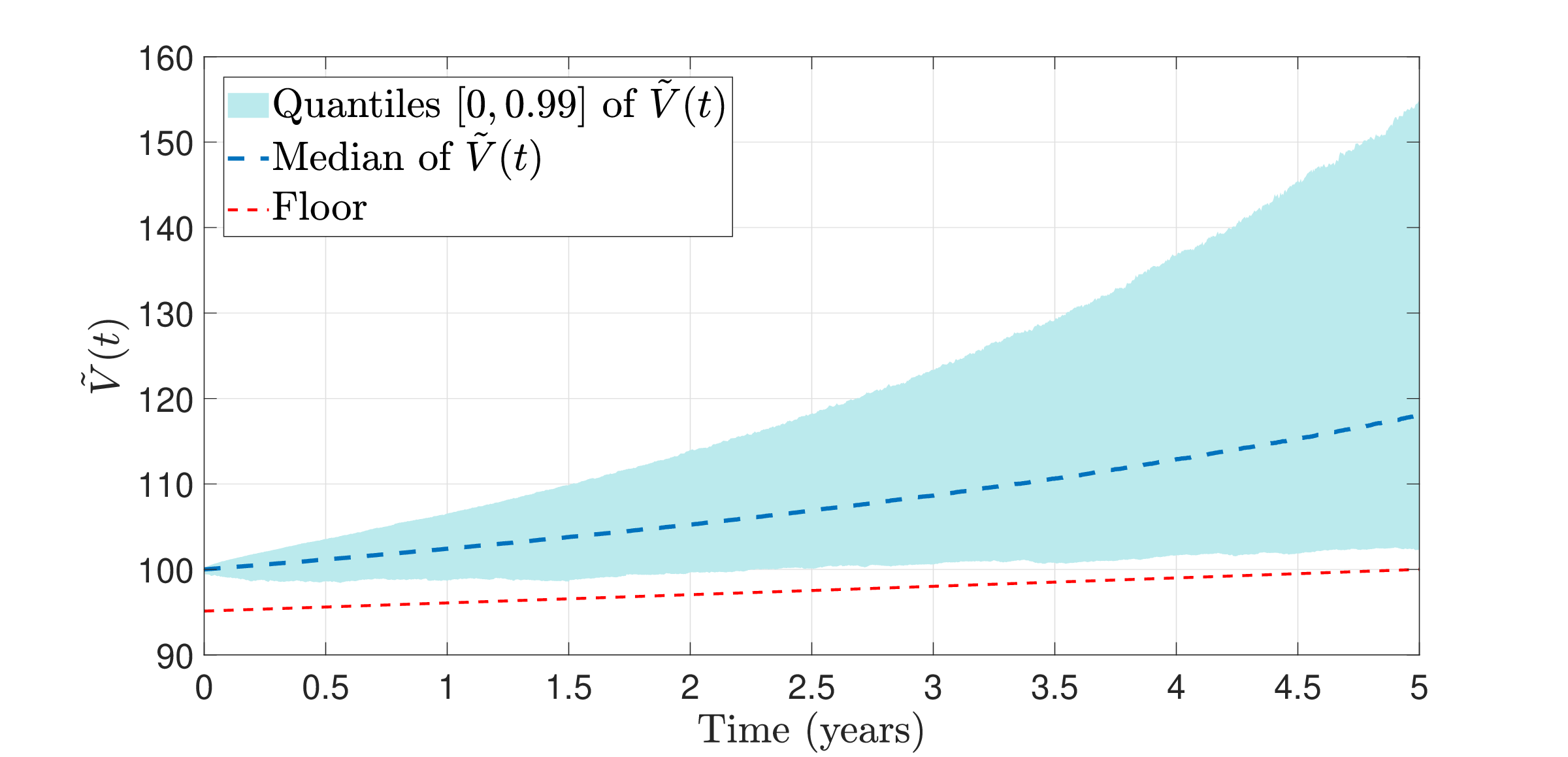}
    \caption{Median (dotted line) and extreme scenarios for the portfolio value in case of Kou's model. The shaded area represents the portfolio values between the zero and the $99\%$ quantiles. The red dashed line represents the level of the floor which is always below the zero quantile.}
    \label{fig:traj_Kou}
    \end{figure}
    Under Merton's model the price process of the underlying risky asset is still defined as in equation \eqref{eq:Kou_model}, with a different assumption on the jump size distribution. In particular, jump amplitudes $\left\lbrace V_1,V_2,\dots\right\rbrace$ are independent identically distributed random variables such that, for all $j$,  $\log(V_j)$ has normal density function  
    \begin{equation}
    f(y)=\frac{1}{\sqrt{2\pi\sigma_J^2}}\exp\left\lbrace-\dfrac{\left(y-\mu_J\right)^2}{2\sigma_J^2}\right\rbrace,
    \end{equation}
    for constant parameters $\mu_J\in\mathbb{R}$, and $\sigma_J>0$ representing the mean and the standard deviation of jump size distribution, respectively. 
    For this model, the parameters' conditions for the optimal multiplier to exist and be unique are as follows. 
    \begin{cor}\label{eq:corollary_Merton} 
    Assume that the stock price process follows the Merton model. If the model parameters satisfy
    \begin{align}
&\mu+\lambda\left(\exp\left\lbrace\mu_J+\frac{\sigma_J^2}{2}\right\rbrace-1\right)\geq 0,\\
    &\mu-\sigma^2\delta_1+\lambda\left(\exp\left\lbrace(1-\delta_1)\mu_J+\frac{(1-\delta_1)^2\sigma_J^2}{2}\right\rbrace-\exp\left\lbrace-\delta_1\mu_J+\frac{\delta_1^2\sigma_J^2}{2}\right\rbrace\right)\leq 0,
    \end{align}
    then there exists a unique optimal multiplier $\hat{m}^{\hat{\bm{\theta}}}\in[0,1]$ that solves
    \begin{equation}
    \mu-r-\delta_1\sigma^2 m+\lambda\int_{\mathbb{R}\setminus\left\lbrace0\right\rbrace}\frac{e^y-1}{\left[1+\left(e^y-1\right)m\right]^{\delta_1}}\frac{1}{\sqrt{2\pi\sigma_J^2}}\exp\left\lbrace-\dfrac{\left(y-\mu_J\right)^2}{2\sigma_J^2}\right\rbrace\de  y=0.
    \end{equation}
    \end{cor}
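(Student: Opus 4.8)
The plan is to obtain this corollary as a direct specialization of Proposition \ref{prop:existence_uniqueness} to the Merton dynamics, the only substantive step being the explicit evaluation of the abstract integral conditions through the moment generating function of the Gaussian law.

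First I would read off the jump data. Writing the multiplicative jump factor as $V_j=e^{Y}$ with $Y\sim\mathcal N(\mu_J,\sigma_J^2)$, the relative jump size is $\gamma(y)=e^{y}-1$ and the compensator is $\nu(\de y)=\lambda f(y)\de y$, with $f$ the normal density. Substituting these into \eqref{eq:optimal_condition_mult} reproduces verbatim the fixed-point equation in the statement. Since $y$ ranges over all of $\mathbb R$, one has $\phi_1=\inf_y(e^{y}-1)=-1$ and $\phi_2=\sup_y(e^{y}-1)=+\infty$, so we are in case (iii) of Proposition \ref{prop:existence_uniqueness} with $\phi_2=+\infty$. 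The admissible interval $(-1/\phi_2,-1/\phi_1)$ then collapses to $(0,1)$, giving the claimed range $\hat m^{\hat{\bm{\theta}}}\in[0,1]$; by Remark \ref{rem:opt_multiplier_features} the associated cushion stays strictly positive, so gap risk is ruled out.

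Next I would translate the two sign conditions using $\mathbb E[e^{aY}]=\exp\{a\mu_J+a^2\sigma_J^2/2\}$. For \eqref{eq:cond_1} I pass to the limit $\phi_2\to+\infty$: the term $\delta_1\sigma^2/\phi_2$ vanishes, the weight $(1-\gamma(y)/\phi_2)^{-\delta_1}\to 1$, and the integral reduces to $\int_E(e^{y}-1)\nu(\de y)=\lambda(e^{\mu_J+\sigma_J^2/2}-1)$, which together with the drift term yields the first inequality. For \eqref{eq:cond_2} I set $\phi_1=-1$, so that $\delta_1\sigma^2/\phi_1=-\delta_1\sigma^2$ and $(1+\gamma(y))^{-\delta_1}=e^{-\delta_1 y}$, whence the integral becomes $\lambda\bigl(e^{(1-\delta_1)\mu_J+(1-\delta_1)^2\sigma_J^2/2}-e^{-\delta_1\mu_J+\delta_1^2\sigma_J^2/2}\bigr)$, reproducing the bracketed expression in the second inequality.

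The hard part will be the bookkeeping around $\phi_2=+\infty$, which formally sits outside the hypothesis $\phi_2<+\infty$ of case (iii), together with the verification that all the moments above are finite; both are handled by the Gaussian tails. Unlike the double-exponential law of Corollary \ref{cor:KOU_MODEL}, the normal density satisfies $\mathbb E[e^{aY}]<\infty$ for every $a\in\mathbb R$, so the integrands $e^{y}$, $e^{(1-\delta_1)y}$ and $e^{-\delta_1 y}$ are all $f$-integrable. Consequently the left-hand side of \eqref{eq:optimal_condition_mult} is a finite, continuous, strictly decreasing function of $m$ on all of $[0,1]$, its derivative being $-\delta_1\sigma^2-\delta_1\int_E\gamma(y)^2(1+\gamma(y)m)^{-\delta_1-1}\nu(\de y)<0$; its endpoint values are exactly the quantities controlled by the two inequalities, so the intermediate value theorem delivers a unique zero. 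The limit $\phi_2\to+\infty$ in \eqref{eq:cond_1} I would justify by dominated convergence, the Gaussian weight providing the integrable majorant, and it is harmless precisely because this monotone function extends continuously and finitely to the endpoint $m=0$.
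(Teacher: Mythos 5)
Your proposal is correct and follows essentially the same route as the paper, whose proof of this corollary is simply to replicate the argument for Corollary \ref{cor:KOU_MODEL}: specialize Proposition \ref{prop:existence_uniqueness} with $\gamma(y)=e^{y}-1$, $\nu(\de y)=\lambda f(y)\de y$, $\phi_1=-1$, $\phi_2=+\infty$, observe $g'(m)<0$, and evaluate $g(0)$ and $g(1)$ via the Gaussian moment generating function; your extra care about the finiteness of the Gaussian exponential moments and the formal status of $\phi_2=+\infty$ is a welcome refinement rather than a deviation. Note only that your (correct) computation yields $g(0)=\mu-r+\lambda\bigl(e^{\mu_J+\sigma_J^2/2}-1\bigr)$ and $g(1)=\mu-r-\delta_1\sigma^2+\cdots$, so the conditions as printed in the corollary appear to be missing the $-r$ term present in the Kou counterpart.
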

    \begin{proof}
    The proof replicates the lines of that of Corollary \ref{cor:KOU_MODEL}.
    \end{proof}
    We now fix the parameter according to the following values and perform a sensitivity analysis by varying each of them.
    \begin{table}[H]
    \centering
    \begin{tabular}{ccccccc} 
    \hline
    $\mu-r$ & $\sigma$ & $\lambda$ & $\mu_J$  & $\sigma_J$ & $\delta_1$ & $\hat{m}^{\hat{\bm{\theta}}}$\\
    \hline
    $0.09$ & $0.35$  & $20$     & $-0.01$ & $0.15$ & $0.60$ & $0.22$\\
    \hline
    \end{tabular}
    \caption{Parameters for the Merton model.}
    \label{tab:Merton_model_par}
    \end{table} 
    In Figure \ref{fig:MERTON_sens} we analyze the behaviour of the optimal multiplier with respect to  excess return, volatility, frequency of jumps in the underlying risky asset, and the investor's risk aversion. The results are comparable to those in the previous cases (i.e. constant jump sizes and the Kou model). 
    The optimal multiplier increases with the parameter $\mu_J$, which indicates higher exposure to the risky asset as the expected value of jumps sizes increases, for fixed volatility $\sigma_J=0.15$. 
    The optimal multiplier also increases with respect to standard deviation of the jump size distribution. 
    Indeed, when $\sigma_J$ is small, jump sizes tend to cluster around negative values. Conversely, larger values of $\sigma_J$ lead to dispersion of jump sizes across a wider range, encompassing positive values. 
    As for Kou's model, the optimal multiplier assumes a value in $[0,1]$, implying that our new investment strategy does not allow for short-selling or borrowing from the bank account. Since the value function has the same features as in the previous two specifications of jump size distributions, we have not reported it.    
    \begin{figure}[H]
    \centering
    \includegraphics[width=0.32\linewidth]{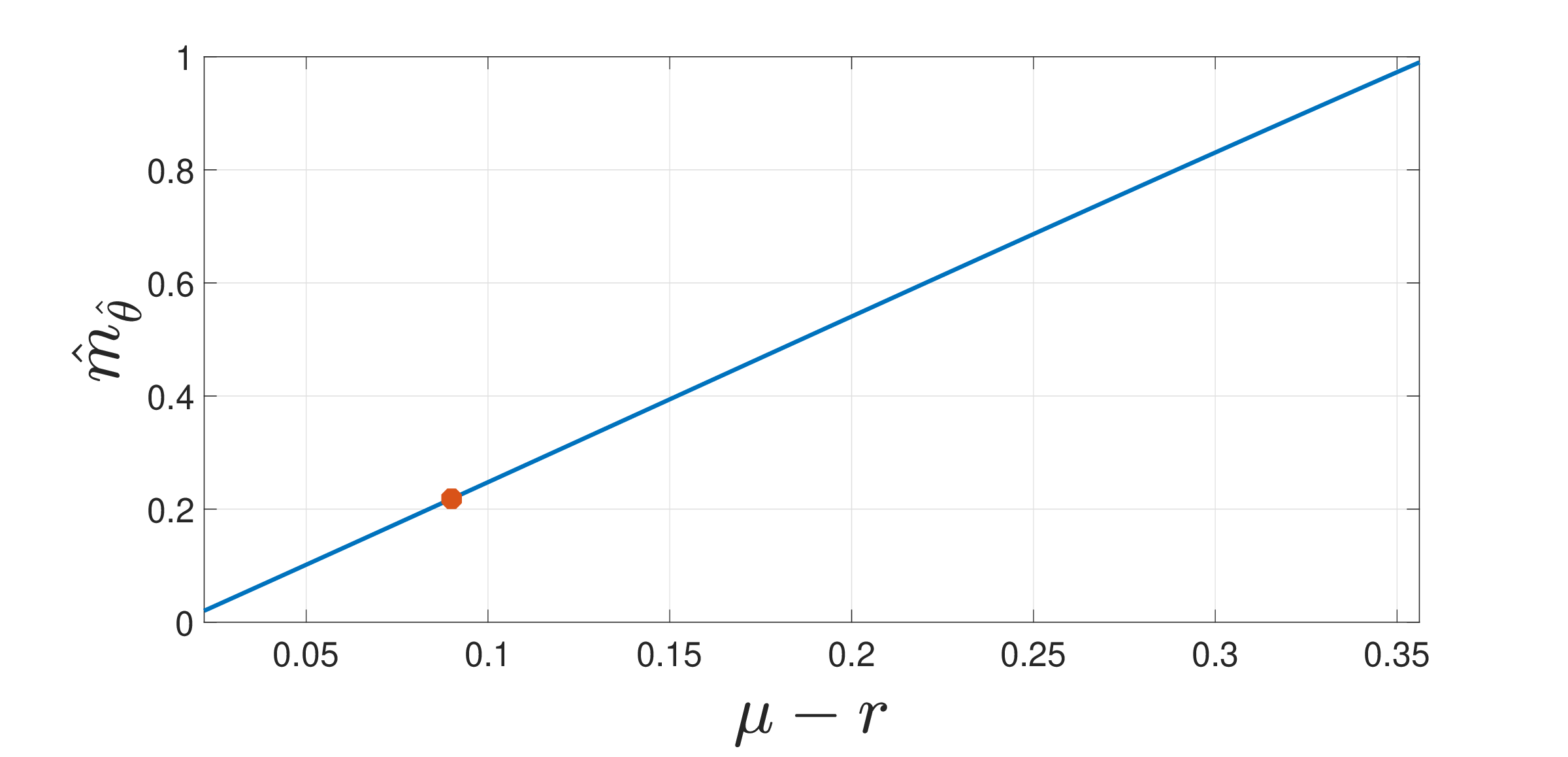}
    \hfill 
    \includegraphics[width=0.32\linewidth]{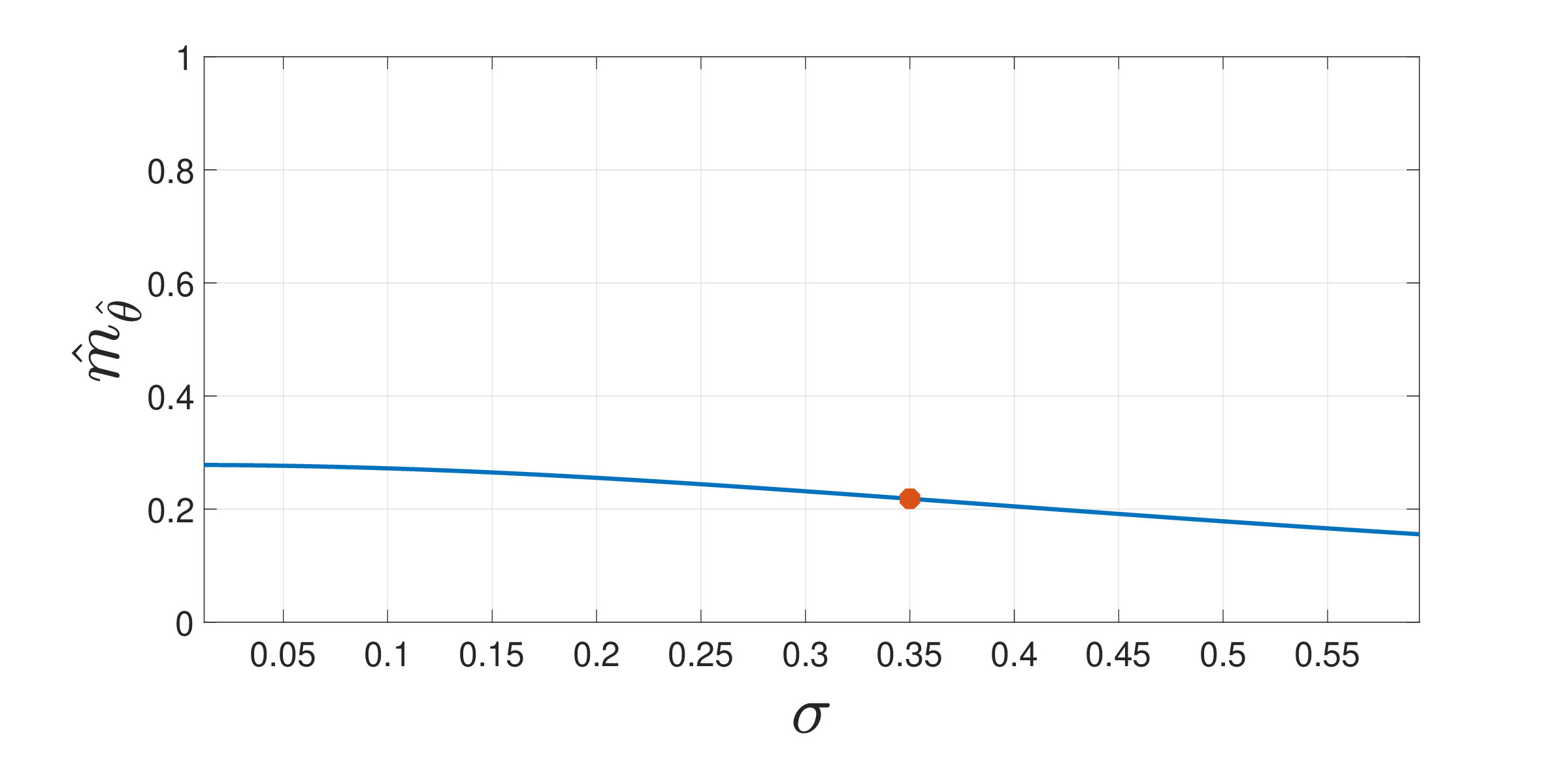}
    \hfill
    \includegraphics[width=0.32\linewidth]{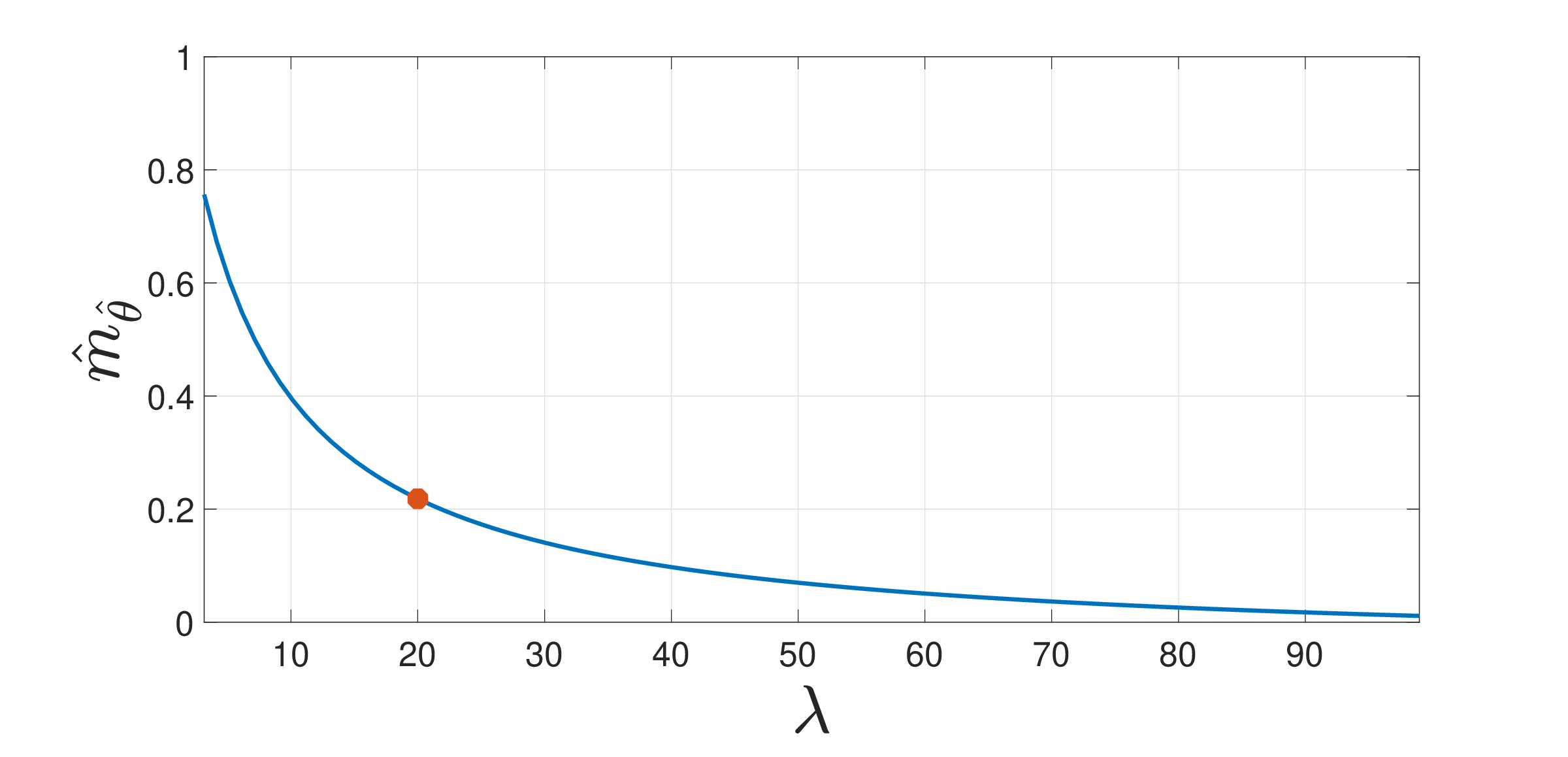}
    \vspace{0.5cm} 
    \includegraphics[width=0.32\linewidth]{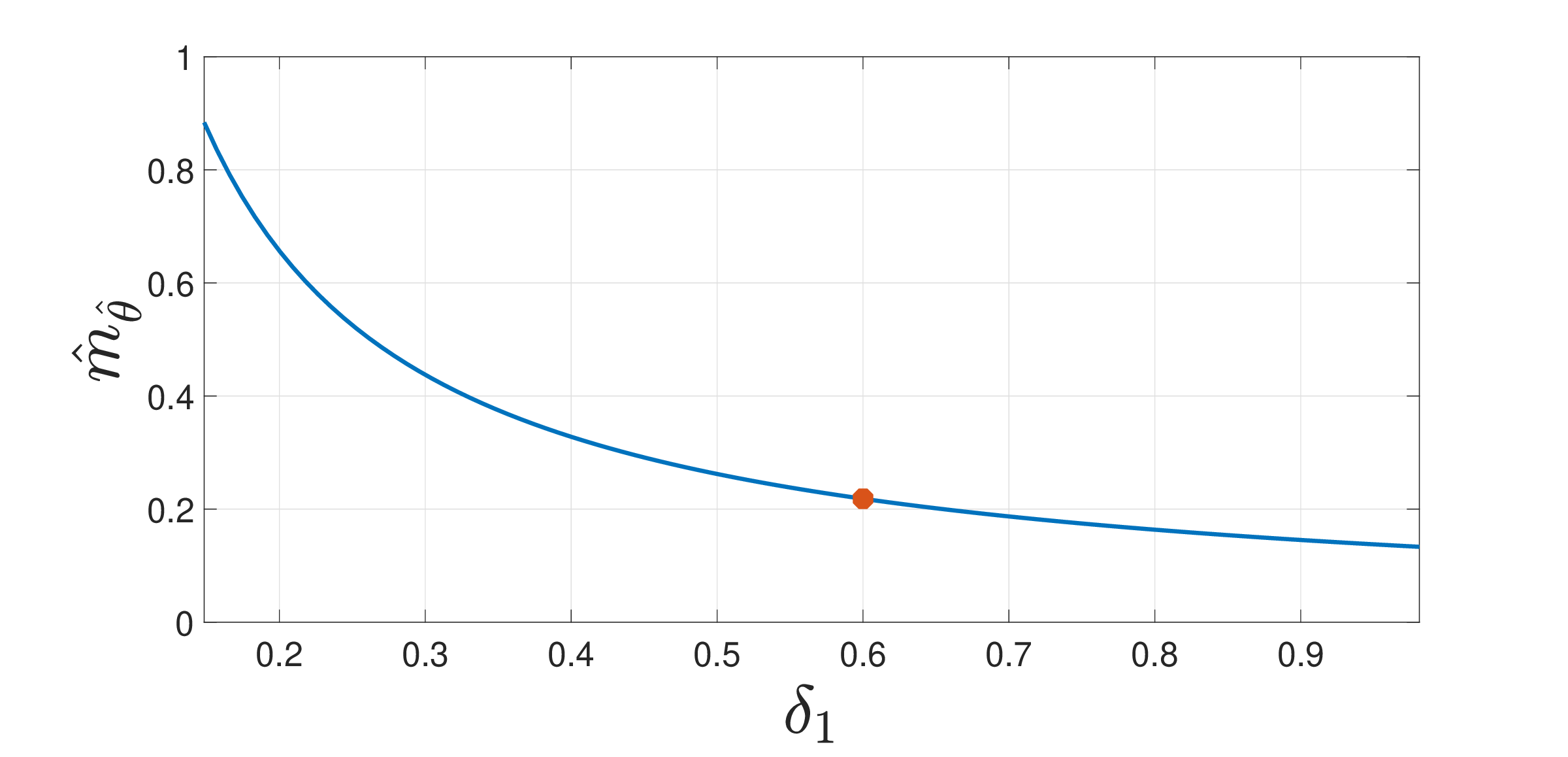}
    \hfill
    \includegraphics[width=0.32\linewidth]{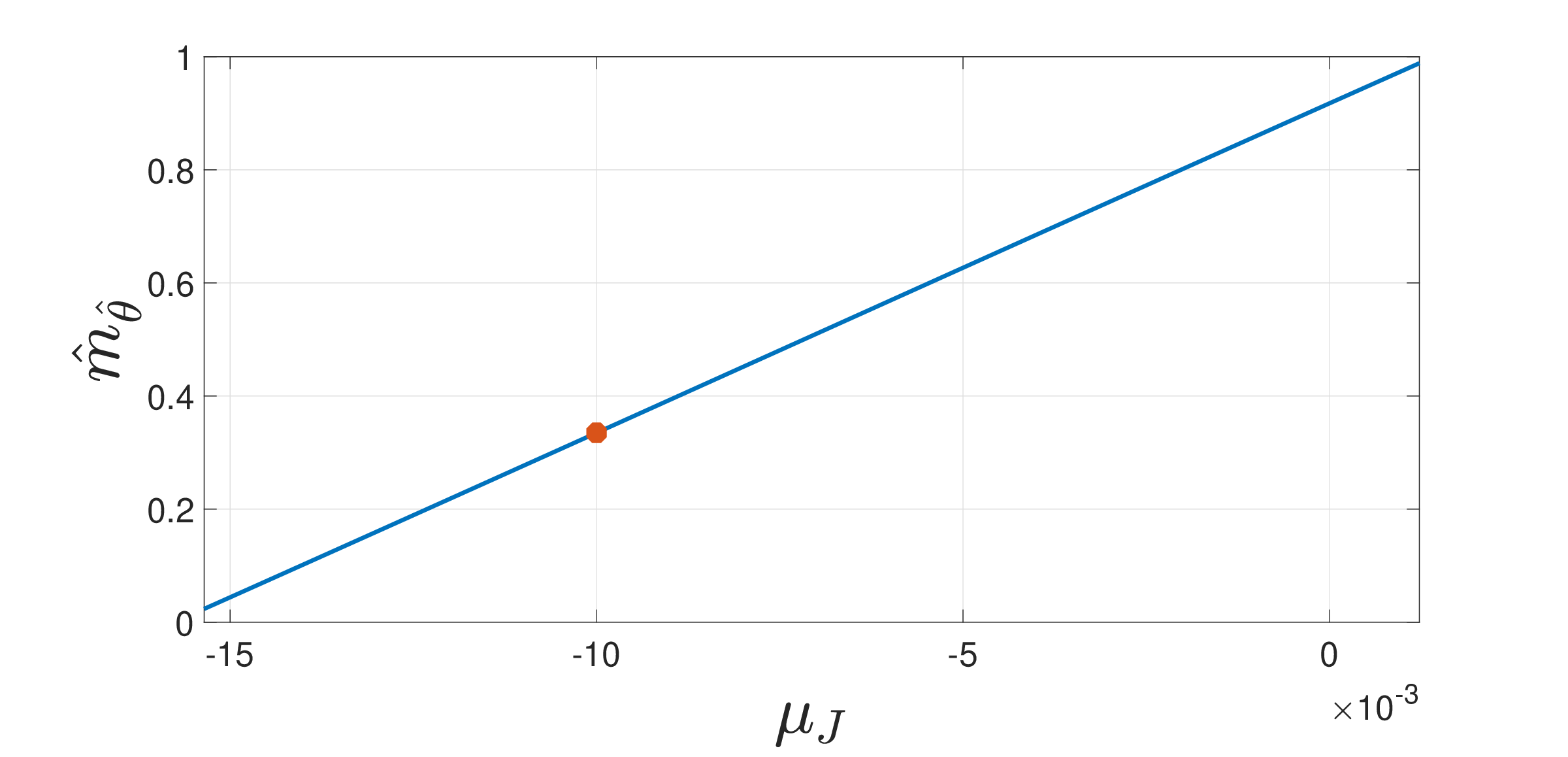}
    \hfill
    \includegraphics[width=0.32\linewidth]{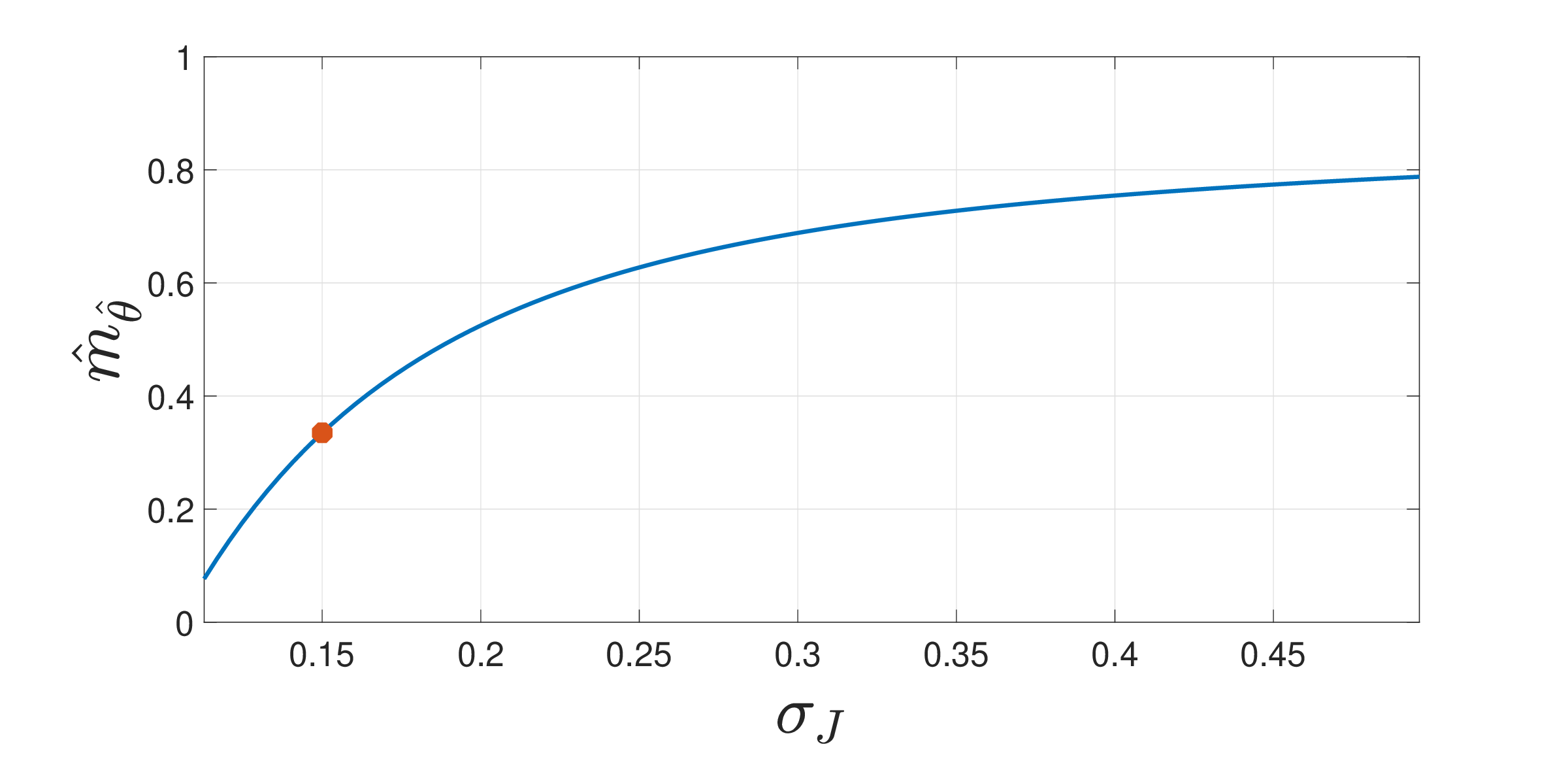}
    \caption{Sensitivity analysis for the optimal multiplier with respect to Merton's model parameters. The blue line is the value of the multiplier for the different values of the parameter indicated under each panel and the red dot corresponds to the value of the multiplier under the parameter configuration in Table \ref{tab:Merton_model_par}.}
    \label{fig:MERTON_sens}
    \end{figure}
    Furthermore, for comprehensive analysis, we display the extreme scenarios of portfolio values in Figure \ref{fig:traj_merton}. These scenarios highlight that the optimal investment strategy maintains the portfolio value above the floor, eliminating gap risk throughout the entire investment horizon $[0,T]$.
    \begin{figure}[htbp]
    \centering
    \includegraphics[width=0.60\linewidth]{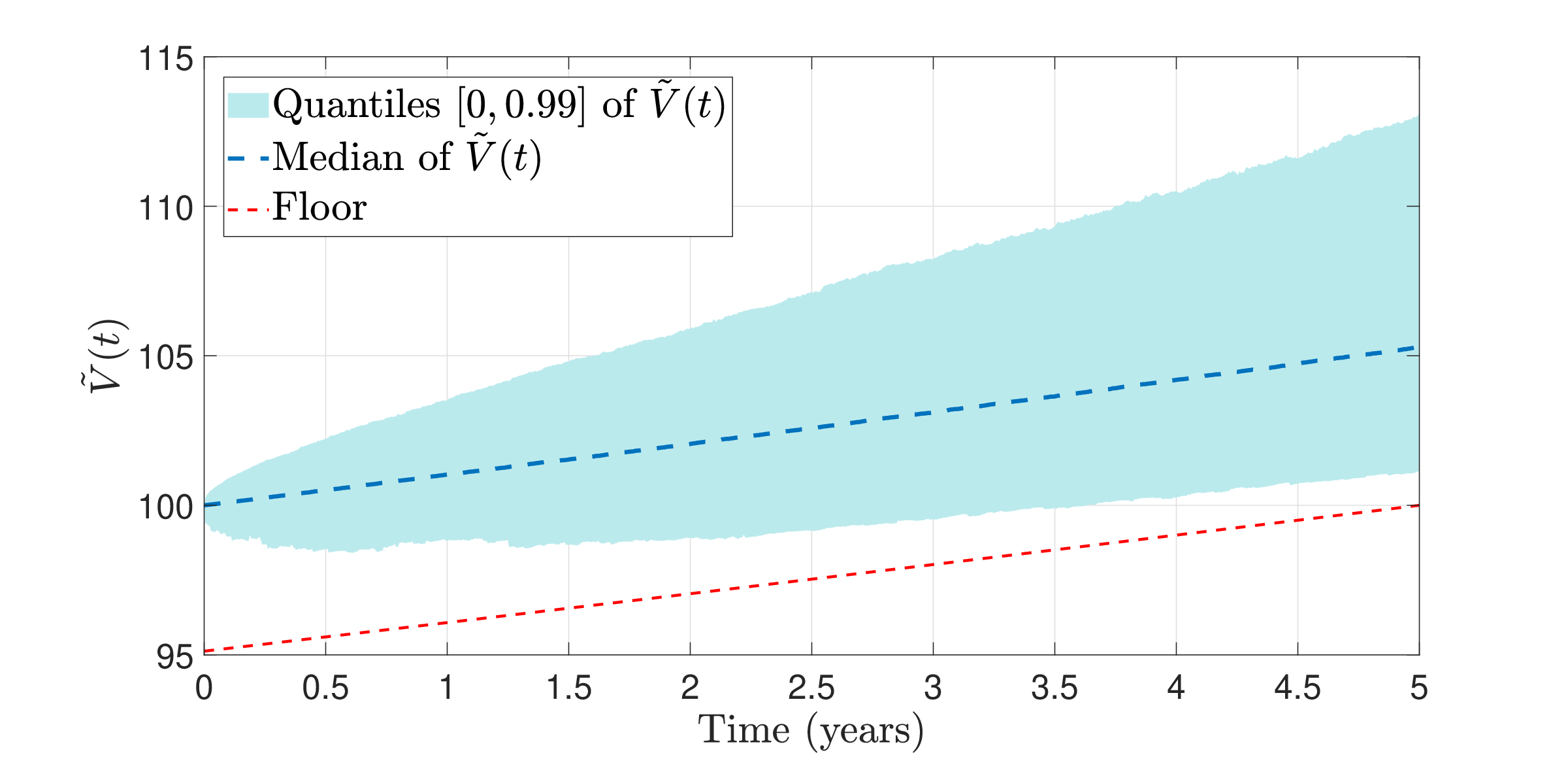}
    \caption{Median (dotted line) and extreme scenarios for the portfolio value in case of Merton's model. The shaded area represents the portfolio values between the zero and the $99\%$ quantiles. The red dashed line represents the level of the floor which is always below the zero quantile.}
    \label{fig:traj_merton}
    \end{figure}
    \section{Conclusions}\label{sect:conclusion}
    PPI strategies are {pivotal in the} asset management industry. In this paper, we have considered the problem of optimal design of a novel version of  PPI strategies, in a mathematical setup where portfolio dynamics may exhibit downward jumps, generated by, e.g., sudden market draw-down. Incorporating such feature in a market model brings gap risk with positive probability, as discussed, for instance, by \cite{cont2009constant}. Hence, a relevant task for a manager or an insurer is to hedge against gap risk in order to be able to fulfil the liabilities at maturity. In this paper, we have attempted to provide a criterion for determining the optimal PPI strategy, enabling to capture market rises after drops of the portfolio below the value of the guarantee. This is done by sustaining equity market participation during the entire trading horizon and by evaluating risk aversion in different ways depending on the relative value of the portfolio and the guaranteed amount. From a technical viewpoint, we have addressed the optimization using a worst-case martingale approach and concavification to account for the features of the utility function. 
    An alternative approach like the dynamic programming based on the Hamilton-Jacobi-Bellman equation, can be applied to solve the stochastic optimization. However, in our setup, it may only provide a characterization of the PPI strategy in a feedback form, in terms of the value function which can only be computed numerically. 
    Our main result, provides a procedure that simultaneously computes the state price density (for the worst-case probability measure) and the optimized multiplier, in terms of the solution of a system of equations. Unfortunately, the solution of the system may not exist or may not be unique for some model parameter configurations. In this case, it is hard to identify the optimal PPI strategy. However, we have discussed examples where the multiplier is given in quasi-closed form, making our results readily applicable from a practical point of view. Interestingly, in such examples, it can be explicitly proved that gap risk does not occur. Our study offers several possibility for extensions. Our next step includes enriching the structure of the market by considering, e.g. a portfolio with several underlying assets subject to common contagion factors. 
\section*{Acknowledgement}
The authors would like to thank Alessandro Doldi for useful comments and discussions. The work of Katia Colaneri has been funded by the European Union - Next Generation EU - Project PRIN 2022 [grant code BEMMLZ] with the title \emph{Stochastic control and games and the role of information}; the work of Immacolata Oliva was supported by Sapienza University of Rome under the project \textit{Stochastic Optimization Problems in Insurance, Finance and Economics} [grant number RG123188B4271AD5]. This work has been completed while Daniele Mancinelli was affiliated with Sapienza University of Rome and partially supported by Sapienza University of Rome under the project \textit{Stochastic Optimization Problems in Insurance, Finance and Economics} [grant number RG123188B4271AD5].
    
    \section*{Appendix}
    \appendix
    \numberwithin{equation}{section}
    \renewcommand{\theequation}{\thesection\arabic{equation}}
    \section{Proof of some technical results of Section \ref{sec:optimization_pb}}\label{app:A}
    \subsection{Proof of Lemma \ref{prop:existence_uniqueness_point_c_hat}}\label{app:A_1}
    Recall that $\tilde{U}(x;G)$ is differentiable for every $x\neq 0$.  Using the condition stated in equation \eqref{eq:tangent_point}, we define the function
    \begin{equation}\label{eq:fun_zero}
    f(x):=
    \begin{cases}
    -\tilde\lambda\frac{\delta_2}{1-\delta_2}\left(-x\right)^{1-\delta_2}-\tilde\lambda\frac{G}{\left(-x\right)^{\delta_2}}+\tilde\lambda\frac{G^{1-\delta_2}}{1-\delta_2},\quad x\in(-G,0),\\
    \frac{\delta_1}{1-\delta_1}x^{1-\delta_1}-\frac{G}{x^{\delta_1}}+\tilde\lambda\frac{G^{1-\delta_2}}{1-\delta_2}, \quad  x\in(0,\infty).
    \end{cases}
    \end{equation}
    We see that $\displaystyle\lim_{x \to -G^+}f(x)=0$, $\displaystyle\lim_{x\to 0^{-}}f(x)=-\infty$, moreover $f'(x)<0$   for $x\in(-G, 0)$. Hence there is no $\hat{c}(G)\in\left(-G,0\right)$ such that $f(\hat{c}(G))=0$. If we consider $x\in\left(0,+\infty\right)$, since $\displaystyle\lim_{x\to 0^{+}}f(x)=-\infty$ and $\displaystyle\lim_{x\to +\infty}f(x)=+\infty$, then there exists $\hat{c}(G)>0$ such that $f(\hat{c}(G))=0$. Moreover, since $f'(x)>0$ for all $x\in(0,+\infty)$, then $\hat{c}(G)$ is also unique. This concludes the proof. 
    \subsection{Proof of Lemma \ref{lem:2}}\label{app:A_2}
    
    We first show that the optimal terminal cushion ${C}_{T}^{\star}$ almost surely does not take values on $\left(-G,\hat{c}(G)\right)$ where the true and the concavified objective functions are different. Suppose that there is a set $\bar{\Omega}\subseteq\Omega$ such that for all $\omega\in\bar{\Omega}$,  $\tilde{U}\left({C}^{\star}_T(\omega);G\right)\neq\tilde{U}^{con}\left(C^{\star}_T\left(\omega\right);G\right)$, equivalently, ${C}^{\star}_T(\omega)$ takes values in $(-G,\hat{c}(G))$ for all $\omega\in\bar{\Omega}$, and ${C}^{\star}_T(\omega)>\hat{c}(G)$ for all $\omega \in \Omega \setminus \bar{\Omega}$. 
    We let $\mathbb{P}\left({C}_T^{\star}<\hat{c}(G)\right)=q$ for some $q\in(0,1)$. We want to contradict that such ${C}_T^{\star}$ is optimal. For all $\omega\in\bar{\Omega}$, there exists $\alpha(\omega)\in(0,1)$ such that ${C}_T^{\star}(\omega)=\alpha(\omega)\cdot\left(-G\right)+\left(1-\alpha\left(\omega\right)\right)\cdot\hat{c}(G)$. Now, define the random variable 
    \begin{equation*}
    \hat{C}_T\left(\omega\right)=
    \begin{cases}
    \begin{aligned}
    -G,\quad & \mbox{ if }\omega\in\Omega_1,\\
    \hat{c}(G),\quad & \mbox{ if }\omega\in\bar\Omega\setminus\Omega_1,\\
    {C}^{\star}_T\left(\omega\right),\quad & \mbox{ if }\omega\in\Omega\setminus \bar\Omega,    
    \end{aligned}
    \end{cases}
    \end{equation*}
    for some $\Omega_1\subset \bar \Omega$. We let $\mathbb{P}\left(\hat{C}_T=-G\right)=p_1$ for some $p_1\in[0,q]$ and $\mathbb{P}\left(\hat{C}_T=\hat{c}(G)\right)=q-p_1$. Note that $\tilde{U}\left(\hat{C}_T;G\right)=\tilde U^{con}\left(\hat{C}_T;G\right)$ everywhere.  Moreover 
    \begin{equation*}
    \mathbb{E}\left[\tilde U(\hat{C}_T;G)\right]= \tilde U(-G;G)p_1+\tilde{U}(\hat{c}(G);G)(q-p_1)+\mathbb{E}\left[\tilde{U}(\hat{C}_T;G)\mathds{1}_{\Omega\setminus\bar{\Omega}}\right].
    \end{equation*}
    On the other hand we also observe that 
    \begin{equation}
    \mathbb{E}\left[\tilde{U}({C}_T^{\star};G)\right]=\mathbb{E}\left[\tilde{U}\left(-G\alpha+\hat{c}(G)(1-\alpha);G\right)\mathds{1}_{\bar{\Omega}}\right]+\mathbb{E}\left[\tilde{U}(\hat{C}_T;G)\mathds{1}_{\Omega\setminus\bar\Omega}\right].
    \end{equation}
    Hence if we choose $p_1=0$ this implies that $\mathbb{E}\left[\tilde{U}(\hat{C}_T;G)\right]\ge\mathbb{E}\left[\tilde{U}({C}_T^{\star};G)\right]$. 
    Next we also exclude that ${C}^{\star}_T=-G$. Indeed, if one implements the trivial strategy, i.e. that corresponding to the multiplier $m=0$, gets the terminal cushion ${C}_T^0={c}_0\cdot e^{rT}$, which is non-negative since ${c}_0\geq 0$ by construction. Comparing terminal cushions ${C}_T^{\star}=-G$ $\mathbb{P}-$a.s. and $ C_T^0$, one would get that $\mathbb{E}\left[\tilde{U}({C}_T^{\star})\right]<\mathbb{E}\left[\tilde U({C}_T^{0})\right]$, which contradicts the optimality of ${C}_T^{\star}$. 
    \section{Proof of some technical results of Section \ref{sec:martingale_approach}}\label{app:B}
    \subsection{Proof of Proposition \ref{propos:state_price_decomp_3}}\label{app:B_6}
    Since the process $H^{\bm{\theta}}$ is Markovian, for all $\theta\in\tilde{\Theta}$, we can define, for all $t \in [0,T]$, the process $D\left(t,H^{\bm{\theta}}_{t}\right):=\mathbb{E}\left[\left(H^{\bm{\theta}}_{T}\right)^{-\frac{1-\delta_1}{\delta_1}}|\mathcal{F}_{t}\right]$.
    This is a martingale under $\mathbb{P}$, therefore, it can be characterized as the solution of the following PIDE
    \begin{align}\label{eq:PIDE_3}
    \nonumber 0=&\frac{\partial D\left(t,h\right)}{\partial t}+\dfrac{\partial D(t,h)}{\partial h}h\left[-r+\int_E\left(1-\theta^J_t(y)\right)\nu(\de y)\right]+\frac{1}{2}\frac{\partial^2 D(t,h)}{\partial h^2}h^2\left(\theta^D_t\right)^2\\
    &+\int_E\left(D(t,h+y)-D(t,h)\right)\nu(\de y),
    \end{align}
    with boundary condition $D\left(T,h\right)=h^{-\frac{1-\delta_1}{\delta_1}}$.
    To solve the problem depicted in equation \eqref{eq:PIDE_3}, we make the ansatz
    $D\left(t,h\right)=h^{-\frac{1-\delta_1}{\delta_1}}\tilde{D}\left(t\right)$, for all $ t\in[0,T]$, with $\tilde{D}\left(t\right)=e^{d_1(t)}$ for all $t\in[0,T]$ and $d_1(T)=0$. Substituting this ansatz in equation \eqref{eq:PIDE_3}, we obtain the following ODE
    \begin{align*}
    0=\frac{\de d_1(t)}{\de t}+\dfrac{1-\delta_1}{\delta_1}r-\int_E\left[\dfrac{1-\delta_1}{\delta_1}\left(1-\theta^J_t(y)\right)-\left(\theta^J_t(y)\right)^{-\frac{1-\delta_1}{\delta_1}}+1\right]\nu(\de y)+\frac{1}{2}\dfrac{1-\delta_1}{\delta_1^2}\left(\theta^D_t\right)^2,
    \end{align*} 
    whose well behaved solution up to time $T$, is given in equation \eqref{eq:tilde_G_process_1}.
    A sufficient condition for $D$ to be a martingale is that $
    \mathbb{E}\left\lbrace\int_0^{T} \frac{\left(1-\delta_1\right)^2}{\delta_1^2}\left(\theta^D_s\right)^2 \de s+ \int_0^{T}\int_E\bigg|\left(\theta^J_s(y)\right)^{-\frac{1-\delta_1}{\delta_1}}-1\bigg|\nu(\de y) \de s\right\rbrace<+\infty$.
    Note that under the Novikov condition, the condition on  $\theta^D$ is satisfied. Hence the right integrability of $\theta^J$ provides an additional sufficient condition for having $\bm{\theta} \in \tilde \Theta$.
    \subsection{Proof of Proposition \ref{propos:martingale_rapp_theo_1}}\label{app:B_2}
    Thanks to the decomposition of the state price density $H^{\bm{\theta}}$ given in Proposition \ref{propos:state_price_decomp_3}, we can rewrite $\mathcal{X}_{\bm{\theta}}(y)$ as
    $\mathcal{X}_{\bm{\theta}}(y)=y^{-\frac{1}{\delta_1}}\mathbb{E}\left[\left(H^{\bm{\theta}}_{T}\right)^{-\frac{1-\delta_1}{\delta_1}}\right]=y^{-\frac{1}{\delta_1}}\cdot D\left(0,H^{\bm{\theta}}_{0}\right)$.
    We denote by $\hat{y}_{\bm{\theta}}$ the unique point such that $\mathcal{X}_{\bm{\theta}}(\hat{y}_{\bm{\theta}})={c}_0$ for given ${c}_{0}$. Then $\hat{y}_{\bm{\theta}}=\left(\frac{{c}_{0}}{D\left(0,H^{\bm{\theta}}_{0}\right)}\right)^{-\delta_1}$, and 
    \begin{equation}\label{eq:rv_Y_1}
    Y_{\bm{\theta}}=\left({c}_{0}^{-\delta_1}H^{\bm{\theta}}_{T}\right)^{-\frac{1}{\delta_1}}=\frac{{c}_{0}}{D\left(0,H^{\bm{\theta}}_{0}\right)}\cdot\left(H^{\bm{\theta}}_{T}\right)^{-\frac{1}{\delta_1}}.
    \end{equation}
    Substituting equation \eqref{eq:rv_Y_1} into equation \eqref{eq:process_m}, we obtain
    \begin{equation}\label{eq:exp_M_1}
    M^{\bm{\theta}}_{t}=\frac{{c}_{0}}{D\left(0,H^{\bm{\theta}}_{0}\right)}\cdot\mathbb{E}\left[\left(H^{\bm{\theta}}_{T}\right)^{-\frac{1-\delta_1}{\delta_1}}\bigg|\mathcal{F}_{t}\right]=\frac{{c}_{0}D\left(t,H^{\bm{\theta}}_{t}\right)}{D\left(0,H^{\bm{\theta}}_{0}\right)}.
    \end{equation}
    
    By applying It\^{o}'s formula on equation \eqref{eq:exp_M_1}, we obtain the dynamics of $ M^{\bm{\theta}}_{t}$ which reads as follows
    \begin{equation}\label{eq:M_dynamics_1}
    \frac{\de M^{\bm{\theta}}_{t}}{M^{\bm{\theta}}_{t-}}=-\dfrac{1-\delta_1}{\delta_1}\theta^{D}_t\de W_t+\int_E\left[\left(\theta^J_t(y)\right)^{-\frac{1-\delta_1}{\delta_1}}-1\right]\left(N(\de t,\de y)-\nu(\de y)\de t\right).
    \end{equation}
    By comparing equation \eqref{eq:M_dynamics_1} with equation \eqref{eq:M_mrt_1} we obtain the result. 
    \subsection{Proof of Theorem \ref{teo:thm_1}}\label{app:B_3}
    To prove that ${C}_{t}={C}^{\hat{\bm{\theta}}}_{t}$ $\mathbb{P}-$a.s, we have to show that both processes evolves according to the same stochastic differential equation. Let $\hat{\bm{\theta}}\in\tilde{\Theta}$ and $\hat{m}^{\hat{\bm{\theta}}}\in\mathcal{M}$ satisfy the system of equations \eqref{eq:system_sol_1}. It follows from equation \eqref{eq:cushion_theta} that the dynamics of ${C}^{\hat{\bm{\theta}}}_{t}$ satisfies
    $\displaystyle \de{C}^{\hat{\bm{\theta}}}_{t}=\de\left(\frac{M^{\hat{\bm{\theta}}}_{t}}{H^{\hat{\bm{\theta}}}_{t}}\right)$.
    Applying It\^{o}'s formula for jump-diffusion processes, and using equation \eqref{eq:M_mrt_1}, we have that 
    \begin{multline}
    \de{C}^{\hat{\bm{\theta}}}_{t}={C}^{\hat{\bm{\theta}}}_{t}r\de t+{C}^{\hat{\bm{\theta}}}_{t}\theta^D_t\frac{\theta^D_t}{\delta_1}\de t-{C}^{\hat{\bm{\theta}}}_{t}\frac{\theta^D_t}{\delta_1}\de W_t\\+{C}^{\hat{\bm{\theta}}}_{t-}\int_E\left[\left(\theta^J_t(y)\right)^{-\frac{1}{\delta_1}}-1\right]\left(N\left(\de t,\de y\right)-\theta^J_t\left(y\right)\nu(\de y)\de t\right). \qquad{} \label{eq:C_hat_1}
    \end{multline}
    Substituting equation \eqref{eq:system_sol_1} into equation \eqref{eq:C_hat_1} and, recalling that $\hat{\bm{\theta}}\in\tilde{\Theta}$ satisfies equation \eqref{eq:No_Arb_Cond1}, we obtain
    \begin{align*}
    \de{C}^{\hat{\bm{\theta}}}_{t}=&{C}^{\hat{\bm{\theta}}}_{t}\left[r+\hat m^{\hat{\bm{\theta}}}_t\left(\mu-r)\right)\right]\de t+{C}^{\hat{\bm{\theta}}}_{t}\hat m^{\hat{\bm{\theta}}}_t\sigma\de W_t+{C}^{\hat{\bm{\theta}}}_{t-}\hat m^{\hat{\bm{\theta}}}_{t-}\int_E\gamma(t,y)N\left(\de t,\de y\right),
    \end{align*}
    which coincides with the SDE in equation \eqref{eq:cushion_process} satisfied by the cushion process that follows the strategy $\hat m^{\hat{\bm{\theta}}}$. Moreover, since ${C}_0={C}^{\hat{\bm{\theta}}}_{0}={c}_{0}$, we conclude that ${C}_{t}={C}^{\hat{\bm{\theta}}}_{t}={c}_{0}$, $\mathbb{P}$-a.s., for all $t\in[0,T]$. Finally, we need to show that $\hat m^{\hat{\bm{\theta}}}$ is a solution to \eqref{eq:value_2}. Thanks to Lemma \ref{lem:Lemma_1} and equation \eqref{eq:inf_problem}, we only need to show that
    $\mathbb{E}\left[\frac{{C}_{T}^{1-\delta_1}}{1-\delta_1}\right]=\mathbb{E}\left[\frac{Y_{\hat{\bm{\theta}}}^{1-\delta_1}}{1-\delta_1}\right]$,
    for $m=\hat m^{\hat{\bm{\theta}}}$, which is true if ${C}_{t}=Y_{\hat{\bm{\theta}}}$ $\mathbb{P}$-a.s. The latter equality comes from the facts that ${C}_{t}={C}^{\hat{\bm{\theta}}}_{t}$, for all $t\in[0,T]$, $\mathbb{P}$-a.s., and that by construction of ${C}^{\hat{\bm{\theta}}}$.
    \subsection{Proof of Proposition \ref{prop:existence_uniqueness}}\label{app:B_7}
    Let $\left(\tilde{\phi}_{1},\tilde{\phi}_{2}\right)$ be the range for the multiplier corresponding to each of the three cases (i)--(iii). Given  the optimality condition, we define the function
    $g\left(m\right):=\mu-r-\delta_1\sigma^2 m+\int_{\mathbb{R}\setminus\left\lbrace0\right\rbrace}\frac{\gamma(y)}{\left(1+\gamma(y)m\right)^{\delta_1}}\nu\left(\de y\right)$. Since $g'\left(m\right)<0$, then there exists a unique point $\hat{m}^{\hat{\bm{\theta}}}\in\left(\tilde{\phi}_{1},\tilde{\phi}_{2}\right)$ such that $g\left(\hat{m}^{\hat{\bm{\theta}}}\right)=0$ only if $g\left(\tilde{\phi}_{2}\right)\geq 0$, and $g\left(\tilde{\phi}_{1}\right)\leq 0$.  However, the latter relations are exactly the ones stated into the proposition,  and this concludes the proof.  
    \section{Proof of some technical results of Section \ref{sect:numeric}}\label{app:C}
    \subsection{Proof of Corollary \ref{cor:constant_jump_size}}\label{app:C_1}
    To prove the result we note that $\phi_1=\tilde{\gamma}\in(-1,0)$, and $\phi_2=0$, so the optimal multiplier $\hat m^{\hat{\bm{\theta}}}$ must take value in $\left(-\infty,-1/\tilde{\gamma}\right)$. We easily see that, $g'(m)=-\delta_1\sigma^2-\delta_1\tilde{\gamma}^2\left(1+\tilde{\gamma}m\right)^{-\delta_1}<0$,  and that
    \begin{equation*}
    \displaystyle\lim_{m\to-\infty}\mu-r-\delta_1\sigma^2m+\lambda\tilde{\gamma}\left(1+\tilde{\gamma}m\right)^{-\delta_1}=+\infty,\qquad\displaystyle\lim_{m\to-\tilde{\gamma}^{-1}}\mu-r-\delta_1\sigma^2m+\lambda\tilde{\gamma}\left(1+\tilde{\gamma}m\right)^{-\delta_1}=-\infty,
    \end{equation*}
    which implies that $\hat m^{\hat{\bm{\theta}}}$ exists and that it is the unique solution of equation \eqref{eq:optimal_mult_constant_jump_size}.
    \subsection{Proof of Corollary \ref{cor:KOU_MODEL}}\label{app:C_2}
    Since for Kou's model the jump sizes are such that $\phi_1=-1$ and $\phi_2=+\infty$, the optimal multiplier $\hat m^{\hat{\bm{\theta}}}$ if exists, lies in $[0,1]$. Moreover, since also in this case $g'(m)<0$ for all $m\in[0,1]$, the condition for existence and uniqueness stated in Proposition \ref{prop:existence_uniqueness} become $g(0)>0$, and $g(1)<0$. Hence, we have to compute $g(0)$ and $g(1)$, and find the conditions on Kou's model parameters such that the latter inequalities are satisfied. The expressions for $g(0)$ and $g(1)$ are given by
    \begin{align*}
    g(0)&=\mu-r+\lambda\int_{\mathbb{R}\setminus\left\lbrace 0\right\rbrace}\left(e^{y}-1\right)\left(p\eta_{+}e^{-\eta_{+}y}\mathds{1}_{\left\lbrace y\geq 0\right\rbrace}+\left(1-p\right)\eta_{-}e^{\eta_{-}y}\mathds{1}_{\left\lbrace y<0\right\rbrace}\right)\de y\\
    &=\mu-r-\frac{p\lambda}{1-\eta_{+}}-\frac{\left(1-p\right)\lambda}{1+\eta_{-}},
    \end{align*}
    \begin{align*}
    g(1)&=\mu-r-\delta_1\sigma^2+\lambda\int_{\mathbb{R}\setminus\left\lbrace 0\right\rbrace}\left(e^{y}-1\right)e^{-\delta_1y}\left(p\eta_1e^{-\eta_1y}\mathds{1}_{\left\lbrace y\geq 0\right\rbrace}+\left(1-p\right)\eta_2e^{\eta_2y}\mathds{1}_{\left\lbrace y<0\right\rbrace}\right)\de y\\
    &=\mu-r-\delta_1\sigma^2-\frac{\lambda p\eta_{+}}{\left(1-\delta_1-\eta_{+}\right)\left(\delta_1+\eta_{+}\right)}+\frac{\lambda(1-p)\eta_{-}}{\left(1-\delta_1+\eta_{-}\right)\left(\delta_1-\eta_{-}\right)},
    \end{align*}
    respectively. Thus to guarantee that $g(0)>0$, and $g(1)<1$, the conditions stated in equation \eqref{eq:g_0_Kou} and equation \eqref{eq:d_1_Kou} need to be satisfied. This ensures the existence and the uniqueness for an optimal multiplier $\hat m^{\hat{\bm{\theta}}}\in[0,1]$, and concludes the proof.

    \end{document}